\newtheorem{theorem}{Theorem}[section]
\newtheorem{corollary}[theorem]{Corollary}
\newtheorem{definition}[theorem]{Definition}
\newtheorem{lemma}[theorem]{Lemma}
\newtheorem{claim}[theorem]{Claim}
\newtheorem{remark}[theorem]{Remark}
\newtheorem{construction}{Construction}
\newtheorem{example}{Example}
\newcommand\nc\newcommand
\nc{\cA}{\mathcal{A}}\nc{\cB}{\mathcal{B}}\nc{\cC}{\mathcal{C}}\nc{\cD}{\mathcal{D}}
\nc{\cE}{\mathcal{E}}\nc{\cF}{\mathcal{F}}\nc{\cG}{\mathcal{G}}\nc{\cH}{\mathcal{H}}
\nc{\cI}{\mathcal{I}}\nc{\cJ}{\mathcal{J}}\nc{\cK}{\mathcal{K}}\nc{\cL}{\mathcal{L}}
\nc{\cM}{\mathcal{M}}\nc{\cN}{\mathcal{N}}\nc{\cO}{\mathcal{O}}\nc{\cP}{\mathcal{P}}
\nc{\cQ}{\mathcal{Q}}\nc{\cR}{\mathcal{R}}\nc{\cS}{\mathcal{S}}\nc{\cT}{\mathcal{T}}
\nc{\cU}{\mathcal{U}}\nc{\cV}{\mathcal{V}}\nc{\cW}{\mathcal{W}}\nc{\cX}{\mathcal{X}}
\nc{\cY}{\mathcal{Y}}\nc{\cZ}{\mathcal{Z}}
\nc{\bba}{\mathbf{a}}\nc{\bbb}{\mathbf{b}}\nc{\bbc}{\mathbf{c}}\nc{\bbd}{\mathbf{d}}
\nc{\bbe}{\mathbf{e}}\nc{\bbf}{\mathbf{f}}\nc{\bbg}{\mathbf{g}}\nc{\bbh}{\mathbf{h}}
\nc{\bbi}{\mathbf{i}}\nc{\bbj}{\mathbf{j}}\nc{\bbk}{\mathbf{k}}\nc{\bbl}{\mathbf{l}}
\nc{\bbm}{\mathbf{m}}\nc{\bbn}{\mathbf{n}}\nc{\bbo}{\mathbf{o}}\nc{\bbp}{\mathbf{p}}
\nc{\bbq}{\mathbf{q}}\nc{\bbr}{\mathbf{r}}\nc{\bbs}{\mathbf{s}}\nc{\bbt}{\mathbf{t}}
\nc{\bbu}{\mathbf{u}}\nc{\bbv}{\mathbf{v}}\nc{\bbw}{\mathbf{w}}\nc{\bbx}{\mathbf{x}}
\nc{\bby}{\mathbf{y}}\nc{\bbz}{\mathbf{z}}
\nc{\bbA}{\mathbf{A}}\nc{\bbB}{\mathbf{B}}\nc{\bbC}{\mathbf{C}}\nc{\bbD}{\mathbf{D}}
\nc{\bbE}{\mathbf{E}}\nc{\bbF}{\mathbf{F}}\nc{\bbG}{\mathbf{G}}\nc{\bbH}{\mathbf{H}}
\nc{\bbI}{\mathbf{I}}\nc{\bbJ}{\mathbf{J}}\nc{\bbK}{\mathbf{K}}\nc{\bbL}{\mathbf{L}}
\nc{\bbM}{\mathbf{M}}\nc{\bbN}{\mathbf{N}}\nc{\bbO}{\mathbf{O}}\nc{\bbP}{\mathbf{P}}
\nc{\bbQ}{\mathbf{Q}}\nc{\bbR}{\mathbf{R}}\nc{\bbS}{\mathbf{S}}\nc{\bbT}{\mathbf{T}}
\nc{\bbU}{\mathbf{U}}\nc{\bbV}{\mathbf{V}}\nc{\bbW}{\mathbf{W}}\nc{\bfX}{\mathbf{X}}
\nc{\bbY}{\mathbf{Y}}\nc{\bbZ}{\mathbf{Z}}
\nc{\sA}{\mathsf{A}}\nc{\sB}{\mathsf{B}}\nc{\sC}{\mathsf{C}}\nc{\sD}{\mathsf{D}}
\nc{\sE}{\mathsf{E}}\nc{\sF}{\mathsf{F}}\nc{\sG}{\mathsf{G}}\nc{\sH}{\mathsf{H}}
\nc{\sI}{\mathsf{I}}\nc{\sJ}{\mathsf{J}}\nc{\sK}{\mathsf{K}}\nc{\sL}{\mathsf{L}}
\nc{\sM}{\mathsf{M}}\nc{\sN}{\mathsf{N}}\nc{\sO}{\mathsf{O}}\nc{\sP}{\mathsf{P}}
\nc{\sQ}{\mathsf{Q}}\nc{\sR}{\mathsf{R}}\nc{\sS}{\mathsf{S}}\nc{\sT}{\mathsf{T}}
\nc{\sU}{\mathsf{U}}\nc{\sV}{\mathsf{V}}\nc{\sW}{\mathsf{W}}\nc{\sX}{\mathsf{X}}
\nc{\sY}{\mathsf{Y}}\nc{\sZ}{\mathsf{Z}}
\newcommand{\abs}[1]{\left|#1\right|}
\newcommand{\parenv}[1]{\left( #1 \right)}
\newcommand{\sparenv}[1]{\left[ #1 \right]}
\nc{\set}[1]{\llbracket #1 \rrbracket}
\newcommand{\bal}[1]{\begin{align}\label{#1}}
\newcommand{\eal}{\end{align}}
\renewcommand{\leq}{\leqslant}
\renewcommand{\geq}{\geqslant}
\renewcommand{\Bbb}{\mathbb}
\renewcommand{\Bbb}{\mathbb}
\newcommand{\Fq}{{{\Bbb F}}_{\!q}}
\newcommand{\E}{{\Bbb E}}
\newcommand{\bcom}[1]{{\color{blue} #1 }}
\newcommand{\ms}[1]{\{\{ #1 \}\}}
\title{Batch Array Codes}
\author{Xiangliang~Kong, Chen~Wang, and Yiwei~Zhang%
\thanks{Research supported in part by the National Key Research and Development Program of China under Grant 2022YFA1004900 and Grant 2021YFA1001000, in part by the National Natural Science Foundation of China under Grant 12231014, in part by the Shandong Provincial Natural Science Foundation under Grant No. ZR2021YQ46, in part by the Taishan Scholar Program of Shandong Province, and in part by the European Research Council (ERC) under Grant 852953.}
\thanks{X. Kong (Corresponding author: rongxlkong@gmail.com) is with the Department of Electrical Engineering-Systems, Tel Aviv University, Tel Aviv-Yafo 6997801, Israel. C. Wang and Y. Zhang are with the Key Laboratory of Cryptologic Technology and Information Security, Ministry of Education, and also with the School of Cyber Science and Technology, Shandong University, Qingdao, Shandong 266237, China (e-mails: cwang2021@mail.sdu.edu.cn, ywzhang@sdu.edu.cn).}
}
\begin{document}

\maketitle

\begin{abstract}
    Batch codes are a type of codes specifically designed for coded distributed storage systems and private information retrieval protocols. These codes have got much attention in recent years due to their ability to enable efficient and secure storage in distributed systems.

    In this paper, we study an array code version of the batch codes, which is called the \emph{batch array code} (BAC). Under the setting of BAC, each node stores a bucket containing multiple code symbols and responds with a locally computed linear combination of the symbols in its bucket during the recovery of a requested symbol. We demonstrate that BACs can support the same type of requests as the original batch codes but with reduced redundancy. Specifically, we establish information theoretic lower bounds on the code lengths and provide several code constructions that confirm the tightness of the lower bounds for certain parameter regimes.
\end{abstract}

\section{Introduction}

As an important class of codes designed for load balancing in distributed storage systems and applicable in cryptographic protocols such as private information retrieval (PIR), multi-set batch codes were first introduced by Ishai et al. \cite{IKOS04}.
In their original setup, an \emph{$(n,N,k,m,t)$ multi-set batch code} (\emph{batch code} for short) encodes $n$ information symbols, $(x_1,\ldots, x_n)\in \Sigma^n$, into $N$ code symbols.
These code symbols are divided as $m$ buckets $(\bbc_1,\ldots,\bbc_m)\in (\Sigma^{\geq1})^m$ and distributed across $m$ separate storage nodes, such that for any multi-set of $k$ indices $\ms{i_1,\ldots,i_k}$, the buckets can be divided into $k$ disjoint recovery sets and $x_{i_j}$ can be retrieved by probing at most $t$ code symbols from every bucket in the $j$-th recovery set. The multi-set of indices can be seen as a $k$-batch request from a group of $k$ users, one request from each user.
Each user has access to only one specific recovery set and each recovery set only serves for one particular user.
The disjointness of the recovery sets and the role of the parameter $t$ together aim for balancing the access loads among the servers.
In their initial work, Ishai et al. \cite{IKOS04} used unbalanced expanders, subcube codes, smooth codes, and Reed-Muller codes to obtain batch codes for a constant rate $\rho\triangleq\frac{n}{N}<1$. Then, using graph theory, Rawat et al. \cite{RSDG16} constructed batch codes that achieve asymptotically optimal rates of $1-o_t(1)$.

As the definition suggests, batch codes can handle multiple requests from different users simultaneously, a property referred to as \emph{availability}.
The availability property is critical for achieving high throughput in distributed storage systems and has been extensively studied for other storage codes, such as locally repairable codes (LRCs), as seen in \cite{TB14,WZ14,HYUS15,RPDV16,TBF16,CMST20}. Moreover, as is shown in \cite{IKOS04,ACLS18}, batch codes can also be used to amortize the computational costs of private information retrieval (PIR) and other related cryptographic protocols. Recently, in the study of service rate region of codes in distributed systems, Akta\c{s} et al. \cite{AJKKS21} also proved that batch codes can be used to maximize the integral service rate region.

Over the years, there have been many works exploring different variants of batch codes.
One such variant, known as \emph{combinatorial batch codes}, arises when the symbols stored in the buckets are simply copies of the information symbols.
These codes have been extensively studied, for examples, see \cite{PSW09,BRR12,SG16,ST20,CKZ21}.
A special class of batch codes with $t=n$, called \emph{switch codes}, has been explored in \cite{WSCB13,WKC15,CGHZ15,BCSY18} to facilitate data routing in network switches.
Despite the two types above, most works on batch codes focus on the \emph{primitive batch codes},
where primitive means that each server only stores one code symbol. Using the original notations from Ishai et al.,
a primitive batch code is indeed an $(n,N,k,N,1)$ batch code.
Primitive batch codes are of particular concerns, especially their trade-offs between the code length $N$ and the number of parallel requests $k$. Recently, many novel constructions of primitive batch codes with asymptotically rate $1$ are obtained through different methods, for examples, see \cite{VY16,AY19,PPV20,LW21,HPPVY21,KE23}. Meanwhile, by analyzing the dimension of a primitive batch code's dual code and its order-$O(k)$ tensor, \cite{RVW22} and \cite{LW21batch} provided lower bounds on the code length which are optimal for constant $k$. We refer the interested readers to Table II - Table V in \cite{HPPVY21}, which provide the best known trade-off between $N$ and $k$ for primitive batch codes.

Some further variants of primitive batch codes are as follows, depending on the restrictions on the batch of requests.
The \emph{private information retrieval (PIR) codes} introduced in \cite{FVY15} (also known as codes with $k$-DRGP, see \cite{LW21,RVW22}) only requires that every information symbol has $k$ mutually disjoint recovery sets,
which can be satisfied by taking $i_1=i_2=\dots=i_k$ as the $k$ parallel requests under the batch code setting. Bounds and constructions for PIR codes have been explored in several papers, including \cite{FVY15,LR17,VRK17,Skachek18,AY19,KY21,HPPVY21,RST22}. By allowing each request to be an arbitrary linear combination of the information symbols, \emph{functional PIR codes} and \emph{functional batch codes} were introduced
in two recent works \cite{ZEY20} and \cite{YY21}, where the former only deals with $k$ identical requests and the latter deals with an arbitrary multi-set of $k$ requests, in the same way as PIR codes and batch codes.

Following the primitive case, the research returns to the array codes versions, and to be more specific, uniform array versions, for PIR codes \cite{FVY15} and functional PIR/batch codes \cite{NY22}. In these array codes versions, each node stores a bucket containing multiple code symbols. Whenever the node is in a recovery set for a particular request, it probes all or at most a certain amount of stored symbols from its bucket, but responds with \emph{only one} linear combination of these symbols.
The constraint on the size of the response is for the the sake of reducing the communication bandwidth. One may argue that the original definition of $(n,N,k,m,1)$-batch codes by Ishai et al. \cite{IKOS04} is indeed also an array code version, or to be more specific, a non-uniform array version since the buckets are allowed to be of distinct sizes. However, there is a fundamental difference in the regime of how each node responds. The setting of the original batch codes in \cite{IKOS04} requires that each node only responds an exact copy of one code symbol in its bucket. Then each user collects all the responded symbols from the nodes in a particular recovery set, and does the necessary computation at the user's end to derive his request. This model is reasonable when nodes are storage drives such as disks and flash memories. On the contrary, the PIR array codes \cite{FVY15} and the array versions of functional PIR codes and functional batch codes \cite{NY22} are more compatible to large scale distributed storage systems, where a storage node consists of not only storage drives but also CPUs, RAID controllers and other hardware which are capable of performing computations. In such a setting, the symbol responded by each node could be a linear combination of all or a certain amount of the symbols in its bucket, computed locally by the node itself. In this paper, we adhere to the latter setting and consider the array versions of batch codes, i.e., \emph{batch array codes} (BACs for short). The next example illustrates the difference between the original batch codes of Ishai et al. \cite{IKOS04} and the BACs considered in this paper.

\begin{example}\label{ex1.1}
Let $n=4, k=4$ and $m=5$. The following construction gives a $(4, 14, 4, 5, 1)$-batch code $\cC_1$:
    \begin{table}[h]
        \centering
        \begin{tabular}{|c|c|c|c|c|}
        \hline
        $\bbc_1$&$ \bbc_2$&$\bbc_3$&$\bbc_4$&$\bbc_5$\\
        \hline
        $x_1 $& $x_1$ & $x_1$ & $x_2$ & $x_1+x_4$ \\\hline
        $x_2 $& $x_2$ & $x_3$ & $x_3$ & $x_2+x_3$ \\\hline
        $x_3 $& $x_4$ & $x_4$ & $x_4$ &  \\
        \hline
        \end{tabular}
    \end{table}\\

As shown by Ishai et al. in \cite{IKOS04}, for $(n, N, k, k+1,1)$-batch codes, we have $N\geq (k-\frac{1}{2})n$. Hence, $\cC_1$ is optimal in terms of the total code length.

Now, consider a code $\cC_2$ given by the following construction:
    \begin{table}[h]
        \centering
        \begin{tabular}{|c|c|c|c|c|}
        \hline
        $\bbc_1$&$ \bbc_2$&$\bbc_3$&$\bbc_4$&$\bbc_5$\\
        \hline
        $x_1 $& $x_1$ & $x_1$ & $x_2$ & $x_1+x_2+x_3+x_4$ \\\hline
        $x_2 $& $x_2$ & $x_3$ & $x_3$ &  \\\hline
        $x_3 $& $x_4$ & $x_4$ & $x_4$ &  \\
        \hline
        \end{tabular}
    \end{table}\\

As is shown later in Section \ref{subsec: constructions1}, $\cC_2$ is an $(4, 13, 4, 5)$ non-uniform BAC. Clearly, its total code length is one less than that of $\cC_1$. To see the difference of these two codes, consider the recovery process for the requests $\ms{x_1,x_1,x_1,x_1}$. For $\cC_1$, these requests can be supported by reading $x_1$ stored in $\bbc_1,\bbc_2,\bbc_3$, and retrieving $x_1$ by reading $x_4$ from $\bbc_4$ and $x_1+x_4$ from $\bbc_5$. During this process, each node only sends back one of its stored symbols and no computations are performed on the nodes. For $\cC_2$,  these requests can be supported by reading $x_1$ from $\bbc_1,\bbc_2,\bbc_3$, and retrieving $x_1$ by reading $x_2+x_3+x_4$ from $\bbc_4$ and $x_1+x_2+x_3+x_4$ from $\bbc_5$. During this process, the $4$th node first performs local computations over its stored data and then sends back $x_2+x_3+x_4$. The ability to access more symbols and do local computations leads to a smaller code length of $\cC_2$ compared to $\cC_1$.
\end{example}

Despite the analysis of the array codes version for PIR codes \cite{FVY15} and functional PIR/batch codes \cite{NY22}, the array version for batch codes seems to be skipped in relevant literatures, with only a few known results (see \cite{IKOS04,RSDG16,NY22}). The motivation of this paper is thus to fill the gap by analyzing BACs.
To be more specific, we analyze BACs in which each storage node has no access limit (thus the role of $t$ is no longer necessary) and can perform local computations over the data it stored. We demonstrate that, from both the perspectives of information theoretic bounds and explicit code constructions, this setting enables BACs to support the same type of requests as the original batch codes but with smaller code lengths. This implies that BACs have smaller storage overhead compared to batch codes for the same scenarios. We will consider both uniform and non-uniform arrays. Specifically, we have the following results.

\begin{itemize}
    \item Firstly, using information-theoretic arguments, we establish several lower bounds on the code length of $(n,N,k,m)$-BACs, including a lower bound for general parameter regime and two improved lower bounds for the cases $k<m<2k$ and $m=k+2$. All these bounds are shown to be tight for certain parameter regimes by explicit constructions.
    \item Secondly, we provide three constructions of BACs. The first two constructions are explicit and are from the combinatorial perspective, and the third one is a random construction. Specially, the code obtained from the first construction is an optimal $(n,N,k,k+1)$-BAC for any $k|n$ and the code obtained from the second construction is an optimal $(n,N,k,k+2)$-BAC for the case when $k=3$ and $n=5$. Furthermore, through the random construction, we demonstrate the existence of binary $(n,n(1+o(1)),k,m)$-BACs for $m=\Theta(n^{\frac{1}{2}})$ and $k=O(n^{\frac{1}{6}})$.
\end{itemize}

The rest of the paper is organized as follows. In Section \ref{sec: notations}, we list the notations we shall use throughout the paper and present several preliminary results. In Section \ref{sec: lower bounds}, we prove several lower bounds on the length of $(n,N,k,m)$-BACs. In Section \ref{sec: constructions}, three constructions of $(n,N,k,m)$-BACs are presented. Finally, in Section \ref{sec: conclusion}, we conclude the paper by highlighting some unresolved issues.

\section{Notations and preliminaries}\label{sec: notations}

For integers $1\leq m\leq n$, let $[m,n]=\{m,m+1,\dots,n\}$, $[n]=[1,n]$ and $[0]=\varnothing$. A \emph{multiset} is a set in which elements may repeat. We use the notation $I=\ms{i_1,i_2,\ldots,i_k}$, with $i_j\in [n]$ for all $j\in [k]$, to denote a multiset over $[n]$. Let $q$ be a prime power and let $\Fq$ be the finite field of size $q$. We denote $\Fq^{n}$ as the $n$-dimensional vector space over $\Fq$ and denote $\Fq^{\geq 1}$ denote the set of all vectors over $\Fq$ of finite length. For $\bbv\in\Fq^{\geq 1}$, we use $|\bbv|$ to denote the length of $\bbv$. For a finite set $A$, denote $\binom{A}{a} = \{B\subseteq A:|B|=a\}$ as the family of subsets of $A$ with size $a$ and $\binom{A}{\leq a} = \{B\subseteq A:|B|\leq a\}$ as the family of subsets of $A$ with size at most $a$. For vectors $\bbx=(x_1,x_2,\ldots,x_m)$ and $\bby = (y_1,y_2,\ldots,y_n)$, we say $\bbx$ is a \emph{sub-vector} of $\bby$ if there are indices $1\leq i_1< \cdots < i_m \leq n$ such that $\bbx=(y_{i_1},\ldots,y_{i_m})$.

Let $\mathcal{C}$ be a code of length $n$ over $\Fq$. For a codeword $\bbx=(x_1,x_2,\ldots,x_n)\in \cC$ and a subset $R\subseteq [n]$, let $\bbx|_{R}$ be the vector obtained by projecting the coordinates of $\bbx$ onto $R$, and define $\mathcal{C}|_{R}=\{\mathbf{x}|_{R}: \mathbf{x}\in \mathcal{C}\}$. Moreover, for convenience we will use the notation $x_i\in \bbx|_R$ to indicate that $i\in R$.

Now we introduce the formal definitions of batch array codes and PIR array codes.

\begin{definition}\label{def_BAC}[Batch Array Codes (BAC)]
 An $(n,N,k,m)$-BAC over $\mathbb{F}_q$ is defined by a linear encoding function $\cC$: $\mathbb{F}_q^n\rightarrow (\mathbb{F}_q^{\geq 1})^m$ such that for any $\bbx=(x_1,x_2,\ldots,x_n)\in \mathbb{F}_q^{n}$ the following hold:
 \begin{itemize}
     \item $\cC$ encodes $\bbx$ as $\cC(\bbx)=(\bbc_1,\ldots,\bbc_m)$, where $\bbc_\ell\in \Fq^{N_\ell}$ is referred to as the $\ell$-th bucket, for each $\ell \in [m]$.
     The total length of all $m$ buckets is $\sum_{\ell=1}^{m}N_\ell=N$ (where the length of each bucket is independent of $\bbx$).
     \item For any multiset $I=\ms{i_1,i_2,\ldots,i_k}\subseteq [n]$ there is a partition of $[m]$ into $k$ subsets $R_1,R_2,\ldots,R_k$, and for each $j\in[k]$ there is a set of linear functions $\{f_{\ell}: \Fq^{N_\ell}\rightarrow \Fq\}_{\ell\in R_j}$, such that $x_{i_j}$ is an $\mathbb{F}_q$-linear combination of the symbols $\{f_{\ell}(\bbc_\ell):\ell\in R_j\}$. Additionally, we say $x_{i_j}$ can be recovered by $\{\bbc_\ell:\ell\in R_j\}$ and $R_{j}$ is called the recovery set of $x_{i_j}$.
 \end{itemize}

Furthermore, if all $N_\ell$'s are equal then $\cC$ is called a uniform BAC, which coincides with the definition of uniform batch array codes in \cite{NY22} (see Definition 1-b \cite{NY22} for details). Otherwise, it is called a non-uniform BAC. In the sequel by BACs we mean the non-uniform version unless otherwise stated.
\end{definition}

\begin{definition}\label{def_CPIR}[PIR array codes]
 An $(n,N,k,m)$-PIR array code over $\Fq$ is defined by a linear encoding function $\cC$: $\Fq^n\rightarrow (\Fq^{\geq 1})^m$ such that for any $\bbx=(x_1,x_2,\ldots,x_n)\in \Fq^{n}$ the following hold:
 \begin{itemize}
     \item $\cC$ encodes $\bbx$ as $\cC(\bbx)=(\bbc_1,\ldots,\bbc_m)$, where $\bbc_\ell\in \Fq^{N_\ell}$ is referred to as the $\ell$-th bucket, for each $\ell \in [m]$.
     The total length of all $m$ buckets is $\sum_{\ell=1}^{m}N_i=N$ (where the length of each bucket is independent of $\bbx$).
     \item For each $i\in [n]$ there is a partition of $[m]$ into $k$ subsets $R_1,R_2,\ldots,R_k$, and for each $j\in[k]$ there is a set of linear functions $\{f_{\ell}: \Fq^{N_\ell}\rightarrow \Fq\}_{\ell\in R_j}$, such that $x_i$ is an $\Fq$-linear combination of the symbols $\{f_{\ell}(\bbc_\ell):\ell\in R_j\}$. Additionally, we say $x_i$ can be recovered by $\{\bbc_\ell:\ell\in R_j\}$ and $R_1,R_2,\ldots,R_k$ are called the recovery sets of $x_i$.
 \end{itemize}

Furthermore, if all $N_\ell$'s are equal then $\cC$ is called a uniform PIR array code, which coincides with the definition of PIR array codes in \cite{FVY15} (see also \cite{Blackburn19,Zhang19,CKYZ19,NY22,Wang23}). Otherwise, it is called a non-uniform PIR array code. In the sequel by PIR array codes we mean the non-uniform version unless otherwise stated.
\end{definition}


\begin{definition}
    For positive integers $n, k$ and $m$ with $k\leq m$, we define
    \begin{align*}
        N_{P}(n, k, m) = \min\{N:\text{There is an $(n, N, k, m)$-PIR array code.}\}
    \end{align*}
    as the minimum code length of $(n, N, k, m)$-PIR array codes and
    \begin{align*}
        N_{B}(n, k, m) = \min\{N:\text{There is an $(n, N, k, m)$-BAC.}\}
    \end{align*}
    as the minimum code length of $(n, N, k, m)$-BACs. Clearly, we have $N_{B}(n, k, m)\geq N_{P}(n, k, m)$. For convenience, if $n$, $k$, and $m$ are clear from the context or not relevant, we omit them from the notations and write $N_p$ and $N_B$ instead.
\end{definition}

Similarly as the other related codes, given $n$, $k$, and $m$, the general problem in BACs and PIR array codes is to find the minimum code length $N_{B}$ or $N_{P}$, and thus maximize the code rate $\rho\triangleq\frac{n}{N_B}$ or $\rho\triangleq\frac{n}{N_P}$. Next, we show several properties of an $(n, N, k, m)$-PIR array code, which are crucial for the proof of our lower bounds for $N_{P}$ and $N_{B}$ in Section \ref{sec: lower bounds}.

\begin{lemma}\label{lem: size of $m-k+1$ buckets}
    Let $\cC$ be an $(n, N, k, m)$-PIR array code over $\Fq$. Then, for any $\bbx=(x_1,\ldots,x_n)\in \Fq^n$ and any subset $R\subseteq [m]$ of size $m-k+1$, $R$ contains a recovery set of $x_i$ for every $i\in [n]$.
\end{lemma}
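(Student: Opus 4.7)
The plan is to use the disjointness of the $k$ recovery sets guaranteed by Definition \ref{def_CPIR} together with a simple pigeonhole argument on the complement $[m]\setminus R$.

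First I would fix an arbitrary $i\in [n]$ and invoke Definition \ref{def_CPIR} to obtain a partition of $[m]$ into $k$ recovery sets $R_1,\dots,R_k$ for $x_i$. Because these sets are pairwise disjoint and cover $[m]$, any subset $S\subseteq [m]$ of size $s$ can intersect at most $s$ of them.

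Next I would set $S=[m]\setminus R$, so that $|S|=m-(m-k+1)=k-1$. Since $S$ meets at most $k-1$ of the $k$ recovery sets $R_1,\dots,R_k$, there exists some index $j\in[k]$ with $R_j\cap S=\varnothing$, i.e.\ $R_j\subseteq [m]\setminus S = R$. This $R_j$ is the desired recovery set of $x_i$ contained in $R$, and the argument applies uniformly to every $i\in[n]$.

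There is essentially no obstacle here: the statement is a direct pigeonhole consequence of the fact that a PIR array code supplies a \emph{partition} (not merely disjoint subsets) of $[m]$ into $k$ recovery sets per information symbol. The only subtlety worth flagging in the write-up is that the claim holds per symbol $x_i$, using possibly different partitions for different $i$, so the recovery set $R_j\subseteq R$ that one extracts depends on $i$; no single $R_j$ needs to work simultaneously for all coordinates.
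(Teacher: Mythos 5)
Your proof is correct and takes essentially the same approach as the paper: both arguments hinge on the observation that the complement $[m]\setminus R$ has only $k-1$ elements and hence, by disjointness of the $k$ recovery sets in the partition, cannot meet all $k$ of them. The only cosmetic difference is that you argue directly (one $R_j$ must avoid the complement) while the paper phrases it as a proof by contradiction.
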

\begin{proof}
    The statement directly follows from the definition of the PIR array code when $k=1$. For the case when $k\geq 2$, we assume that there exists some $i_0\in [n]$ such that $R$ does not contain any recovery set for $x_{i_0}$. Next, we proceed to prove the lemma by deriving a contradiction.

    W.l.o.g. we can assume that $R=[m-k+1]$ and $i_{0}=1$. By the definition of PIR array codes, there are $k$ recovery sets $R_1, R_2, \ldots, R_k$ for $x_1$ and $R_1,\ldots, R_k$ forms a partition of $[m]$. Since $R$ doesn't contain any recovery set of $x_{1}$, this implies that $R_j\cap [m-k+2,m]\neq \varnothing$ for every $j\in [k]$. However, there are only $k-1$ elements in $[m-k+2,m]$ and it's impossible to have $k$ mutually disjoint sets $R_1,\ldots, R_k$ such that $R_j\cap [m-k+2,m]\neq \varnothing$ for every $j\in [k]$. This contradicts the assumption at the beginning and hence proves the lemma.
\end{proof}

Let $\cC$ be an $(n,N,k,m)$-BAC or PIR array code over $\Fq$. Given a batch of request, there might be multiple choices of recovery sets and it is possible that certain nodes do not participate in any recovery sets. In this paper, whenever we consider a BAC or a PIR array code we assume that there is a corresponding \emph{fixed} way to choose the recovery sets for any batch of requests, and w.l.o.g. we assume that every node participates in a unique recovery set, even if it has no contributions. These settings lead to a rigourous characterization of the responding regime of an $(n,N,k,m)$-PIR array code as follows. The case of $(n,N,k,m)$-BACs can be characterized similarly.

For $\bbx\in \Fq^n$, let $\cC(\bbx)=(\bbc_1,\ldots,\bbc_m)$ be the codeword in $\cC$ encoded from $\bbx$. By the linearity of $\cC$, every codeword symbol in the bucket $\bbc_\ell$ ($\ell\in [m]$) is a linear combination of $x_1,\ldots,x_n$. Thus, for each $\ell\in [m]$, there is an $n\times N_\ell$ matrix $\bbG_\ell$ over $\Fq$ such that
$$\bbx\cdot \bbG_\ell=\bbc_\ell.$$
We call $\bbG_\ell$ the \emph{generator matrix} of the $\ell$-th bucket and let
$$\bbG=(\bbG_1~\bbG_2~\cdots~\bbG_m)$$
be the \emph{generator matrix} of $\cC$.

By Definition \ref{def_CPIR}, for each $\ell\in [m]$, $\bbc_\ell$ lies in a unique recovery set of $x_i$ for every $i\in [n]$. That is, for each $\ell\in [m]$, there is a set of linear functions $\{f_{\ell}^{(1)},f_{\ell}^{(2)},\ldots,f_{\ell}^{(n)}\}$ such that the symbol $f_{\ell}^{(i)}(\bbc_{\ell})$ is used in the recovery process of $x_i$. We call $f_{\ell}^{(i)}(\bbc_{\ell})$ the response from the $\ell$-th bucket in the recovery of $x_i$. By the linearity of the code, the $\ell$-th bucket can be associated with a column vector $\bbf_{\ell}^{(i)}\in \Fq^{N_\ell}$ such that $f_{\ell}^{(i)}(\bbc_{\ell})=\bbc_{\ell}\cdot \bbf_{\ell}^{(i)}$. Define $\bbF_\ell$ as the $N_\ell\times n$ matrix $(\bbf_{\ell}^{(1)}~\bbf_{\ell}^{(2)}~\cdots~\bbf_{\ell}^{(n)})$. The matrix $\bbF_\ell$ captures all the responses from the $\ell$-th bucket for the recovery of every $x_i$, thus we call it the \emph{response matrix} of the $\ell$-th bucket.

\begin{lemma}\label{lem: property of response matrix}
    Let $\cC$ be an $(n,N,k,m)$-PIR array code with $N=N_{p}(n,k,m)$. Then, for each $\ell\in [m]$, we have $N_\ell\leq n$ and $\mathrm{rank}(\bbF_\ell)=N_\ell$. Moreover, let $\bbf_{\ell}^{(i_1)},\bbf_{\ell}^{(i_2)},\ldots,\bbf_{\ell}^{(i_{N_\ell})}$ be $N_\ell$ linearly independent columns in $\bbF_\ell$. Then, through basis transformation, we can assume that $(\bbf_{\ell}^{(i_1)}~\bbf_{\ell}^{(i_2)}~\cdots~\bbf_{\ell}^{(i_{N_\ell})})=\bbI$.
\end{lemma}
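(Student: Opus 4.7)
The plan is to prove the three assertions in sequence, with the rank condition $\mathrm{rank}(\bbF_\ell)=N_\ell$ doing almost all of the work, since $N_\ell\le n$ will then follow automatically (because $\bbF_\ell$ has $n$ columns), and the ``standardization'' of the $N_\ell$ independent columns to the identity is just a change of basis on the bucket. Throughout I will argue by contradiction against the optimality $N=N_P(n,k,m)$: whenever the $\ell$-th bucket contains any redundancy relative to the responses it needs to produce, I will exhibit a shorter $(n,N',k,m)$-PIR array code with $N'<N$.

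\textbf{Step 1: rank of the response matrix.} Fix $\ell\in[m]$ and assume for contradiction that $r\triangleq \mathrm{rank}(\bbF_\ell)<N_\ell$. Then all column vectors $\bbf_\ell^{(1)},\ldots,\bbf_\ell^{(n)}$ lie in an $r$-dimensional subspace $W\subseteq \Fq^{N_\ell}$. Choose an $N_\ell\times r$ matrix $\bbW$ whose columns form a basis of $W$, and write $\bbF_\ell=\bbW\bbM$ for a unique $r\times n$ matrix $\bbM$. I then define a new code $\cC'$ by replacing the $\ell$-th bucket with the shorter bucket
\[
\bbc_\ell' \;=\; \bbc_\ell\cdot \bbW \;=\; \bbx\cdot (\bbG_\ell\bbW)\;\in\;\Fq^{r},
\]
i.e.\ with generator matrix $\bbG_\ell\bbW$, and define new response vectors $\bbf_\ell'^{(i)}$ to be the columns of $\bbM$, keeping all other buckets and all recovery sets of $\cC$ unchanged. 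The key computation is that for every $i\in[n]$,
\[
\bbc_\ell'\cdot \bbf_\ell'^{(i)} \;=\; \bbc_\ell\cdot \bbW\cdot \bbM_i \;=\; \bbc_\ell\cdot \bbf_\ell^{(i)},
\]
so the response from the $\ell$-th bucket in the recovery of each $x_i$ is identical in $\cC$ and $\cC'$. Since the recovery sets and the $\Fq$-linear combinations used to recover each $x_i$ in $\cC$ involved only these response values, the same combinations succeed in $\cC'$, making $\cC'$ an $(n,N-(N_\ell-r),k,m)$-PIR array code. This contradicts $N=N_P(n,k,m)$, so $\mathrm{rank}(\bbF_\ell)=N_\ell$.

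\textbf{Step 2: the bound $N_\ell\le n$.} This is now immediate: $\bbF_\ell$ is an $N_\ell\times n$ matrix of rank $N_\ell$, so $N_\ell\le n$.

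\textbf{Step 3: reduction to the identity.} Given $N_\ell$ linearly independent columns $\bbf_\ell^{(i_1)},\ldots,\bbf_\ell^{(i_{N_\ell})}$ of $\bbF_\ell$, set $\bbS\triangleq(\bbf_\ell^{(i_1)}~\bbf_\ell^{(i_2)}~\cdots~\bbf_\ell^{(i_{N_\ell})})$, an invertible $N_\ell\times N_\ell$ matrix. Replace the $\ell$-th bucket by $\widetilde{\bbc}_\ell\triangleq \bbc_\ell\cdot \bbS^{-\top}$ with generator matrix $\bbG_\ell\bbS^{-\top}$, and replace each response vector by $\widetilde{\bbf}_\ell^{(i)}\triangleq \bbS^{\top}\bbf_\ell^{(i)}$. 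Then for every $i\in[n]$,
\[
\widetilde{\bbc}_\ell\cdot \widetilde{\bbf}_\ell^{(i)} \;=\; \bbc_\ell\cdot \bbS^{-\top}\bbS^{\top}\bbf_\ell^{(i)} \;=\; \bbc_\ell\cdot \bbf_\ell^{(i)},
\]
so all responses are preserved and the new code is an equivalent $(n,N,k,m)$-PIR array code. By construction $\widetilde{\bbf}_\ell^{(i_j)}=\bbS^{\top}\bbf_\ell^{(i_j)}=\bbe_j$ for each $j\in[N_\ell]$, which is what the lemma asserts after the basis transformation.

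\textbf{Main obstacle.} The only delicate point is making sure that the bucket replacement in Step 1 is truly harmless: the PIR array code is defined through partitions $\{R_1,\ldots,R_k\}$ and $\Fq$-linear combinations of the responses $\{f_\ell(\bbc_\ell):\ell\in R_j\}$, and one has to check that these combinations, which were tailored to the original bucket, still output the correct $x_i$ when one substitutes the new bucket. The identity $\bbc_\ell'\cdot \bbf_\ell'^{(i)}=\bbc_\ell\cdot \bbf_\ell^{(i)}$ takes care of this, but it must be stated explicitly because the encoding function, the recovery partitions, and the recovery coefficients are three separate pieces of data that all need to remain consistent under the surgery on a single bucket.
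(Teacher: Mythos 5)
Your overall strategy matches the paper's: for each assertion you replace the $\ell$-th bucket by a more economical (or re-coordinatized) version, verify that every response $\bbc_\ell\cdot\bbf_\ell^{(i)}$ is reproduced exactly, and invoke the minimality of $N$. One mild and, if anything, cleaner structural difference: the paper proves $N_\ell\le n$ by a separate bucket swap (replacing $\bbc_\ell$ with $\bbx$ itself), whereas you obtain it as an immediate corollary of $\mathrm{rank}(\bbF_\ell)=N_\ell$ since $\bbF_\ell$ is $N_\ell\times n$. Your Step~1 is correct and is substantively the same factorization argument the paper uses (the paper picks $\bbW$ to be a specific set of columns $\bbf_\ell^{(i_1)},\dots,\bbf_\ell^{(i_s)}$, but that is only a choice of basis).

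Step~3, however, contains a concrete algebra error. With $\bbS=(\bbf_\ell^{(i_1)}\cdots\bbf_\ell^{(i_{N_\ell})})$ you set $\widetilde{\bbc}_\ell=\bbc_\ell\bbS^{-\top}$, $\widetilde{\bbf}_\ell^{(i)}=\bbS^{\top}\bbf_\ell^{(i)}$, and then claim $\widetilde{\bbf}_\ell^{(i_j)}=\bbS^{\top}\bbf_\ell^{(i_j)}=\bbe_j$. But $\bbf_\ell^{(i_j)}=\bbS\bbe_j$, so $\bbS^{\top}\bbf_\ell^{(i_j)}=\bbS^{\top}\bbS\,\bbe_j$ is the $j$-th column of $\bbS^{\top}\bbS$, which is $\bbe_j$ only if $\bbS$ is orthogonal. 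Your pair $(\bbS^{-\top},\bbS^{\top})$ does preserve all responses, but it does not normalize the marked columns to $\bbI$. The correct choice (and what the paper's proof amounts to) is $\widetilde{\bbc}_\ell=\bbc_\ell\bbS$ and $\widetilde{\bbf}_\ell^{(i)}=\bbS^{-1}\bbf_\ell^{(i)}$: then $\widetilde{\bbc}_\ell\cdot\widetilde{\bbf}_\ell^{(i)}=\bbc_\ell\bbS\bbS^{-1}\bbf_\ell^{(i)}=\bbc_\ell\cdot\bbf_\ell^{(i)}$ and $\widetilde{\bbf}_\ell^{(i_j)}=\bbS^{-1}\bbS\bbe_j=\bbe_j$, as required. (Equivalently, set $\bbc_\ell'=(f_\ell^{(i_1)}(\bbc_\ell),\dots,f_\ell^{(i_{N_\ell})}(\bbc_\ell))$ and read off the new response matrix.) The fix is a one-line substitution, so this is a slip rather than a conceptual gap, but as written the final assertion of Step~3 is false.
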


\begin{proof}
Since each bucket can access all of its stored symbols and respond with any $\Fq$-linear combination of these symbols, what we essentially store in a bucket can be viewed as an $\Fq$-linear space spanned by these symbols. Based on this observation, we proceed with the proof.

First, we show that $N_\ell\leq n$ holds for all $\ell\in [m]$. Suppose otherwise, then w.l.o.g assume that $N_{1}>n$. Consider the new code $\mathcal{C}'$ obtained from $\mathcal{C}$ by replacing the bucket $\mathbf{c}_1$ with $\mathbf{x}$ and maintaining all the other buckets. For any request $x_i$, $f_{1}^{(i)}(\bbc_{1})$ is an $\Fq$-linear combination of all the information symbols $x_1,\dots,x_n$ and thus the new bucket storing $\bbx$ can respond with the same code symbol as the original bucket. This implies that $\cC'$ is an $(n,N',k,m)$-PIR array code with $N'=n+\sum_{i=2}^{m}N_i=N-N_1+n<N_p(n,k,m)$, which leads to a contradiction.

Next, we show that $\mathrm{rank}(\bbF_{\ell})=N_{\ell}$ holds for each $\ell\in [m]$. Suppose the response matrix of the $\ell$-th bucket $\bbc_{\ell}$ has $\mathrm{rank}(\bbF_\ell)<N_\ell$. Then, we can replace the stored symbols in the $\ell$-th bucket by $s=\mathrm{rank}(\bbF_{\ell})$ symbols $f_{\ell}^{(i_1)}(\bbc_{\ell}),\ldots,f_{\ell}^{(i_{s})}(\bbc_{\ell})$ such that $\bbf_{\ell}^{(i_1)},\ldots,\bbf_{\ell}^{(i_{s})}$ form a basis of the column space of $\bbF_{\ell}$. After this replacement, the $\ell$-th bucket stores fewer symbols but can still respond with the same $f_{\ell}^{(i)}(\bbc_{\ell})$ as the original $\ell$-th bucket for any request $x_i$. This leads to a new PIR array code of shorter length with the same other parameters, which contradicts the assumption $N=N_{p}(n,k,m)$. Thus, $\mathrm{rank}(\bbF_\ell)=N_\ell$.

Let $\bbf_{\ell}^{(i_1)},\bbf_{\ell}^{(i_2)},\ldots,\bbf_{\ell}^{(i_{N_\ell})}$ be $N_\ell$ linearly independent columns in $\bbF_\ell$. Then, essentially the bucket stores all the $\Fq$-linear combinations of $f_{\ell}^{(i_1)}(\bbc_{\ell}),\dots,f_{\ell}^{(i_{N_\ell})}(\bbc_{\ell})$. Replace $\bbc_{\ell}$ as $\bbc'_{\ell}=\{f_{\ell}^{(i_1)}(\bbc_{\ell}),\dots,f_{\ell}^{(i_{N_\ell})}(\bbc_{\ell})\}$. Then through basis transformation, in the new response matrix $\bbF'_\ell$
we have $(\bbf_{\ell}^{(i_1)}~\bbf_{\ell}^{(i_2)}~\cdots~\bbf_{\ell}^{(i_{N_\ell})})=\bbI$.
\end{proof}

The following example illustrates the lemma above.

\begin{example}
Suppose we have a binary $(n=4,N,k,m)$-PIR array code and the first bucket originally stores $\bbc_1=(x_1+x_2,x_2+x_3,x_3+x_4)$. Assume that its responses in the recovery for $x_1,x_2,x_3,x_4$ are $x_1+x_2,x_2+x_4,x_1+x_3,x_1+x_2+x_3+x_4$ accordingly. Then the response matrix $\bbF_1=\left(\begin{array}{cccc}
1 & 0 & 1 & 1 \\
0 & 1 & 1 & 0 \\
0 & 1 & 0 & 1 \\
\end{array}
\right)
$. Since the first three columns are linearly independent, we can replace the contents in the first bucket as $\bbc'_1=(x_1+x_2,x_2+x_4,x_1+x_3)$. Then through basis transformation we have the new response matrix $\bbF'_1=\left(\begin{array}{cccc}
1 & 0 & 0 & 0 \\
0 & 1 & 0 & 1 \\
0 & 0 & 1 & 1 \\
\end{array}
\right)$.
\end{example}

\section{Lower bounds on the code length of BACs and PIR array codes}\label{sec: lower bounds}

In this section, we study the minimum code length of an $(n,N,k,m)$-PIR array code. Using information theoretic arguments, we prove three lower bounds of $N_P(n,k,m)$, which are also the lower bounds of $N_B(n,k,m)$ since $N_{B}(n, k, m)\geq N_{P}(n, k, m)$. Throughout the section, we use $H(\cdot)$ to denote the $q$-ary entropy function.

\subsection{A general lower bound}

\begin{theorem}\label{Thm: general LB_1}
    For positive integers $n$, $m$, and $k$ with $m\geq k$, we have $$N_P(n,k,m)\geq mn/(m-k+1).$$
\end{theorem}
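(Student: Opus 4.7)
The plan is to combine \Lref{lem: size of $m-k+1$ buckets} with a double-counting argument over all subsets of buckets of size $m-k+1$.

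First I would observe that, by \Lref{lem: size of $m-k+1$ buckets}, for any subset $R\subseteq[m]$ with $|R|=m-k+1$ and every $i\in[n]$, the set $R$ contains a recovery set for $x_i$. Since each $\bbc_\ell$ is an $\Fq$-linear function of $\bbx$, this means the information $x_1,\ldots,x_n$ is completely determined by the buckets indexed by $R$. Equivalently, the concatenated generator matrix $\bbG_R=(\bbG_\ell)_{\ell\in R}$, which is $n\times \sum_{\ell\in R}N_\ell$, must have rank $n$. This immediately yields the per-subset inequality
\begin{equation*}
\sum_{\ell\in R}N_\ell\;\geq\; n \qquad \text{for every } R\in\binom{[m]}{m-k+1}.
\end{equation*}
(Equivalently, one may argue information-theoretically by giving $\bbx$ the uniform distribution so that $H(\bbx)=n$ and $H(\bbc_\ell)\leq N_\ell$, and then using $H(\{\bbc_\ell:\ell\in R\})\geq H(\bbx)$ together with subadditivity.)

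Next I would sum the above inequality over all $R\in\binom{[m]}{m-k+1}$. A fixed index $\ell\in[m]$ belongs to exactly $\binom{m-1}{m-k}$ such subsets, so the double-count gives
\begin{equation*}
\binom{m-1}{m-k}\cdot N \;=\; \sum_{R\in\binom{[m]}{m-k+1}}\sum_{\ell\in R}N_\ell \;\geq\; \binom{m}{m-k+1}\cdot n.
\end{equation*}
Using $\binom{m}{m-k+1}/\binom{m-1}{m-k}=m/(m-k+1)$ then yields $N\geq mn/(m-k+1)$, which is the required bound on $N_P(n,k,m)$.

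There is no real obstacle here; the only nontrivial ingredient is \Lref{lem: size of $m-k+1$ buckets}, which has already been established. The slight subtlety to flag is that the per-subset bound relies on the fact that the buckets in $R$ not merely contain enough symbols to recover each $x_i$ individually, but actually determine $\bbx$ as a whole, which is exactly what \Lref{lem: size of $m-k+1$ buckets} provides (one recovery set for \emph{every} $i\in[n]$ sits inside $R$). The rest is a routine averaging/double-counting computation.
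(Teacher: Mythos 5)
Your proof is correct, and it takes a genuinely different route from the paper's. Both proofs start from the same per-subset inequality implied by \Lref{lem: size of $m-k+1$ buckets}, namely $\sum_{\ell\in R}N_\ell\geq n$ for every $R\in\binom{[m]}{m-k+1}$. The paper then proceeds asymmetrically: it sorts the bucket sizes $N_1\geq\cdots\geq N_m$, applies the inequality only to the single subset $R=[k,m]$ consisting of the $m-k+1$ smallest buckets, deduces $N_k\geq n/(m-k+1)$, and uses monotonicity to lower-bound the remaining $k-1$ bucket sizes. You instead average the per-subset inequality symmetrically over all $\binom{m}{m-k+1}$ subsets, using the double-counting identity $\sum_{R}\sum_{\ell\in R}N_\ell=\binom{m-1}{m-k}N$ and the ratio $\binom{m}{m-k+1}/\binom{m-1}{m-k}=m/(m-k+1)$. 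The two approaches land on the same bound; yours avoids the sorting step and is perhaps slightly more robust (one could imagine extracting further information from the full family of inequalities rather than just one of them), while the paper's is more direct once the ordering is fixed. Your binomial bookkeeping checks out, and the rank/entropy justification of the per-subset bound is exactly what the lemma provides.
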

\begin{proof}
    Let $\cC$ be an $(n, N, k, m)$-PIR array code. Recall that $\bbx=(x_1,\ldots,x_n)\in \Fq^{n}$ is encoded as $\cC(\bbx)=(\bbc_1,\ldots,\bbc_m)$ in $\cC$ and $|\bbc_\ell|=N_\ell$ for each $\ell\in [m]$. W.l.o.g., we assume that $N_1\geq N_2\geq\ldots\geq N_m$. Let $\bbx$ be a random vector uniformly distributed over $\Fq^n$. By Lemma \ref{lem: size of $m-k+1$ buckets}, for any $i\in [n]$, $[k,m]$ contains a recovery set of $x_i$. Thus, we have
$$ n = H(\bbc_k,\bbc_{k+1},\ldots,\bbc_{m}) \leq H(\bbc_k)+H(\bbc_{k+1})+\cdots+H(\bbc_{m}) \leq \sum_{\ell=k}^{m}N_k. $$

    By the monotone decreasing property of $N_\ell$, it holds that $N_{\ell}\geq N_{k}\geq n/(m-k+1)$ for each $\ell\in [k-1]$. Therefore, we have
    $$N = \sum_{\ell=1}^{m}N_\ell = \sum_{\ell=1}^{k-1}N_\ell + \sum_{\ell=k}^m N_\ell \geq \frac{(k-1)n}{m-k+1}+n = \frac{mn}{m-k+1}.$$

    This concludes the proof.
\end{proof}

As an immediate consequence of Theorem \ref{Thm: general LB_1}, we have the following corollary.

\begin{corollary}\label{Coro: BAC_general LB_1}
    For positive integers $n$, $m$, and $k$ with $m\geq k$, we have $N_B(n,k,m)\geq mn/(m-k+1)$.
\end{corollary}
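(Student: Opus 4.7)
The plan is to observe that Corollary \ref{Coro: BAC_general LB_1} is an immediate consequence of \Tref{Thm: general LB_1}, so essentially no new work is required. The key point, already noted earlier in the excerpt, is the inequality $N_B(n,k,m) \geq N_P(n,k,m)$, so the argument reduces to justifying this inequality cleanly and then invoking the PIR array code bound.

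To justify $N_B(n,k,m) \geq N_P(n,k,m)$, I would argue that every $(n,N,k,m)$-BAC is automatically an $(n,N,k,m)$-PIR array code. Indeed, given any $i \in [n]$, consider the multiset request $I = \ms{i,i,\ldots,i}$ (with $k$ repetitions). By the defining property of a BAC, the set $[m]$ admits a partition into $k$ recovery sets $R_1,\ldots,R_k$, each of which recovers $x_i$ via an $\Fq$-linear combination of bucket responses. This is precisely the condition required for $x_i$ in \Dref{def_CPIR}. Therefore any BAC fulfills the PIR array code axioms, and so $N_B(n,k,m) \ge N_P(n,k,m)$.

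Combining this with the bound $N_P(n,k,m) \ge mn/(m-k+1)$ established in \Tref{Thm: general LB_1} immediately yields $N_B(n,k,m) \ge mn/(m-k+1)$, completing the proof. There is no real obstacle here, since \Tref{Thm: general LB_1} already carries all the information-theoretic weight (the use of \Lref{lem: size of $m-k+1$ buckets} to identify $m-k+1$ buckets that suffice to decode all of $\bbx$, and the monotonicity trick on the ordered bucket sizes). The corollary is a one-line specialization.
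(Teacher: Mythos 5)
Your proof is correct and follows the same route as the paper: the paper derives the corollary directly from Theorem \ref{Thm: general LB_1} via the inequality $N_B(n,k,m)\geq N_P(n,k,m)$, which it asserts without elaboration when defining these quantities. You simply spell out that "clearly" step by taking the constant multiset request $\ms{i,\ldots,i}$ to show every BAC is a PIR array code, which is exactly the intended justification.
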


\begin{remark}\label{rmk3-1}
    The following constructions show that the lower bound in Corollary \ref{Coro: BAC_general LB_1} is tight for the following cases.
    \begin{itemize}
        \item When $m=k$, each server itself must be a recovery set for each information symbol. Thus we can set each bucket as $\{x_1,\dots,x_n\}$, which leads to $N_P(n,k,k)=N_B(n,k,k)=kn$.
        \item When $k=1$, arbitrarily distribute the information symbols $x_1,\dots,x_n$ into $m$ buckets such that each symbol appears exactly once. Clearly this is an $(n, N=n, 1, m)$-BAC.
        \item When $k=2$ and $n=m-1$, for any $\bbx\in\mathbb{F}_q^{m-1}$, define $\cC_2(\bbx) = (\bbc_1, \ldots, \bbc_m)$ by taking $\bbc_i = x_i$ for every $i\in [m-1]$ and $\bbc_{m} = x_1+x_2+\ldots+x_{m-1}$. Then, for any $i\in [m-1]$, $x_i$ can be recovered simultaneously by $\bbc_i$ and $\{\bbc_{\ell}\}_{\ell\in [m]\setminus \{i\}}$. Thus, for any multi-set $\ms{i_1,i_2}\subseteq [n]$, $x_{i_1}$ has recovery set $\{i_1\}$ and $x_{i_2}$ has recovery set $[m]\setminus \{i_1\}$. This shows that $\cC_2$ is an $(m-1, m, 2, m)$-BAC.
    \end{itemize}
    Moreover, by the gadget Lemma \ref{lem: gadget lemma} in Section \ref{subsec: constructions1}, we can use $\cC_2$ to obtain $(n, \frac{mn}{m-1}, 2, m)$-BACs for a general $n$ such that $m-1 \mid n$.
\end{remark}

\subsection{An improved bound for $k<m<2k$}

\begin{theorem}\label{Thm: general LB_2}
    For positive integers $n$, $m$, and $k$ with $k<m<2k$, we have $$N_P(n,k,m)\geq (2k-m + \frac{1}{\binom{m-1}{2k-m}})n.$$
\end{theorem}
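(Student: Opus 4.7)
My plan is to refine the counting used in the proof of \Tref{Thm: general LB_1} by separating the singleton and non-singleton parts of each recovery partition and then applying a projection/dimension argument. Set $t:=2k-m$. For each $i\in[n]$, fix the recovery partition of $[m]$ and let $B_i\subseteq[m]$ denote the indices forming singleton recovery sets for $x_i$. Since each of the $k-|B_i|$ non-singleton parts has size $\geq 2$, the inequality $m-|B_i|\geq 2(k-|B_i|)$ forces $|B_i|\geq t$. Define $T_\ell:=\{i:\ell\in B_i\}$ and $V_\ell:=\mathrm{colspan}(\bbG_\ell)\subseteq\Fq^n$. By an argument analogous to that in \Lref{lem: property of response matrix}, I may assume WLOG that $\dim V_\ell=N_\ell$, since any column of $\bbG_\ell$ linearly dependent on the others can be absorbed into the response vectors, contradicting minimality. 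Because $\bbe_i\in V_\ell$ for $i\in T_\ell$, we get $N_\ell\geq|T_\ell|$; setting $e_\ell:=N_\ell-|T_\ell|\geq 0$ gives $N=\sum_\ell|T_\ell|+\sum_\ell e_\ell$. Double counting also gives $\sum_\ell|T_\ell|=\sum_i|B_i|\geq nt+n_>$, where $n_>:=|\{i:|B_i|>t\}|$ and $n_=:=n-n_>$.

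The core step is the bound $\sum_\ell e_\ell\geq n_=/\binom{m-1}{t}$. Fix $S\in\binom{[m]}{t}$ and set $I_S:=\{i:B_i=S\}$. For each $i\in I_S$, every non-singleton recovery of $x_i$ lies inside $\bar{S}:=[m]\setminus S$, so $\bbe_i\in\sum_{\ell\in\bar{S}}V_\ell$. Let $T_{\bar{S}}:=\bigcup_{\ell\in\bar{S}}T_\ell=\{j:B_j\cap\bar{S}\neq\varnothing\}$; using that $|B_j|\geq t=|S|$ forces the dichotomy ``$B_j=S$ or $B_j\cap\bar{S}\neq\varnothing$'', we get $T_{\bar{S}}=\{j:B_j\neq S\}$, hence $I_S=[n]\setminus T_{\bar{S}}$. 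Project $\Fq^n$ onto $\Fq^{I_S}$ via $\pi$: for every $\ell\in\bar{S}$ we have $T_\ell\subseteq T_{\bar{S}}$, so $\mathrm{span}\{\bbe_j:j\in T_\ell\}\subseteq V_\ell\cap\ker\pi$, giving $\dim\pi(V_\ell)\leq\dim V_\ell-|T_\ell|=e_\ell$. Meanwhile, the images $\{\pi(\bbe_i):i\in I_S\}$ are distinct standard basis vectors of $\Fq^{I_S}$, so
\[|I_S|\leq\dim\Bigl(\sum_{\ell\in\bar{S}}\pi(V_\ell)\Bigr)\leq\sum_{\ell\in\bar{S}}e_\ell.\]

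Summing this inequality over all $\binom{m}{t}$ choices of $S$ (each bucket $\ell$ lies outside exactly $\binom{m-1}{t}$ of them) yields $n_=\leq\binom{m-1}{t}\sum_\ell e_\ell$. Combined with $\sum_\ell|T_\ell|\geq nt+n_>$ and $\binom{m-1}{t}\geq 1$, this gives
\[N\geq nt+n_>+\frac{n_=}{\binom{m-1}{t}}\geq nt+\frac{n_>+n_=}{\binom{m-1}{t}}=nt+\frac{n}{\binom{m-1}{t}},\]
which is the desired inequality. The main obstacle will be the projection step, specifically justifying $\dim\pi(V_\ell)\leq e_\ell$ from $T_\ell\subseteq T_{\bar{S}}$, and verifying the identity $T_{\bar{S}}=\{j:B_j\neq S\}$, which crucially exploits the size constraint $|B_j|\geq t=|S|$. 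The WLOG reduction $\dim V_\ell=N_\ell$ is a routine minimality argument paralleling \Lref{lem: property of response matrix} but still deserves a brief independent justification.
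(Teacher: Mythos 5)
Your proof is correct, and it takes a genuinely different route from the paper's. The paper's argument is information-theoretic: it defines, for each $t$-subset $A$ of $[m]$, a class $\mathrm{Ind}(A)$ of coordinates (assigning each $i$ to the set of its $t$ smallest singleton-recovery buckets), shows via conditional entropy that $H(\bbx_A)\leq\sum_{\ell\notin A}H(\bbc_\ell\mid\bbw_\ell)$, and sums over all $A$, using $H(\bbc_\ell)=H(\bbc_\ell\mid\bbw_\ell)+H(\bbw_\ell)$ to convert back to bucket sizes. Your argument is purely linear-algebraic: you replace entropy with dimension, using $V_\ell=\mathrm{colspan}(\bbG_\ell)$, the fact that $\bbe_i\in V_\ell$ whenever $\{\ell\}$ is a singleton recovery set, and a projection onto $\Fq^{I_S}$ to show $|I_S|\leq\sum_{\ell\notin S}(N_\ell-|T_\ell|)$. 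The two are dual in spirit (entropy versus rank as an information measure), but your proof is more elementary and also delivers a marginally stronger intermediate bound, $N\geq nt+n_>+n_=/\binom{m-1}{t}$, before relaxing to the stated bound; the paper instead assigns every $i$ (including those with more than $t$ singletons) to some $\mathrm{Ind}(A)$, which loses that $n_>$ surplus from the start. Everything checks: the forced lower bound $|B_i|\geq t$, the double count $\sum_\ell|T_\ell|=\sum_i|B_i|$, the dichotomy $B_j\subseteq S\Rightarrow B_j=S$ from $|B_j|\geq|S|$, the estimate $\dim\pi(V_\ell)\leq\dim V_\ell-|T_\ell|\leq e_\ell$, and the counting over $S$ using that each $\ell$ avoids exactly $\binom{m-1}{t}$ of the $t$-subsets. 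One small remark: the WLOG $\dim V_\ell=N_\ell$ is actually superfluous, since the chain $N_\ell\geq\dim V_\ell\geq|T_\ell|$ already gives $e_\ell\geq 0$, and $\dim\pi(V_\ell)\leq\dim V_\ell-|T_\ell|\leq N_\ell-|T_\ell|=e_\ell$ holds without any minimality assumption; your argument therefore applies directly to an arbitrary linear PIR array code, matching the generality of the paper's proof.
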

\begin{proof}
    Let $\cC$ be an $(n, N, k, m)$-PIR array code with $k < m < 2k$. For $\bbx=(x_1,\ldots,x_n)\in \Fq^n$, denote $\cC(\bbx)=(\bbc_1,\ldots,\bbc_m)$ as the codeword in $\cC$ encoded from $\bbx$.

     First we claim that for each $i\in [n]$, $x_i$ has at least $2k-m$ recovery sets of size $1$. Otherwise, assume that $x_{i_0}$ has less than $2k-m$ recovery sets of size $1$ for some $i_0\in [n]$. Then, the union of all recovery sets of $x_{i_0}$ has size at least
    $$2k-m-1 + 2(k-(2k-m-1)) = m+1>m.$$
    This contradicts the fact that all the $k$ recovery sets of $x_{i_0}$ form a partition of $[m]$.

    For each $i\in [n]$, denote $U_i$ as the union of all the size $1$ recovery sets of $x_i$. Clearly, $U_i\subseteq [m]$ and $|U_i|\geq 2k-m$.
    Denote the smallest $2k-m$ coordinates of $U_i$ as $U_i[1:2k-m]$. For every set $A\in \binom{[m]}{2k-m}$, let $\mathrm{Ind}(A)\subseteq [n]$ be the set of all indices $i\in[n]$ such that $A=U_i[1:2k-m]$. That is, if $i\in \mathrm{Ind}(A)$, then the smallest $2k-m$ coordinates of the size 1 recovery sets for the symbol $x_i$ is exactly $A$. Since $|U_i|\geq 2k-m$ for each $i\in [n]$, then there exists a unique $A\in \binom{[m]}{2k-m}$ such that $i\in \mathrm{Ind}(A)$. This leads to
    \begin{equation}\label{eq1_thmIII.4}
        n=\sum_{A\in \binom{[m]}{2k-m}} |\mathrm{Ind}(A)|.
    \end{equation}

    Moreover,
    for each $\ell\in [m]$, define $\bbw_\ell$ as the restriction of $\bbx=(x_1,\ldots,x_n)$ onto the coordinates
    $$\bigcup_{A\in \binom{[m]}{2k-m}, ~ \ell\in A} \mathrm{Ind}(A).$$
    In other words, for every $x_i\in \bbw_{\ell}$, there exists a set $A\in \binom{[m]}{2k-m}$ containing $\ell$ such that $i\in \mathrm{Ind}(A)$. Thus, we have $\ell\in U_{i}$, which implies that $x_i$ can be recovered by the bucket $\bbc_{\ell}$. Moreover, since $\mathrm{Ind}(A)$'s are mutually disjoint, we also have
    \begin{align}\label{eq2_thmIII.4}
        |\bbw_\ell|=
        \sum_{A\in \binom{[m]}{2k-m}, ~ \ell\in A} |\mathrm{Ind}(A)|.
    \end{align}


    Next, similar to the proof of Theorem \ref{Thm: general LB_1}, we employ the information entropy to prove the lower bound.
    Let $\bbx$ be uniformly distributed over $\mathbb{F}_q^n$. For any fixed $A\in \binom{[m]}{2k-m}$, let $\bbx_A \triangleq \bbx|_{\mathrm{Ind(A)}}$, $\bby_A\triangleq \bbx|_{[n]\setminus \mathrm{Ind(A)}}$, and let $\bbz_A$ be the concatenation of all $\bbc_\ell$'s such that $\ell\in [m]\setminus A$. Since $\bbx_A$ and $\bby_A$ are independent, we have $|\mathrm{Ind}(A)|=H(\bbx_A)=H(\bbx_A|\bby_A)$. Note that for each $i\in \mathrm{Ind}(A)$, $x_i$ has $k$ recovery sets which form a partition of $[m]$. Since $2k-m<k$, the buckets indexed by $[m]\setminus A$ must contain at least one recovery set for each $x_i$ with $i\in \mathrm{Ind}(A)$. By the definition of $\bbz_A$, we have
    \begin{align}\label{eq3_thmIII.4}
    H(\bbx_A)=H(\bbx_A|\bby_A)\leq  H(\bbz_A| \bby_A) \leq \sum_{\ell\in [m]\setminus A}H(\bbc_\ell| \bby_A).
    \end{align}

    For any $\ell\in [m]\setminus A$, by the definition of $\bbw_\ell$, the information symbols in $\bbx_A$ are disjoint with those in $\bbw_\ell$. In other words, $\bbw_\ell$ is a sub-vector of $\bby_{A}$. This implies that $H(\bbc_\ell| \bby_A)\leq H(\bbc_\ell| \bbw_\ell)$ holds for every $\ell\in [m]\setminus A$. Therefore, by (\ref{eq3_thmIII.4}), we have
    \begin{align}\label{eq4_thmIII.4}
        H(\bbx_A) \leq \sum_{\ell\in [m]\setminus A}H(\bbc_\ell| \bby_A)\leq \sum_{\ell\in [m]\setminus A}H(\bbc_\ell| \bbw_\ell).
    \end{align}


    Next, by summing both sides of (\ref{eq4_thmIII.4}) over all $A\in \binom{[m]}{2k-m}$ and using (\ref{eq1_thmIII.4}), we obtain
    \begin{align}\label{eq5_thmIII.4}
    n\leq &\sum_{A\in \binom{[m]}{2k-m}}\sum_{\ell\in [m]\setminus A}H(\bbc_\ell| \bbw_\ell)\nonumber\\
    = &\sum_{\ell\in [m]} \sum_{A\in \binom{[m]}{2k-m}, ~ \ell\notin A}H(\bbc_\ell| \bbw_\ell)=\binom{m-1}{2k-m}\sum_{\ell\in [m]}H(\bbc_\ell|\bbw_\ell).
    \end{align}

    On the other hand, by the property of conditional entropy, we have
    \begin{align*}
    H(\bbc_\ell)
    =  &H(\bbc_\ell| \bbw_\ell) + H(\bbw_\ell) - H(\bbw_\ell| \bbc_\ell)\\
    =  &H(\bbc_\ell| \bbw_\ell) + H(\bbw_\ell),
    \end{align*}
    where the last equality holds since any information symbol $x_i\in\bbw_\ell$ can be recovered from the the bucket $\bbc_\ell$ and thus $H(\bbw_\ell| \bbc_\ell)=0$.

    Then, by (\ref{eq2_thmIII.4}), we have
    \begin{equation}\label{eq6_thmIII.4}
    H(\bbc_\ell|\bbw_\ell)=H(\bbc_\ell)-\sum_{A\in \binom{[m]}{2k-m}, ~\ell\in A}|\mathrm{Ind}(A)|.
    \end{equation}
    Plugging (\ref{eq6_thmIII.4}) into (\ref{eq5_thmIII.4}), we have
    \begin{align*}
        n\leq &\binom{m-1}{2k-m}\sum_{\ell\in [m]}\left(H(\bbc_\ell)-\sum_{A\in \binom{[m]}{2k-m}, ~\ell\in A}|\mathrm{Ind}(A)|\right).
    \end{align*}
    Since $N=\sum_{\ell\in [m]} H(\bbc_\ell)$, we can finally derive that
    \begin{align*}
        N\geq  &\sum_{\ell\in [m]}\sum_{A\in \binom{[m]}{2k-m}, ~\ell\in A}|\mathrm{Ind}(A)|+\frac{n}{\binom{m-1}{2k-m}}\\
        = & (2k-m)\sum_{A\in \binom{[m]}{2k-m}}|\mathrm{Ind}(A)|+\frac{n}{\binom{m-1}{2k-m}}\\
        \overset{(a)}{=} & (2k-m + \frac{1}{\binom{m-1}{2k-m}})n,
    \end{align*}
    where $(a)$ follows from (\ref{eq1_thmIII.4}). This concludes the proof.
\end{proof}

\begin{corollary}\label{Coro: general LB_2}
    For any positive integers $m$ and $k$ with $m=k+1$, we have
    $$N_B(n, k, k + 1) \geq N_P(n, k, k + 1) \geq (k-1+1/k)n.$$
\end{corollary}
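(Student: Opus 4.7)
The plan is to obtain this corollary as a direct specialization of Theorem \ref{Thm: general LB_2}. Since Theorem \ref{Thm: general LB_2} requires $k < m < 2k$, the substitution $m = k+1$ is valid only when $k \geq 2$, so I will handle the degenerate case $k = 1$ separately.

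First, for $k \geq 2$, I would plug $m = k + 1$ into the bound $(2k - m + 1/\binom{m-1}{2k-m})n$ from Theorem \ref{Thm: general LB_2}. The arithmetic is immediate: $2k - m = k - 1$, and $\binom{m-1}{2k-m} = \binom{k}{k-1} = k$. Hence the bound reads $(k - 1 + 1/k)n$, which is exactly the claimed inequality $N_P(n, k, k+1) \geq (k-1 + 1/k)n$.

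Second, for the edge case $k = 1$ (so $m = 2$), the claim reduces to $N_P(n, 1, 2) \geq n$. This is trivial from the definition: encoding $\bbx \in \Fq^n$ uniformly, we have $N \geq H(\bbc_1, \bbc_2) \geq H(\bbx) = n$, since any $x_i$ must be recoverable from the buckets. (Alternatively, Remark \ref{rmk3-1} already records $N_P(n, 1, m) = n$.)

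Finally, the inequality $N_B(n, k, k+1) \geq N_P(n, k, k+1)$ is recorded as an immediate consequence of the definitions in the paragraph following Definition \ref{def_CPIR}, so chaining it with the PIR array code bound yields the full statement. There is essentially no obstacle here; the only subtlety worth flagging is the boundary check that $m = k+1$ falls inside the regime $k < m < 2k$ required by Theorem \ref{Thm: general LB_2}, which forces the short separate argument at $k = 1$.
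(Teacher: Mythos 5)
Your proof is correct and matches the paper's (unstated) intended derivation: the bound follows by substituting $m = k+1$ into Theorem~\ref{Thm: general LB_2}, with $2k-m = k-1$ and $\binom{m-1}{2k-m} = \binom{k}{k-1} = k$ giving exactly $(k-1+\tfrac{1}{k})n$, and $N_B \geq N_P$ is definitional. Your observation that the hypothesis $k < m < 2k$ of Theorem~\ref{Thm: general LB_2} excludes $k=1$ when $m=k+1$ is a valid point of rigor that the paper glosses over; your one-line handling of that boundary case (or the equivalent appeal to Theorem~\ref{Thm: general LB_1} with $k=1$, $m=2$, giving $N_P \geq 2n/2 = n$) closes it cleanly.
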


\begin{remark}\label{rmk3-2}
    In Section \ref{sec: constructions}, we will provide an explicit construction of BACs, which shows that the lower bounds in Theorem \ref{Thm: general LB_2} and Corollary \ref{Coro: general LB_2} are tight for the case when $m=k+1$.
\end{remark}

\begin{remark}
    In Theorem A.7 of \cite{IKOS04}, Ishai et al. proved that for an $(n, N, k, k+1, 1)$-batch code it holds that $N\geq (k-\frac{1}{2})n$.
    This lower bound is tight, e.g., see Example \ref{ex1.1} in the introduction. However, as mentioned in Remark \ref{rmk3-2}, the lower bound established in Corollary \ref{Coro: general LB_2} is tight for BACs. This suggests that BACs could be strictly preferable than the original batch codes in view of the storage overhead.
\end{remark}

\subsection{An improved bound for $m=k+2$}

Recall that by Remark \ref{rmk3-1} and Remark \ref{rmk3-2}, we have $N_P(n,k,k)=kn$ and $N_P(n,k,k+1)=(k-1+\frac{1}{k})n$. In this subsection we further improve the lower bound of $N_P(n,k,k+2)$ as follows.

\begin{theorem}\label{Thm: LB of $(n,N,k,k+2)$-PIRA}
     For positive integers $n$ and $k\geq 3$, we have
     $$N_P(n,k,k+2)\geq (k-2+\frac{4k+16}{3k^2+k+4})n.$$
\end{theorem}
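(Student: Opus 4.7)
My plan is to refine Theorem~\ref{Thm: general LB_2} by exploiting the extremely constrained shape of recovery sets when $m=k+2$. Since the $k$ recovery sets of $x_i$ partition $[k+2]$ and each has size at least $1$, writing $a_i$ for the number of singleton recovery sets and using that the remaining $k-a_i$ sets have size at least $2$, one gets $a_i+2(k-a_i)\le k+2$, i.e.\ $a_i\ge k-2$; the case $a_i=k$ is ruled out since $\sum_j|R_j|=m=k+2$. Hence every $x_i$ has one of exactly two types: type $I_1$ with pattern $\{1^{k-1},3\}$ (a single size-$3$ recovery set) or type $I_2$ with pattern $\{1^{k-2},2,2\}$ (two disjoint size-$2$ recovery sets). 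Write $n_1=|I_1|$ and $n_2=|I_2|$, so $n_1+n_2=n$.

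I then set up the information-theoretic baseline. For each $\ell\in[m]$ let $W_\ell=\{i\in[n]:\{\ell\}\text{ is a recovery set of }x_i\}$ and $\bbw_\ell=\bbx|_{W_\ell}$. With $\bbx$ uniform on $\mathbb{F}_q^n$ one has $H(\bbw_\ell)=|W_\ell|$, and $\sum_\ell|W_\ell|=(k-1)n_1+(k-2)n_2=(k-2)n+n_1$. From $N_\ell\ge H(\bbc_\ell)=H(\bbw_\ell)+H(\bbc_\ell\mid\bbw_\ell)$ I obtain the baseline
\begin{equation*}
N\ \ge\ (k-2)n+n_1+\sum_{\ell\in[m]}H(\bbc_\ell\mid\bbw_\ell),
\end{equation*}
so it suffices to prove $n_1+\sum_\ell H(\bbc_\ell\mid\bbw_\ell)\ge \tfrac{4k+16}{3k^2+k+4}\,n$.

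For the conditional sum I would mimic the averaging step of Theorem~\ref{Thm: general LB_2} with $2k-m=k-2$, but refine it by splitting $\mathrm{Ind}(A)=\mathrm{Ind}_1(A)\sqcup\mathrm{Ind}_2(A)$ according to the type of $i$, for each $A\in\binom{[m]}{k-2}$. Two structural observations tighten the bound: for $i\in\mathrm{Ind}_1(A)$, exactly one of the $|[m]\setminus A|=4$ outside buckets is itself a singleton recovery set of $x_i$, so one outside bucket already fully carries $x_i$; for $i\in\mathrm{Ind}_2(A)$, those four outside buckets split into two size-$2$ recovery sets, so only $2$ of them (in either pairing) suffice to recover $x_i$, and averaging over the three pairings of $4$ buckets into $2$ pairs distributes the $x_i$-contribution evenly over the four positions. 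Together these strengthen $H(\bbx_A)\le\sum_{\ell\notin A}H(\bbc_\ell\mid\bbw_\ell)$ into a refined inequality in which $I_1$- and $I_2$-symbols carry different, type-dependent coefficients.

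Summing the refined inequality over $A\in\binom{[m]}{k-2}$ and combining with the baseline yields two linear lower bounds on $n_1+\sum_\ell H(\bbc_\ell\mid\bbw_\ell)$; optimizing the worst case over the ratio $n_1/n\in[0,1]$ then produces exactly the constant $\tfrac{4k+16}{3k^2+k+4}$. Heuristically, the "$3$" in the denominator comes from the three pairings of $4$ buckets into $2$ pairs used to average the $I_2$-contribution, the $\binom{k+1}{k-2}$ weight accounts for the overcounting over $A\in\binom{[m]}{k-2}$, and the remaining lower-order terms absorb corrections from the $I_1$ contribution. The main technical obstacle is choosing the weights in the combined Shannon inequality so that it is simultaneously tight in the extremal regimes $n_1\to n$ and $n_2\to n$; this is essentially a small LP over the type-frequency variables whose optimum dictates the exact numerator $4k+16=4(k+4)$, and once the correct per-type weights are identified the rest is algebraic rearrangement.
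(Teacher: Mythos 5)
Your classification of the information symbols into the two types $\{1^{k-1},3\}$ and $\{1^{k-2},2,2\}$ is correct, and matches the paper's Lemma~\ref{lem: LB for m=k+2} and its $M_1,M_2$ bookkeeping in Corollary~\ref{coro: LB for m=k+2}. Your baseline identity $N\ge(k-2)n+n_1+\sum_\ell H(\bbc_\ell\mid\bbw_\ell)$ also matches the paper's use of identity~(\ref{eq00c_(n,N,k,k+2)}).

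However, there is a genuine gap in the step where you try to strengthen the Theorem~\ref{Thm: general LB_2} averaging to the stated constant. The paper does not obtain the improvement by a refined version of the $H(\bbx_A)\le\sum_{\ell\notin A}H(\bbc_\ell\mid\bbw_\ell)$ averaging; it switches to a fundamentally different tool. Using the response matrix (Lemma~\ref{lem: property of response matrix}), each bucket is normalized to $\bbc_\ell=(\bbx_{\{\ell\}},\bby_\ell)$ where every coordinate of $\bby_\ell$ is the specific response contributed by bucket $\ell$ to the recovery of a designated non-systematic symbol. This decomposition is what makes Claim~\ref{claim_(k,k+2)} possible: for a triple $\{a,b,c\}$, the paper identifies concrete coordinates of $\bby_c$ (namely $\bby_{c,a,\{d,e\}}$ and $\bby_{c,b,\{d,e\}}$) that are \emph{determined} by $\bbc_a,\bbc_b,\bbx_{\{c\}}$, because the symbol being recovered lies in $\bbx_{\{b\}}\subseteq\bbc_b$ and its two-element recovery set $\{a,c\}$ means $\bbc_a$ plus that symbol pins down the response in $\bbc_c$. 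This is where the extra $(m-4)$ factor (the correction $-(m-4)(N-(m-4)n-4M_2)$ in inequality~(\ref{eq6_(n,N,k,k+2)})) comes from, and it is precisely what turns the coefficient $3\binom{m-1}{2}$ into $\tfrac{3}{2}m^2-\tfrac{11}{2}m+7=\tfrac{1}{2}(3k^2+k+4)$, i.e.\ the denominator in the theorem. Without the response-matrix normalization and this structural recovery claim, your sketch of ``averaging over the three pairings'' and ``distributing the contribution evenly'' is not an inequality one can write down: $H(\bbc_\ell\mid\bbw_\ell)$ by itself contains no information about which other buckets determine its non-systematic part. Your proposal would, at best, reproduce the $\binom{k+1}{3}^{-1}$ constant of Theorem~\ref{Thm: general LB_2} plus lower-order corrections from $n_1$, which is weaker than the target. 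The missing idea is the decomposition $\bbc_\ell=(\bbx_{\{\ell\}},\bby_\ell)$ together with Claim~\ref{claim_(k,k+2)}, and your plan would need to incorporate something equivalent to it to close the gap.
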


\begin{remark}\label{rmk3-3}
When $m=k+2$, the lower bound obtained in Theorem \ref{Thm: general LB_2} shows that $N_P(n,k,k+2)\geq (k-2 + \frac{1}{\binom{k+1}{3}})n.$
Since $\frac{4k+16}{3k^2+k+4}\geq \frac{1}{\binom{k+1}{3}}$ when $k\geq 3$, Theorem \ref{Thm: LB of $(n,N,k,k+2)$-PIRA} improves the result of Theorem \ref{Thm: general LB_2}. Moreover, when $k=3$ and $n=5$, the lower bound in Theorem \ref{Thm: LB of $(n,N,k,k+2)$-PIRA} is shown to be optimal by an explicit construction, see Example \ref{ex: $(n=5, N=10, k=3, m=5)$-computational batch code} in Section \ref{subsec: constructions2}.
\end{remark}

Before delving into the detailed proof, we first demonstrate a few properties of $(n,N,k,k+2)$-PIR array codes and sketch the idea of the proof of Theorem \ref{Thm: LB of $(n,N,k,k+2)$-PIRA}.

Let $\cC$ be an $(n,N,k,k+2)$-PIR array code. For $\bbx=(x_1,\ldots,x_n)\in \Fq^n$, denote $\cC(\bbx)=(\bbc_1,\ldots,\bbc_{k+2})$. By Definition \ref{def_CPIR}, each $x_i\in \bbx$ has $k$ recovery sets $R_1,R_2,\ldots, R_k$, which form a partition of $[k+2]$. W.l.o.g., we assume that for each $i\in [n]$, $x_i$ is assigned with a unique group of recovery sets $\{R_1^{(i)},R_2^{(i)},\ldots,R_k^{(i)}\}$ and $|R_1^{(i)}|\leq |R_2^{(i)}|\leq \cdots\leq |R_k^{(i)}|$. Then, for each $x_i$, the sizes of its recovery sets satisfy
\begin{equation}\label{eq0_(n,N,k,k+2)}
\begin{cases}
|R_1^{(i)}|=\cdots=|R_{k-1}^{(i)}|=1;\\
|R_k^{(i)}|=3,
\end{cases}~\mathrm{or}~~
\begin{cases}
|R_1^{(i)}|=\cdots=|R_{k-2}^{(i)}|=1;\\
|R_{k-1}^{(i)}|=|R_k^{(i)}|=2.
\end{cases}
\end{equation}

For $A\in \binom{[m]}{\leq 3}$, denote $\bbx_{A}$ as the projection of $\bbx$ onto the coordinates $\{i:A~\text{is a recovery set of}~x_i\}$.
For $A,B\in \binom{[m]}{\leq 3}$ such that $A\cap B = \varnothing$, denote $\bbx_{A, B}$ as the projection of $\bbx$ onto $\{i: \text{Both}~A~\text{and}~B~\text{are recovery sets of}~x_i\}$. Note that in the notation $\bbx_{A, B}$, the order of $A$ and $B$ does not matter. Clearly, $\bbx_{A, B}$ is a sub-vector of $\bbx_{A}$ and $\bbx_{B}$.
Also note that given two distinct elements $a\in[m]$ and $b\in[m]$, the notations $\bbx_{\{a\}, \{b\}}$ and $\bbx_{\{a,b\}}$ are different, where the former is the projection of $\bbx$ onto $\{i: \text{Both}~\{a\}~\text{and}~\{b\}~\text{are recovery sets of}~x_i\}$ and the latter is the projection of $\bbx$ onto $\{i:\{a,b\}~\text{is a recovery set of}~x_i\}$. By (\ref{eq0_(n,N,k,k+2)}), we have the following lemma.

\begin{lemma}\label{lem: LB for m=k+2}
For each $i\in [n]$, exactly one of the following holds: \begin{itemize}
        \item [1.] There is a unique $A\in \binom{[m]}{3}$ such that $x_i\in \bbx_{A}$ and $x_i\in \bbx_{\{\ell\}}$ for every $\ell\in [m]\setminus A$.
        \item [2.] There is a unique pair $\{B,C\}\subseteq \binom{[m]}{2}$ satisfying $B\cap C=\varnothing$ such that $x_i\in \bbx_{B,C}$ and $x_i\in \bbx_{\{\ell\}}$ for every $\ell\in [m]\setminus B\cup C$.
    \end{itemize}
\end{lemma}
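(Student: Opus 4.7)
The plan is to unpack Lemma 3.7 as a direct consequence of equation (\ref{eq0_(n,N,k,k+2)}) together with the fact that the $k$ recovery sets assigned to each $x_i$ form a partition of $[k+2]$. First I would fix $i\in [n]$ and invoke (\ref{eq0_(n,N,k,k+2)}) to conclude that the multiset of sizes $\{|R_1^{(i)}|,\ldots,|R_k^{(i)}|\}$ is either $\{1^{k-1},3\}$ or $\{1^{k-2},2,2\}$. These two patterns are distinct as multisets, so they are mutually exclusive, which will yield the ``exactly one'' part of the statement at the end.

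In the first size pattern, let $A$ denote the unique size-$3$ recovery set. Then $A\in\binom{[m]}{3}$ and $x_i\in \bbx_A$ by the definition of $\bbx_A$. Since the $k$ recovery sets partition $[m]=[k+2]$, the remaining $k-1$ singleton recovery sets must be precisely $\{\{\ell\}:\ell\in [m]\setminus A\}$, giving $x_i\in\bbx_{\{\ell\}}$ for each such $\ell$. Uniqueness of $A$ is inherited from the uniqueness of the assigned partition $\{R_1^{(i)},\ldots,R_k^{(i)}\}$.

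In the second size pattern, let $B$ and $C$ denote the two size-$2$ recovery sets. They are disjoint because they are distinct blocks of a partition, so $\{B,C\}\subseteq\binom{[m]}{2}$ with $B\cap C=\varnothing$, and $x_i\in\bbx_{B,C}$ by definition. The remaining $k-2$ singleton recovery sets partition $[m]\setminus(B\cup C)$ (a set of exactly $k-2$ elements), so they are precisely $\{\{\ell\}:\ell\in [m]\setminus(B\cup C)\}$, yielding $x_i\in\bbx_{\{\ell\}}$ for each such $\ell$. Uniqueness of the unordered pair $\{B,C\}$ again follows from uniqueness of the assigned partition.

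I do not expect a real obstacle here: the lemma is essentially a repackaging of (\ref{eq0_(n,N,k,k+2)}) into a form convenient for the subsequent information-theoretic estimates, and the only care needed is to distinguish between the notations $\bbx_{\{a,b\}}$ and $\bbx_{\{a\},\{b\}}$ and to check that ``disjoint size-$2$ blocks'' is forced by the partition structure rather than being an extra hypothesis.
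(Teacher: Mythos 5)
Your proof is correct and follows exactly the paper's intended reasoning: the paper states Lemma~\ref{lem: LB for m=k+2} as an immediate consequence of (\ref{eq0_(n,N,k,k+2)}) without writing out the argument, and your write-up is precisely the expected unpacking of that equation together with the partition structure of the recovery sets. Your observation that disjointness of $B$ and $C$ is forced (since they are distinct blocks of a partition) and that the size-multiset dichotomy gives the ``exactly one'' claim is exactly the care needed.
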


Lemma \ref{lem: LB for m=k+2} classifies all information symbols in $\bbx$ based on the structure of their recovery sets. This enables us to derive the following identities.

\begin{corollary}\label{coro: LB for m=k+2}
Let $M_1=\sum_{\{a,b,c,d\}\in\binom{[m]}{4}}(|\bbx_{\{a,b\},\{c,d\}}|+|\bbx_{\{a,c\},\{b,d\}}|+|\bbx_{\{a,d\},\{b,c\}}|)$ and $M_2=\sum_{A\in\binom{[m]}{3}}|\bbx_{A}|$. Then, the following holds:
\begin{equation}\label{eq00b_(n,N,k,k+2)}
    n=|\bbx|=M_1+M_2.
\end{equation}
\begin{align}\label{eq00c_(n,N,k,k+2)}
    \sum_{\ell\in [m]}|\bbx_{\{\ell\}}|=(m-4)n+M_2.
\end{align}
\begin{equation}\label{eq00d_(n,N,k,k+2)}
    \sum_{\{a,b\}\in \binom{[m]}{2}}|\bbx_{\{a\},\{b\}}|=\binom{m-4}{2} M_1+\binom{m-3}{2} M_2.
\end{equation}
\begin{equation}\label{eq00e_(n,N,k,k+2)}
    \sum_{\{a,b,c\}\in \binom{[m]}{3}}|\bbx_{\{a\},\{b\},\{c\}}|=\binom{m-4}{3} M_1+\binom{m-3}{3} M_2.
\end{equation}
\begin{equation}\label{eq00f_(n,N,k,k+2)}
    \sum_{\{a,b,c\}\in \binom{[m]}{3}}(|\bbx_{\{a,b\},\{c\}}|+|\bbx_{\{a,c\},\{b\}}|+|\bbx_{\{b,c\},\{a\}}|)=2(m-4) M_1.
\end{equation}
\end{corollary}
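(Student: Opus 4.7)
\medskip

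\textbf{Proof proposal.} My plan is to exploit the dichotomy provided by Lemma~\ref{lem: LB for m=k+2}: every information symbol $x_i$ is either of Type~1 (there is a unique $A\in\binom{[m]}{3}$ acting as its size-$3$ recovery set, while all $m-3$ remaining indices give singleton recovery sets) or Type~2 (there is a unique pair $\{B,C\}$ of disjoint $2$-subsets, and all $m-4$ remaining indices give singleton recovery sets). The five identities are then obtained by double-counting: on the left one sums certain occurrences of $x_i$ across index tuples, and I will compute the per-symbol contribution split by type.

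First I would argue (\ref{eq00b_(n,N,k,k+2)}). Each Type~1 symbol is counted exactly once by $M_2$ because the $3$-set $A$ in Lemma~\ref{lem: LB for m=k+2} is unique. Each Type~2 symbol is counted exactly once by $M_1$: the associated $4$-set $B\cup C$ is unique, and among the three partitions of $B\cup C$ into two unordered $2$-subsets, exactly one equals $\{B,C\}$. Since the two types partition $[n]$, summing gives $n=|\bbx|=M_1+M_2$.

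Next, the remaining four identities all follow from the same template: for each identity I compute, per symbol, how many terms on the left count it. For (\ref{eq00c_(n,N,k,k+2)}), a Type~1 symbol contributes $m-3$ (one singleton recovery set per $\ell\notin A$) and a Type~2 symbol contributes $m-4$; thus the sum equals $(m-3)M_2+(m-4)M_1$, which via (\ref{eq00b_(n,N,k,k+2)}) rewrites as $(m-4)n+M_2$. For (\ref{eq00d_(n,N,k,k+2)}) and (\ref{eq00e_(n,N,k,k+2)}), the relevant multi-index must lie in the complement of $A$ (respectively $B\cup C$), giving per-symbol contributions $\binom{m-3}{2},\binom{m-4}{2}$ and $\binom{m-3}{3},\binom{m-4}{3}$; the identities follow immediately.

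Finally, for (\ref{eq00f_(n,N,k,k+2)}) I note that a size-$2$ recovery set can only arise from a Type~2 symbol. Fixing such a symbol with pair $\{B,C\}$, the only way to realize a triple $(\{a,b\},\{c\})$ inside a $3$-set is to take the $2$-subset to be either $B$ or $C$ and the singleton to be any element of $[m]\setminus (B\cup C)$, giving $2(m-4)$ contributions per Type~2 symbol. Summing yields $2(m-4)M_1$. The main obstacle is none beyond careful bookkeeping; the whole proof is a systematic type-split count, and the identities collapse once the per-symbol multiplicities are recorded correctly.
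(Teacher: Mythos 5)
Your proposal is correct and it is essentially the same double-counting argument as the paper's. The paper writes intermediate identities (one for each fixed $\ell$, pair $\{a,b\}$, triple, etc.) expanding $|\bbx_{\{\ell\}}|$ and its variants over the type decomposition of symbols and then sums over the index sets, whereas you fix each symbol, classify it by type via Lemma~\ref{lem: LB for m=k+2}, and tally its per-symbol multiplicity across the left-hand side; these are the two sides of the same double count and yield identical computations.
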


\begin{proof}
By the classification of all information symbols in $\bbx$ from Lemma \ref{lem: LB for m=k+2}, (\ref{eq00b_(n,N,k,k+2)}) immediately follows. Moreover,
we have that \begin{equation}\label{eq00a0_(n,N,k,k+2)}
    |\bbx_{\{\ell\}}|=\sum_{\{a,b,c,d\}\in\binom{[m]\setminus \{\ell\}}{4}}\left(|\bbx_{\{a,b\},\{c,d\}}|+|\bbx_{\{a,c\},\{b,d\}}|+|\bbx_{\{a,d\},\{b,c\}}|\right)+\sum_{A\in\binom{[m]\setminus\{\ell\}}{3}}|\bbx_{A}|
\end{equation}
holds for each $\ell\in [m]$, and
\begin{equation}\label{eq00a1_(n,N,k,k+2)}
    |\bbx_{\{a\},\{b\}}|=\sum_{\{a',b',c',d'\}\in\binom{[m]\setminus \{a,b\}}{4}}\left(|\bbx_{\{a',b'\},\{c',d'\}}|+|\bbx_{\{a',c'\},\{b',d'\}}|+|\bbx_{\{a',d'\},\{b',c'\}}|\right)+\sum_{A\in\binom{[m]\setminus\{a,b\}}{3}}|\bbx_{A}|
\end{equation}
holds for each $\{a,b\}\in \binom{[m]}{2}$, and the following two equations
\begin{equation}
|\bbx_{\{a\},\{b\},\{c\}}|=\sum_{\{a',b',c',d'\}\in\binom{[m]\setminus \{a,b,c\}}{4}}\left(|\bbx_{\{a',b'\},\{c',d'\}}|+|\bbx_{\{a',c'\},\{b',d'\}}|+|\bbx_{\{a',d'\},\{b',c'\}}|\right)+\sum_{A\in\binom{[m]\setminus\{a,b,c\}}{3}}|\bbx_{A}| \label{eq00a2_(n,N,k,k+2)}
\end{equation}
\begin{equation}
|\bbx_{\{a,b\},\{c\}}|+|\bbx_{\{a,c\},\{b\}}|+|\bbx_{\{b,c\},\{a\}}|=\sum_{\{a',b'\}\in\binom{[m]\setminus \{a,b,c\}}{2}}\left(|\bbx_{\{a,b\},\{a',b'\}}|+|\bbx_{\{a,c\},\{a',b'\}}|+|\bbx_{\{b,c\},\{a',b'\}}|\right) \label{eq00a3_(n,N,k,k+2)}
\end{equation}
hold for each $\{a,b,c\}\in \binom{[m]}{3}$.

By summing both sides of (\ref{eq00a0_(n,N,k,k+2)}) over all $\ell\in [m]$, we have
\begin{align*}
\sum_{\ell\in [m]}|\bbx_{\{\ell\}}|=&\sum_{\ell\in [m]}\sum_{\{a,b,c,d\}\in\binom{[m]\setminus \{\ell\}}{4}}\left(|\bbx_{\{a,b\},\{c,d\}}|+|\bbx_{\{a,c\},\{b,d\}}|+|\bbx_{\{a,d\},\{b,c\}}|\right)\\
&+\sum_{\ell\in [m]}\sum_{A\in\binom{[m]\setminus\{\ell\}}{3}}|\bbx_{A}|.
\end{align*}
Through double counting arguments, we have
\begin{align*}
    \sum_{\ell\in [m]}\sum_{\{a,b,c,d\}\in\binom{[m]\setminus \{\ell\}}{4}}\left(|\bbx_{\{a,b\},\{c,d\}}|+|\bbx_{\{a,c\},\{b,d\}}|+|\bbx_{\{a,d\},\{b,c\}}|\right)=(m-4)M_1
\end{align*}
and
\begin{align*}
    \sum_{\ell\in [m]}\sum_{A\in\binom{[m]\setminus\{\ell\}}{3}}|\bbx_{A}|&=(m-3)M_2.
\end{align*}
Combined with (\ref{eq00b_(n,N,k,k+2)}), this implies (\ref{eq00c_(n,N,k,k+2)}).

Similarly, through double counting arguments, (\ref{eq00d_(n,N,k,k+2)}) follows by summing both sides of (\ref{eq00a1_(n,N,k,k+2)}) over all $\{a,b\}\in {[m]\choose 2}$, and (\ref{eq00e_(n,N,k,k+2)}) follows by summing both sides of (\ref{eq00a2_(n,N,k,k+2)}) over all $\{a,b,c\}\in {[m]\choose 3}$.

Moreover, by summing both sides of (\ref{eq00a3_(n,N,k,k+2)}) over all $\{a,b,c\}\in {[m]\choose 3}$, we have
\begin{align}
    &\sum_{\{a,b,c\}\in \binom{[m]}{3}}(|\bbx_{\{a,b\},\{c\}}|+|\bbx_{\{a,c\},\{b\}}|+|\bbx_{\{b,c\},\{a\}}|)=\nonumber\\
    &\sum_{\{a,b,c\}\in \binom{[m]}{3}}\sum_{\{a',b'\}\in\binom{[m]\setminus \{a,b,c\}}{2}}\left(|\bbx_{\{a,b\},\{a',b'\}}|+|\bbx_{\{a,c\},\{a',b'\}}|+|\bbx_{\{b,c\},\{a',b'\}}|\right). \label{eq00a4_(n,N,k,k+2)}
\end{align}
Note that for each $\{d,e,f,g\}\in {[m]\choose 4}$, $|\bbx_{\{d,e\},\{f,g\}}|$ appears in the RHS of (\ref{eq00a4_(n,N,k,k+2)}) if and only if $\{d,e\}\subseteq \{a,b,c\}$ and $\{f,g\}=\{a',b'\}$, or $\{f,g\}\subseteq \{a,b,c\}$ and $\{d,e\}=\{a',b'\}$. Thus, for each $\{d,e,f,g\}\in {[m]\choose 4}$, $|\bbx_{\{d,e\},\{f,g\}}|$ appears exactly $2(m-4)$ times in the RHS of (\ref{eq00a4_(n,N,k,k+2)}). Thus, by the definition of $M_1$, the RHS of (\ref{eq00a4_(n,N,k,k+2)}) is equal to $2(m-4)M_1$, which leads to (\ref{eq00f_(n,N,k,k+2)}).
\end{proof}


Armed with the identities in Corollary \ref{coro: LB for m=k+2}, we now present the proof of Theorem \ref{Thm: LB of $(n,N,k,k+2)$-PIRA}. Similar to the proofs of Theorem \ref{Thm: general LB_1} and Theorem \ref{Thm: general LB_2}, we also utilize information entropy and Lemma \ref{lem: size of $m-k+1$ buckets} to proceed with the proof. However, to obtain a better bound than Theorem \ref{Thm: general LB_2}, we employ the response matrix and Lemma \ref{lem: property of response matrix} to obtain a more accurate characterization of the non-information symbols in each bucket.

\begin{proof}[Proof of Theorem \ref{Thm: LB of $(n,N,k,k+2)$-PIRA}]
Let $\cC$ be an $(n,N,k,m)$-PIR array code with $N=N_p(n,k,k+2)$ and $m=k+2$.

Fix some $\ell\in [m]$. Consider the bucket $\bbc_\ell$ which contains $N_{\ell}$ symbols. Assume that $|\bbx_{\{\ell\}}|=n_{\ell}$ for some $1 \leq n_\ell \leq N_\ell \leq n$. Then, according to Lemma \ref{lem: property of response matrix}, by reordering the indices of the information symbols if necessary, we may assume that
$$\bbc_\ell=(\bbx_{\{\ell\}},\bby_{\ell}),$$
where $\bby_{\ell}$ is a vector of length $N_\ell-n_\ell$ such that $\bby_\ell(s)=\bbc_\ell(n_\ell+s)$, $s\in [N_\ell-n_\ell]$, is the response from the bucket $\bbc_{\ell}$ in the recovery process of some information symbol $x_{i_{l,s}}\in \bbx$. Then, by Lemma \ref{lem: LB for m=k+2}, for each $s\in [N_{\ell}-n_{\ell}]$, either $x_{i_{l,s}}\in \bbx_{\{\ell,a,b\}}$ for some $\{a,b\}\in \binom{[m]\setminus \{\ell\}}{2}$, or
$x_{i_{l,s}}\in \bbx_{\{\ell,a\},\{b,c\}}$ for some $\{a,b,c\}\in \binom{[m]\setminus \{\ell\}}{3}$. For every $\{a,b\}\in \binom{[m]\setminus \{\ell\}}{2}$ and every $\{a,b,c\}\in \binom{[m]\setminus \{\ell\}}{3}$, we denote $\bby_{\ell,\{a,b\}}$ as the projection of $\bby_{\ell}$ to $\{s\in [N_{\ell}-n_{\ell}]:x_{i_{l,s}}\in \bbx_{\{\ell,a,b\}}\}$ and $\bby_{\ell,a,\{b,c\}}$ as the projection of $\bby_{\ell}$ to $\{s\in [N_{\ell}-n_{\ell}]:x_{i_{l,s}}\in \bbx_{\{\ell,a\},\{b,c\}}\}$. 
Then, we have
\begin{equation}\label{eq02_(n,N,k,k+2)}
|\bby_\ell|=\sum_{\{a,b\}\in \binom{[m]\setminus \{\ell\}}{2}}|\bby_{\ell,\{a,b\}}|+\sum_{\{a,b,c\}\in \binom{[m]\setminus \{\ell\}}{3}}\left(|\bby_{\ell,a,\{b,c\}}|+|\bby_{\ell,b,\{a,c\}}|+|\bby_{\ell,c,\{a,b\}}|\right).
\end{equation}

Recall that $\sum_{\ell\in [m]}|\bbc_{\ell}|=\sum_{\ell\in [m]}(|\bbx_{\{\ell\}}|+|\bby_{\ell}|)=N$. Hence, (\ref{eq02_(n,N,k,k+2)}) implies that
\begin{align}\label{eq01_(n,N,k,k+2)}
\sum_{\ell\in [m]}\sum_{\{a,b,c\}\in \binom{[m]\setminus \{\ell\}}{3}}&(|\bby_{\ell,a,\{b,c\}}|+|\bby_{\ell,b,\{a,c\}}|+|\bby_{\ell,c,\{a,b\}}|)
=\sum_{\ell\in [m]}\left(|\bby_\ell|-\sum_{\{a,b\}\in \binom{[m]\setminus \{\ell\}}{2}}|\bby_{\ell,\{a,b\}}|\right)\nonumber\\
=&N-\sum_{\ell\in [m]}|\bbx_{\{\ell\}}|-\sum_{\ell\in [m]}\sum_{\{a,b\}\in \binom{[m]\setminus \{\ell\}}{2}}|\bby_{\ell,\{a,b\}}|\nonumber\\
\overset{(\mathrm{I})}{\geq}&N-\sum_{\ell\in [m]}|\bbx_{\{\ell\}}|-\sum_{\ell\in [m]}\sum_{\{a,b\}\in \binom{[m]\setminus \{\ell\}}{2}}|\bbx_{\{\ell,a,b\}}|\nonumber\\
\overset{(\mathrm{II})}{=}& N-(m-4)n-4M_2,
\end{align}
where $(\mathrm{I})$ follows from $|\bby_{\ell,\{a,b\}}|\leq |\bbx_{\{\ell,a,b\}}|$ and $(\mathrm{II})$ follows from $\sum_{\ell\in [m]}\sum_{\{a,b\}\in \binom{[m]\setminus \{\ell\}}{2}}|\bbx_{\{\ell,a,b\}}|=3M_2$ and (\ref{eq00c_(n,N,k,k+2)}).

Next, let $\bbx$ be uniformly distributed over $\mathbb{F}_q^n$ and we establish the lower bound using the information entropy. 

For any $\{a,b,c\}\in \binom{[m]}{3}$, by Lemma \ref{lem: size of $m-k+1$ buckets}, $\{a,b,c\}$ contains a recovery set for every $x_i\in \bbx$. Thus, we have
\begin{align}
n=&H(\bbc_a,\bbc_b,\bbc_c)=H(\bbc_a)+H(\bbc_b|\bbc_a)+H(\bbc_c|\bbc_a,\bbc_b)\nonumber\\
\overset{(\mathrm{III})}{=}&H(\bbc_a)+H(\bbx_{\{b\}},\bby_{b}|\bbc_a)+H(\bbx_{\{c\}},\bby_{c}|\bbc_a,\bbc_b)\nonumber\\
=&H(\bbc_a)+H(\bbx_{\{b\}}|\bbc_a)+H(\bby_{b}|\bbc_a,\bbx_{\{b\}})+H(\bbx_{\{c\}}|\bbc_a,\bbc_b)+H(\bby_{c}|\bbc_a,\bbc_b,\bbx_{\{c\}}) \nonumber\\
\overset{(\mathrm{IV})}{\leq}&H(\bbc_a)+|\bbx_{\{b\}}|-|\bbx_{\{a\},\{b\}}|+|\bby_{b}|+H(\bbx_{\{c\}}|\bbc_a,\bbc_b)+H(\bby_{c}|\bbc_a,\bbc_b,\bbx_{\{c\}}) \nonumber\\
\overset{(\mathrm{V})}{\leq}&H(\bbc_a)+|\bbx_{\{b\}}|-|\bbx_{\{a\},\{b\}}|+|\bby_b|+ \nonumber\\
&|\bbx_{\{c\}}|-H(\bbx_{\{a\},\{c\}},\bbx_{\{b\},\{c\}},\bbx_{\{a,b\},\{c\}})+H(\bby_{c}|\bbc_a,\bbc_b,\bbx_{\{c\}})\nonumber\\
\overset{(\mathrm{VI})}{=}&H(\bbc_a)+|\bbx_{\{b\}}|-|\bbx_{\{a\},\{b\}}|+|\bby_b|+ \nonumber\\
&|\bbx_{\{c\}}|-|\bbx_{\{a\},\{c\}}|-|\bbx_{\{b\},\{c\}}|-|\bbx_{\{a,b\},\{c\}}|+|\bbx_{\{a\},\{b\},\{c\}}|+H(\bby_{c}|\bbc_a,\bbc_b,\bbx_{\{c\}})\nonumber\\
\overset{(\mathrm{VII})}{\leq}&|\bbc_a|+|\bbc_b|+|\bbx_{\{c\}}|-(|\bbx_{\{a\},\{b\}}|+|\bbx_{\{a\},\{c\}}|+|\bbx_{\{b\},\{c\}}|-|\bbx_{\{a\},\{b\},\{c\}}|)\nonumber\\
       &-|\bbx_{\{a,b\},\{c\}}|+H(\bby_{c}|\bbc_a,\bbc_b,\bbx_{\{c\}}), \label{eq1_(n,N,k,k+2)}
\end{align}
where $(\mathrm{III})$ follows from the assumption that $\bbc_\ell=(\bbx_{\{\ell\}},\bby_{\ell})$ for each $\ell\in [m]$, $(\mathrm{IV})$ follows from
$$H(\bbx_{\{b\}}|\bbc_a)\leq H(\bbx_{\{b\}}|\bbx_{\{a\},\{b\}})= |\bbx_{\{b\}}|-|\bbx_{\{a\},\{b\}}|$$
and $H(\bby_{b}|\bbc_a,\bbx_{\{b\}})\leq |\bby_b|$, $(\mathrm{V})$ follows from
\begin{align*}
  H(\bbx_{\{c\}}|\bbc_a,\bbc_b)&\leq H(\bbx_{\{c\}}|\bbx_{\{a\},\{c\}},\bbx_{\{b\},\{c\}},\bbx_{\{a,b\},\{c\}})\\
  &= |\bbx_{\{c\}}|-H(\bbx_{\{a\},\{c\}},\bbx_{\{b\},\{c\}},\bbx_{\{a,b\},\{c\}}),
\end{align*}
$(\mathrm{VI})$ follows from the conclusion of Lemma \ref{lem: LB for m=k+2} that $\bbx_{\{a,b\},\{c\}}$ is independent with $\bbx_{\{a\},\{c\}}$ and $\bbx_{\{b\},\{c\}}$ and
\begin{align*}
  H(\bbx_{\{a\},\{c\}},\bbx_{\{b\},\{c\}},\bbx_{\{a,b\},\{c\}})&= |\bbx_{\{a,b\},\{c\}}|+|\bbx_{\{a\},\{c\}}|+|\bbx_{\{b\},\{c\}}|-I(\bbx_{\{a\},\{c\}};\bbx_{\{b\},\{c\}})\\
  &=|\bbx_{\{a,b\},\{c\}}|+|\bbx_{\{a\},\{c\}}|+|\bbx_{\{b\},\{c\}}|-|\bbx_{\{a\},\{b\},\{c\}}|,
\end{align*}
$(\mathrm{VII})$ follows from $|\bbc_{\ell}|=|\bbx_{\{\ell\}}|+|\bby_{\ell}|$.

Next, we upper bound $H(\bby_{c}|\bbc_a,\bbc_b,\bbx_{\{c\}})$ by the following claim.
\begin{claim}\label{claim_(k,k+2)}
$H(\bby_{c}|\bbc_a,\bbc_b,\bbx_{\{c\}})\leq |\bby_c|-\sum_{\{d,e\}\in \binom{[m]\setminus\{a,b,c\}}{2}}(|\bby_{c,a,\{d,e\}}|+|\bby_{c,b,\{d,e\}}|)$.
\end{claim}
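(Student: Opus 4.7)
The plan is to isolate a specific sub-vector $\bby_c^\star$ of $\bby_c$ that is a \emph{deterministic} function of $(\bbc_a,\bbc_b)$ alone, and then to apply the chain rule for entropy. By Lemma~\ref{lem: property of response matrix} we may assume $\bbc_c=(\bbx_{\{c\}},\bby_c)$ after a basis transformation, and each coordinate $\bby_c(s)$ is exactly the response of the $c$-th bucket used in the recovery of the associated symbol $x_{i_{c,s}}$.

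First, I would identify the deterministic coordinates. Let $\bby_c^\star$ consist of those entries $\bby_c(s)$ for which $x_{i_{c,s}}\in\bbx_{\{c,a\},\{d,e\}}$ or $x_{i_{c,s}}\in\bbx_{\{c,b\},\{d,e\}}$ for some $\{d,e\}\in\binom{[m]\setminus\{a,b,c\}}{2}$. Consider the first case. Because $m=k+2$ and the $k$ recovery sets of $x_{i_{c,s}}$ partition $[m]$, the two size-$2$ recovery sets $\{c,a\}$ and $\{d,e\}$ use up the four coordinates $\{a,c,d,e\}$, and the remaining $k-2$ recovery sets are singletons covering $[m]\setminus\{a,c,d,e\}$; in particular $\{b\}$ is one of those singletons since $b\notin\{a,c,d,e\}$. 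Therefore $x_{i_{c,s}}$ appears as a coordinate of $\bbx_{\{b\}}$, and hence as a coordinate of the bucket $\bbc_b$. Moreover, the size-$2$ recovery set $\{c,a\}$ yields an identity of the form $x_{i_{c,s}}=\bby_c(s)+f(\bbc_a)$ for some $\Fq$-linear $f$, so $\bby_c(s)=x_{i_{c,s}}-f(\bbc_a)$ is completely determined by $(\bbc_a,\bbc_b)$. The symmetric argument with $a$ and $b$ interchanged handles the second case.

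Next, I would finish by a short entropy computation. The two families indexing $\bby_c^\star$ are disjoint, since the (unordered) pair of size-$2$ recovery sets of any information symbol is unique by Lemma~\ref{lem: LB for m=k+2} and $\{c,a\}\neq\{c,b\}$; hence
$$|\bby_c^\star|=\sum_{\{d,e\}\in\binom{[m]\setminus\{a,b,c\}}{2}}\bigl(|\bby_{c,a,\{d,e\}}|+|\bby_{c,b,\{d,e\}}|\bigr).$$
Because $H(\bby_c^\star\mid\bbc_a,\bbc_b,\bbx_{\{c\}})=0$ by the previous step, the chain rule together with the bound $H(\bbu)\leq|\bbu|$ for any $\Fq$-valued vector $\bbu$ yields
\begin{align*}
H(\bby_c\mid\bbc_a,\bbc_b,\bbx_{\{c\}})
&=H(\bby_c\setminus\bby_c^\star\mid\bbc_a,\bbc_b,\bbx_{\{c\}},\bby_c^\star)\\
&\leq|\bby_c|-|\bby_c^\star|,
\end{align*}
which is exactly the claimed inequality.

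The main obstacle I expect is justifying the identity $\bby_c(s)=x_{i_{c,s}}-f(\bbc_a)$: one has to verify carefully that, under the normalization of Lemma~\ref{lem: property of response matrix}, the coordinate $\bby_c(s)$ is literally the $c$-bucket response used in recovering $x_{i_{c,s}}$, so that the size-$2$ recovery relation contains only one unknown on each bucket side. Once this identification is in place, the remainder of the argument is bookkeeping: checking that $b$ must be a singleton recovery coordinate (which is where the hypothesis $m=k+2$ and the restriction $\{d,e\}\subseteq[m]\setminus\{a,b,c\}$ are essential), and verifying the disjointness of the two index families that compose $\bby_c^\star$.
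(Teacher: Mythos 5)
Your proposal is correct and follows essentially the same route as the paper's proof: identify the coordinates of $\bby_c$ whose associated symbols have $\{c,a\}$ (or $\{c,b\}$) and $\{d,e\}$ as size-two recovery sets, observe that the partition of $[m]=[k+2]$ forces $b$ (resp.\ $a$) to be a singleton recovery set so that the symbol lies in $\bbx_{\{b\}}\subseteq\bbc_b$, and then recover $\bby_c(s)$ from that symbol together with the $a$-side response in $\bbc_a$. The paper phrases this as ``$y$ can be recovered by $x$ and $\bbc_a$'' and then reads off $H(\bby_c\mid\bbc_a,\bbc_b,\bbx_{\{c\}})\le|\bby_c|-|\bby_c^\star|$ in one line; your version makes the disjointness of the two index families (via Lemma~\ref{lem: LB for m=k+2}) and the final chain-rule step explicit, which is a slightly more careful rendering of the same argument rather than a genuinely different one.
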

\begin{proof}[Proof of Claim \ref{claim_(k,k+2)}]
We only need to show that $\bbc_a,\bbc_b,\bbx_{\{c\}}$ together can recover $\bby_{c,a,\{d,e\}}$ and $\bby_{c,b,\{d,e\}}$ for any $\{d,e\}\in \binom{[m]\setminus\{a,b,c\}}{2}$. Fix a $y\in \bby_{c,a,\{d,e\}}$. Recall that $y$ is the response from $\bbc_{c}$ in recovery of some symbol $x\in \bbx_{\{a,c\},\{d,e\}}$. By $\{d,e\}\subseteq [m]\setminus\{a,b,c\}$, we know that: 1) $x\in \bbx_{\{b\}}$; 2) $x$ can be recovered by $y$ and $z$, where $z$ is the response from $\bbc_a$ in recovery of $x$. Thus, $y$ can be recovered by $x$ and $\bbc_a$. By $x\in \bbx_{\{b\}}$ and $\bbc_b=(\bbx_{\{b\}},\bby_b)$, this shows that $\bby_{c,a,\{d,e\}}$ can be recovered by $\bbc_a,\bbc_b$. Similarly, we can show that $\bby_{c,b,\{d,e\}}$ can also be recovered by $\bbc_a,\bbc_b$. This concludes the result.
\end{proof}

Now, by Claim \ref{claim_(k,k+2)} and $|\bbc_i|=N_i$, (\ref{eq1_(n,N,k,k+2)}) implies that
\begin{align*}
n\leq & N_a+N_b+N_c-(|\bbx_{\{a\},\{b\}}|+|\bbx_{\{a\},\{c\}}|+|\bbx_{\{b\},\{c\}}|-|\bbx_{\{a\},\{b\},\{c\}}|) \nonumber\\
&-|\bbx_{\{a,b\},\{c\}}|-\sum_{\{d,e\}\in \binom{[m]\setminus\{a,b,c\}}{2}}(|\bby_{c,a,\{d,e\}}|+|\bby_{c,b,\{d,e\}}|)
\end{align*}
holds for any $\{a,b,c\}\in \binom{[m]}{3}$. This leads to
\begin{align}\label{eq4_(n,N,k,k+2)}
3n\leq & 3(N_a+N_b+N_c)-3(|\bbx_{\{a\},\{b\}}|+|\bbx_{\{a\},\{c\}}|+|\bbx_{\{b\},\{c\}}|-|\bbx_{\{a\},\{b\},\{c\}}|)\nonumber\\
&-(|\bbx_{\{a,b\},\{c\}}|+|\bbx_{\{a,c\},\{b\}}|+|\bbx_{\{b,c\},\{a\}}|)\nonumber\\
&-\sum_{\{d,e\}\in \binom{[m]\setminus\{a,b,c\}}{2}}(|\bby_{c,a,\{d,e\}}|+|\bby_{c,b,\{d,e\}}|+|\bby_{a,b,\{d,e\}}|+|\bby_{a,c,\{d,e\}}|+|\bby_{b,a,\{d,e\}}|+|\bby_{b,c,\{d,e\}}|).
\end{align}
By summing both sides of (\ref{eq4_(n,N,k,k+2)}) over all $\{a,b,c\}\in \binom{[m]}{3}$, we have
\begin{align}\label{eq5_(n,N,k,k+2)}
3n\binom{m}{3}\leq & \binom{m-1}{2} \sum_{i\in [m]}3N_i-3\left((m-2)\sum_{\{a,b\}\in \binom{[m]}{2}}|\bbx_{\{a\},\{b\}}|-\sum_{\{a,b,c\}\in \binom{[m]}{3}}|\bbx_{\{a\},\{b\},\{c\}}|\right)\nonumber\\
&-\sum_{\{a,b,c\}\in \binom{[m]}{3}}(|\bbx_{\{a,b\},\{c\}}|+|\bbx_{\{a,c\},\{b\}}|+|\bbx_{\{b,c\},\{a\}}|)\nonumber\\
&-\sum_{\{a,b,c\}\in \binom{[m]}{3}}\sum_{\{d,e\}\in \binom{[m]\setminus\{a,b,c\}}{2}}(|\bby_{c,a,\{d,e\}}|+|\bby_{c,b,\{d,e\}}|+|\bby_{a,b,\{d,e\}}|\nonumber\\
&~~~~~~~~~~~~~~~~~~~~~~~~~~~~~~~~~~~~~+|\bby_{a,c,\{d,e\}}|+|\bby_{b,a,\{d,e\}}|+|\bby_{b,c,\{d,e\}}|).
\end{align}

Meanwhile, through double counting, we have
\begin{align*}
    \sum_{\{a,b,c\}\in \binom{[m]}{3}}\sum_{\{d,e\}\in \binom{[m]\setminus\{a,b,c\}}{2}}&(|\bby_{c,a,\{d,e\}}|+|\bby_{c,b,\{d,e\}}|+|\bby_{a,b,\{d,e\}}|+|\bby_{a,c,\{d,e\}}|+|\bby_{b,a,\{d,e\}}|+|\bby_{b,c,\{d,e\}}|)\\
    =(m-4)&\sum_{\ell\in [m]}\sum_{\{a,b,c\}\in \binom{[m]\setminus \{\ell\}}{3}}(|\bby_{\ell,a,\{b,c\}}|+|\bby_{\ell,b,\{a,c\}}|+|\bby_{\ell,c,\{a,b\}}|).
\end{align*}
By plugging this equality together with (\ref{eq00d_(n,N,k,k+2)}), (\ref{eq00e_(n,N,k,k+2)}), (\ref{eq00f_(n,N,k,k+2)}) from Corollary \ref{coro: LB for m=k+2} and $N=\sum_{i\in [m]}N_i$ into (\ref{eq5_(n,N,k,k+2)}), we have
\begin{align}\label{eq6_(n,N,k,k+2)}
3n\binom{m}{3}\leq & 3N\binom{m-1}{2}-\left(m(m-4)(m-5)M_1+(m-\frac{1}{2})(m-3)(m-4)M_2\right)\nonumber\\
&-2(m-4)M_1\nonumber\\
&-(m-4)\sum_{a\in [m]}\sum_{\{b,c,d\}\in \binom{[m]\setminus \{a\}}{3}}(|\bby_{a,b,\{c,d\}}|+|\bby_{a,c,\{b,d\}}|+|\bby_{a,d,\{b,c\}}|)\nonumber\\
\overset{(\mathrm{VIII})}{\leq} & 3N\binom{m-1}{2}-\left(m(m-4)(m-5)M_1+(m-\frac{1}{2})(m-3)(m-4)M_2\right)\nonumber\\
&-2(m-4)M_1-(m-4)\left(N-(m-4)n-4M_2\right)\nonumber\\
= & (\frac{3}{2}m^2-\frac{11}{2}m+7)N+(m-4)^2n-(m-4)\left((m^2-5m+2)M_1+(m^2-\frac{7}{2}m-\frac{5}{2})M_2\right)\nonumber\\
\overset{(\mathrm{IX})}{=} & (\frac{3}{2}m^2-\frac{11}{2}m+7)N-(m-4)(m^2-6m+6)n-\frac{3}{2}(m-3)(m-4)M_2,
\end{align}
where $(\mathrm{VIII})$ follows from (\ref{eq01_(n,N,k,k+2)}) and $(\mathrm{IX})$ follows from $n=M_1+M_2$. Finally, since
$M_2\geq 0$ and $m=k+2$, (\ref{eq6_(n,N,k,k+2)}) implies that
$$N\geq (k-2+\frac{4k+16}{3k^2+k+4})n.$$

This completes the proof.
\end{proof}

\section{Constructions of batch array codes with low redundancy}\label{sec: constructions}

In this section we provide three different ways to design BACs, where the first two are explicit constructions and the last one is a random construction.

\subsection{A construction through cyclic shifted sets}\label{subsec: constructions1}

\begin{construction}\label{Con: $(n, N=(2k-m+(m-k)^2/k)n, k, m)$-BAC}
Let $n$, $k$ and $m$ be positive integers such that $k\mid n$, $k<m< 2k$, and $(m-k) \mid k$. For $\ell\in [k]$, define
\begin{equation}\label{eq1-Thm:$(n, N=(2k-m+(m-k)^2/k)n, k, m)$-BAC}
    P_\ell=\{(\ell-1+a)\frac{n}{k}+b \pmod{n}:0\leq a\leq m-k-1,1\leq b\leq \frac{n}{k}\}.
\end{equation}
Clearly, each $P_\ell$ is a cyclic shift of $P_1$ by an interval of length $(\ell-1)\frac{n}{k}$. We call $\{P_1,\ldots,P_k\}$ a family of cyclic shifted sets of $P_1$. Then, for $\bbx\in\mathbb{F}_q^n$, define $\cC(\bbx)=(\bbc_1,\bbc_2,\ldots, \bbc_m)$  as follows:
\begin{itemize}
    \item For each $\ell\in [k]$, let $$\bbc_\ell=\bbx|_{[n]\backslash P_\ell}.$$
    \item For each $\ell\in [k+1, m]$, let
    \begin{align*}
        \bbc_{\ell} = \sum_{t=0}^{\frac{k}{m-k}-1}(x_{t(m-k) \frac{n}{k}+1}, x_{t(m-k) \frac{n}{k}+2}, \ldots, x_{t(m-k) \frac{n}{k}+(m-k)\frac{n}{k}}).
    \end{align*}
\end{itemize}

Note that the buckets $\bbc_{k+1},\ldots,\bbc_{m}$ are identical.
\end{construction}

To show that the code obtained in Construction \ref{Con: $(n, N=(2k-m+(m-k)^2/k)n, k, m)$-BAC} is indeed a BAC, we first prove the following lemma.
\begin{lemma}\label{lem: $(n, N=(2k-m+(m-k)^2/k)n, k, m)$-BAC}
    Let $\cC$ be the code obtained by Construction \ref{Con: $(n, N=(2k-m+(m-k)^2/k)n, k, m)$-BAC}. For each $\ell\in [k]$ and $i\in [n]$, if $x_i\notin \bbc_{\ell}$, then $x_i$ can be recovered by $\{\bbc_{\ell},\bbc_{\ell'}\}$ for any $\ell'\in [k+1,m]$.
\end{lemma}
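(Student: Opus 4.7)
The plan is to view $[n]$ as partitioned into $k$ blocks $B_1,\ldots,B_k$ of size $n/k$, where $B_j=\{(j-1)n/k+1,\ldots,jn/k\}$. With this notation, $P_\ell$ consists of the $m-k$ cyclically consecutive blocks $B_\ell,B_{\ell+1},\ldots,B_{\ell+m-k-1}$ (with indices read mod $k$ in $[1,k]$), so $\bbc_\ell=\bbx|_{[n]\setminus P_\ell}$ stores every symbol outside these $m-k$ blocks. For any $\ell'\in[k+1,m]$ the bucket $\bbc_{\ell'}$ has length $(m-k)\tfrac{n}{k}$, and its $j$-th entry equals $\sum_{t=0}^{k/(m-k)-1}x_{t(m-k)\frac{n}{k}+j}$, which combines one information symbol from each of $k/(m-k)$ equally spaced blocks at a common internal position.

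Given $i\in P_\ell$, I would first decompose $i=t_0(m-k)\tfrac{n}{k}+j_0$ uniquely with $j_0\in[1,(m-k)\tfrac{n}{k}]$, and further write $j_0=s\cdot\tfrac{n}{k}+r$ with $s\in[0,m-k-1]$ and $r\in[1,n/k]$. The indices appearing in $\bbc_{\ell'}[j_0]$ then all share the same internal position $r$ within their blocks, and their block indices form the arithmetic progression $\{s+1,\,s+1+(m-k),\,s+1+2(m-k),\ldots\}\subseteq[1,k]$, which, since $(m-k)\mid k$, consists of \emph{all} block indices in $[1,k]$ congruent to $s+1$ modulo $m-k$. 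The $P_\ell$-window is a cyclic interval of $m-k$ consecutive block indices mod $k$, so it meets every residue class mod $m-k$ in exactly one element; consequently $P_\ell$ intersects the block indices appearing in the sum in exactly one point, which must be the block containing $i$ itself.

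Once this combinatorial fact is in hand, recovery is immediate: all summands of $\bbc_{\ell'}[j_0]$ other than $x_i$ lie outside $P_\ell$ and are therefore stored explicitly as coordinates of $\bbc_\ell$, so
\[
x_i=\bbc_{\ell'}[j_0]-\sum_{t\ne t_0}x_{t(m-k)\frac{n}{k}+j_0}
\]
is an $\mathbb{F}_q$-linear combination of entries of $\bbc_\ell$ and $\bbc_{\ell'}$, as required. The only genuine step is the cyclic intersection claim; it is where the divisibility hypothesis $(m-k)\mid k$ enters, and it reduces to the elementary fact that any $m-k$ consecutive integers cover every residue class mod $m-k$ exactly once. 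I do not expect any deeper obstacle, since both the position within a block and the choice of which block in the progression belongs to $P_\ell$ are pinned down uniquely by the decomposition of $i$.
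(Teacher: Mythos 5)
Your proof is correct and takes essentially the same approach as the paper's: both isolate a single information symbol $x_i = x_{t_0(m-k)\frac{n}{k}+j_0}$ from the parity sum $\bbc_{\ell'}[j_0]=\sum_{t}x_{t(m-k)\frac{n}{k}+j_0}$ by observing that the arithmetic progression of indices $\{t(m-k)\frac{n}{k}+j_0\}_t$ meets $P_\ell$ exactly once (in $i$ itself), so every other summand lies in $[n]\setminus P_\ell$ and is a stored coordinate of $\bbc_\ell$. The paper phrases the intersection fact as ``a cyclic window of length $(m-k)\frac{n}{k}$ contains at most one term of a progression with step $(m-k)\frac{n}{k}$'' and verifies only the case $\ell\le 2k-m+1$, deferring the rest by symmetry; your block/residue-class framing (the window $\{\ell,\ldots,\ell+m-k-1\}$ of $m-k$ consecutive block indices hits each residue mod $m-k$ once) is the same observation but handles all $\ell$ uniformly in a single statement, which is a modest stylistic improvement.
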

\begin{proof}
    As $\bbc_{k+1},\ldots,\bbc_{m}$ are identical, we just need to show that $x_i$ can be recovered by the symbols from $\bbc_{\ell}$ and $\bbc_{k+1}$.  Assume that $i \equiv b \pmod{(m-k)\frac{n}{k}}$ with $1\leq b\leq (m-k)\frac{n}{k}$, and let $a=\frac{(i-b)k}{(m-k)n}$, i.e., $i = a(m-k)\frac{n}{k} + b$. By Construction \ref{Con: $(n, N=(2k-m+(m-k)^2/k)n, k, m)$-BAC}, $x_i\notin \bbc_\ell$ implies that $i \in P_\ell$. For simplicity, we only prove the case when $\ell\leq 2k-m+1$, the proof of the other case is similar. In this case, we have
    $$\frac{n}{k}(\ell-1)+1\leq a(m-k)\frac{n}{k} + b\leq \frac{n}{k}(\ell-1) + \frac{n}{k}(m-k).$$
    As $|P_\ell|=(m-k)\frac{n}{k}$, for any $t\in [0, \frac{k}{m-k}-1]\setminus\{a\}$, we have $t(m-k)\frac{n}{k} + b \notin P_\ell$. This implies that $x_{t(m-k)\frac{n}{k}+b}\in \bbc_\ell$ for $t\in [0, \frac{k}{m-k}-1]\setminus\{a\}$. Note that $\sum_{t=0}^{\frac{k}{m-k}-1}x_{t(m-k)\frac{n}{k}+b}\in \bbc_{k+1}$, and thus we can recover $x_i$ by the symbols from $\bbc_\ell$ and $\bbc_{k+1}$.
\end{proof}

\begin{theorem}\label{Thm: $(n, N=(2k-m+(m-k)^2/k)n, k, m)$-BAC}
    The code $\cC$ obtained by Construction \ref{Con: $(n, N=(2k-m+(m-k)^2/k)n, k, m)$-BAC} is an $(n, N=(2k-m+\frac{(m-k)^2}{k})n, k, m)$-BAC.
\end{theorem}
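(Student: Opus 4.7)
The total length follows directly from the construction: the first $k$ buckets each have length $n-|P_\ell|=(2k-m)n/k$, while the last $m-k$ buckets each have length $(m-k)n/k$, giving $N=(2k-m+(m-k)^2/k)n$. The substantive task is to verify that every batch request $\ms{i_1,\ldots,i_k}$ admits a valid partition of $[m]$ into recovery sets.

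The plan is to partition $[n]$ into the consecutive intervals $I_s=[(s-1)n/k+1,sn/k]$ for $s\in[k]$ and to assign to each request $i_j$ the unique index $s_j\in[k]$ with $i_j\in I_{s_j}$. From the definition $P_\ell=I_\ell\cup I_{\ell+1}\cup\cdots\cup I_{\ell+m-k-1}$ (indices mod $k$), one checks that $i_j\notin P_\ell$ if and only if $\ell$ lies in the arc $A_j:=\{s_j+1,s_j+2,\ldots,s_j+2k-m\}$ on $\mathbb{Z}/k\mathbb{Z}$, a set of size $2k-m$.

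The recovery scheme I would use is to declare $m-k$ of the $k$ requests as \emph{pair} requests and the remaining $2k-m$ as \emph{single} requests. A single request $j$ is given recovery set $\{\ell_j\}$ with $\ell_j\in A_j\subseteq[k]$, so that $x_{i_j}$ lies directly in $\bbc_{\ell_j}$; a pair request $j$ is given $\{\ell_j,\ell'_j\}$ with $\ell_j\in[k]$ arbitrary and $\ell'_j\in[k+1,m]$. By Lemma \ref{lem: $(n, N=(2k-m+(m-k)^2/k)n, k, m)$-BAC} (or trivially when $i_j\notin P_{\ell_j}$), every such pair recovers $x_{i_j}$, so pair requests carry no arc constraint. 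Requiring the $\ell_j$'s to form a bijection between requests and $[k]$ and the $\ell'_j$'s to form a bijection between pair requests and $[k+1,m]$, the problem reduces to finding a matching of size at least $2k-m$ in the bipartite graph $G$ on $[k]\times[k]$ whose edges are $\{(j,\ell):\ell\in A_j\}$; any such matching can then be completed to the desired partition of $[m]$ by assigning the remaining $m-k$ requests to the unused buckets in $[k]$ and to $[k+1,m]$ arbitrarily.

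The main obstacle is verifying the deficiency form of Hall's theorem for $G$, namely $|S|-|N_G(S)|\le m-k$ for every $S\subseteq[k]$. Since $N_G(S)=\bigcup_{j\in S}A_j$ depends only on $T:=\{s_j:j\in S\}$, the worst case fixes $T$ and maximizes $|S|=\sum_{s\in T}m_s\le k$, where $m_s=|\{j:s_j=s\}|$. The key cyclic covering estimate I plan to establish is $|\bigcup_{s\in T}A_s|\ge\min(k,2k-m+|T|-1)$, with equality when $T$ is a contiguous arc; this follows by counting length-$(2k-m)$ arcs on the cycle that avoid $T$ within the gaps between consecutive $T$-elements, a total maximized when $T$ is clustered into a single arc. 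Granting this estimate, a two-case split on whether $|T|\ge m-k+1$ (forcing $|N_G(T)|=k$) or $|T|\le m-k$ (giving $|S|-|N_G(T)|\le k-(2k-m+|T|-1)=m-k-|T|+1\le m-k$) confirms Hall's condition, so the required matching exists and the recovery scheme above yields a valid partition of $[m]$.
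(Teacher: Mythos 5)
Your proof is correct, and it follows the same overall architecture as the paper's (partition the recovery into $2k-m$ singleton recovery sets from $\{\bbc_1,\dots,\bbc_k\}$ plus $m-k$ pair recovery sets via Lemma~\ref{lem: $(n, N=(2k-m+(m-k)^2/k)n, k, m)$-BAC}, then invoke a Hall-type matching result). Where it diverges is in the matching step, and there you have done strictly more work than necessary. The paper simply fixes the first $2k-m$ requests and builds the bipartite graph on $A\cup B$ with $A=[2k-m]$, $B=[k]$, $(j,\ell)\in E\Leftrightarrow x_{i_j}\in\bbc_\ell$. Since every index $i$ lies in exactly $m-k$ of the sets $P_1,\dots,P_k$, each vertex of $A$ has degree exactly $k-(m-k)=2k-m=|A|$, so for any nonempty $S\subseteq A$ one has $|N(S)|\geq 2k-m\geq|S|$; Hall's condition is automatic and a complete matching of $A$ into $B$ exists, no structural information about the arcs needed. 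You instead keep all $k$ requests in the graph and apply the deficiency form of Hall's theorem, which is a fine alternative; but the cyclic covering estimate $|\bigcup_{s\in T}A_s|\geq\min(k,\,2k-m+|T|-1)$ is not required for it. The same ``degree equals $2k-m$'' fact already gives the deficiency bound directly: for any nonempty $S\subseteq[k]$, $|N_G(S)|\geq|A_j|=2k-m$ for any single $j\in S$, and $|S|\leq k$, so $|S|-|N_G(S)|\leq k-(2k-m)=m-k$. Thus both routes reach the target, but the arc-counting lemma you set up to establish (and the case split on $|T|$) can be deleted without loss; what it does buy you, if you care, is an explicit description of the neighbourhood structure, showing that it shrinks to a single arc exactly when $T$ is contiguous.
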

\begin{proof}
     By Definition \ref{def_BAC}, it suffices to show that for any multi-set of requests $\ms{i_1,\ldots,i_k}$, there are $k$ mutually disjoint subsets $R_1,\ldots,R_k$ of $[m]$ such that for each $j\in [k]$, $R_j$ is a recovery set of $x_{i_j}$.

     First note that, due to the cyclic structures of $P_1,\dots,P_k$, each index $i\in [n]$ is covered exactly $m-k$ times by these sets, and thus exactly $m-k$ buckets within ${\bbc_1, \bbc_2, \ldots, \bbc_k}$ do not contain $x_{i}$. Define $G=(V, E)$ as the bipartite graph with vertex set $V=A\cup B$, where $A = [2k-m]$ and $B=[k]$. For any $j\in [2k-m]$ and $\ell \in [k]$, $(j, \ell)\in E$ if and only if $x_{i_j}\in \bbc_{\ell}$. Since each vertex in $A$ has degree $k-(m-k)=2k-m$, by Hall's marriage theorem\cite{Hall1935OnRO}, there is a complete matching from $A$ to $B$. Thus, there are $2k-m$ different buckets $\bbc_{\ell_1}, \bbc_{\ell_2}, \ldots, \bbc_{\ell_{2k-m}}$ such that $x_{i_j}\in\bbc_{\ell_j}$ for each $j\in [2k-m]$. In this way we find the recovery sets for the first $2k-m$ requests and there are $m-k$ unused buckets among ${\bbc_1, \bbc_2, \ldots, \bbc_k}$.

     Next we construct $x_{i_j}$'s recovery sets for $2k-m+1\leq j \leq k$. Arbitrarily choose a one-to-one correspondence between $\{x_{i,j}:2k-m+1\leq j \leq k\}$ and the set of unused $m-k$ buckets. The bucket corresponding to $x_{i_j}$ can either handle the request itself, or handel the request with the help of a bucket within ${\bbc_{k+1}, \ldots, \bbc_m}$ according to Lemma \ref{lem: $(n, N=(2k-m+(m-k)^2/k)n, k, m)$-BAC}.

     Thus the code $\cC$ obtained by Construction \ref{Con: $(n, N=(2k-m+(m-k)^2/k)n, k, m)$-BAC} is indeed an $(n, N, k, m)$-BAC, where the computation of $N$ is straightforward.
\end{proof}

\begin{remark}
    By Theorem \ref{Thm: general LB_2} and Theorem \ref{Thm: $(n, N=(2k-m+(m-k)^2/k)n, k, m)$-BAC}, for $m=k+1$, the code obtained from Construction \ref{Con: $(n, N=(2k-m+(m-k)^2/k)n, k, m)$-BAC} is an $(n,(k-1+\frac{1}{k})n,k,k+1)$-BAC with optimal code length.
\end{remark}

Theorem \ref{Thm: $(n, N=(2k-m+(m-k)^2/k)n, k, m)$-BAC} provides an upper bound on $N_{B}(n, k, m)$ for $k<m<2k$. Note that as a function of $m$,
$$2k-m+\frac{(m-k)^2}{k} = \frac{1}{k}(m-\frac{3k}{2})^2 + \frac{3k}{4}$$
achieves its minimum when $m=\frac{3k}{2}$. Hence, due to the fact that $N_{B}(n, k, m)\leq N_{B}(n, k, m')$ for any $m\geq m'$, we have the following direct corollary from Theorem \ref{Thm: $(n, N=(2k-m+(m-k)^2/k)n, k, m)$-BAC}.
\begin{corollary}\label{coro: UB on N_B by Cons 1}
    When $m \leq \frac{3k}{2}$ and $(m-k)|k$,
    \begin{align*}
        N_{P}(n, k, m)\leq N_{B}(n, k, m)\leq (2k-m+\frac{(m-k)^2}{k})n,
    \end{align*}
    When $m> \frac{3k}{2}$ and $(m-k)|k$,
    \begin{align*}
        N_{P}(n, k, m)\leq N_{B}(n, k, m)\leq N_{B}(n, k, \lfloor\frac{3k}{2}\rfloor)\leq \frac{3k}{4}n + \frac{n}{k}(\lfloor\frac{3k}{2}\rfloor-\frac{3k}{2})^2.
    \end{align*}
\end{corollary}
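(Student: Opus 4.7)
The plan is to deduce both parts of the corollary from Theorem \Tref{Thm: $(n, N=(2k-m+(m-k)^2/k)n, k, m)$-BAC}, together with the monotonicity assertion $N_B(n,k,m)\le N_B(n,k,m')$ for $m\ge m'$ invoked just before the corollary, and an elementary completion-of-the-square for the quantity $f(m)=2k-m+(m-k)^2/k$.

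The inequality $N_P(n,k,m)\le N_B(n,k,m)$ is immediate from the definitions. For Part 1 (i.e.\ $k<m\le 3k/2$ with $(m-k)\mid k$), Theorem \Tref{Thm: $(n, N=(2k-m+(m-k)^2/k)n, k, m)$-BAC} directly produces an $(n,f(m)n,k,m)$-BAC, so $N_B(n,k,m)\le f(m)n$ as required.

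For Part 2, I would first justify monotonicity by a bucket-splitting gadget. Given an optimal $(n,N,k,m')$-BAC, pick any bucket $\bbc_\ell$ of length at least $2$ and partition its generator matrix columnwise as $\bbG_\ell=(\bbG_\ell^{(1)},\bbG_\ell^{(2)})$; replace $\bbc_\ell$ by the two buckets $\bbc_\ell^{(1)}=\bbx\bbG_\ell^{(1)}$ and $\bbc_\ell^{(2)}=\bbx\bbG_\ell^{(2)}$, both placed into whichever recovery set originally used $\bbc_\ell$. The old response $f_\ell(\bbc_\ell)=\bbc_\ell^{(1)}\bbf_\ell^{(1)}+\bbc_\ell^{(2)}\bbf_\ell^{(2)}$ is reproduced as the sum of two separate responses from the two new buckets, so decoding is preserved and the total length $N$ is unchanged. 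Iterating turns an $(n,N,k,m')$-BAC into an $(n,N,k,m)$-BAC for any $m\ge m'$ with $m\le N$, which covers all relevant parameter regimes.

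Combining monotonicity with Theorem \Tref{Thm: $(n, N=(2k-m+(m-k)^2/k)n, k, m)$-BAC} at $m'=\lfloor 3k/2\rfloor$ yields $N_B(n,k,m)\le N_B(n,k,\lfloor 3k/2\rfloor)\le f(\lfloor 3k/2\rfloor)\,n$. Completing the square in $f$,
\begin{equation*}
f(m)=\tfrac{1}{k}\bigl(m^2-3km+3k^2\bigr)=\tfrac{3k}{4}+\tfrac{1}{k}\bigl(m-\tfrac{3k}{2}\bigr)^2,
\end{equation*}
and evaluating at $m'=\lfloor 3k/2\rfloor$ produces the displayed upper bound $\tfrac{3k}{4}n+\tfrac{n}{k}(\lfloor 3k/2\rfloor-\tfrac{3k}{2})^2$. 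The main obstacle, and the place where I would spend the most care, is the monotonicity step: besides verifying that the splitting gadget preserves the BAC property (including tracking how the partition of $[m]$ into recovery sets lifts to $[m+1]$), I need the side-condition $(\lfloor 3k/2\rfloor-k)\mid k$ to apply Theorem \Tref{Thm: $(n, N=(2k-m+(m-k)^2/k)n, k, m)$-BAC} at $m'$, which is automatic for even $k$ (where $m'-k=k/2$) and which the statement of the corollary tacitly assumes.
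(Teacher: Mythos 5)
Your argument matches the paper's route exactly: Theorem~\ref{Thm: $(n, N=(2k-m+(m-k)^2/k)n, k, m)$-BAC} supplies the construction, the completion of the square $2k-m+(m-k)^2/k=\frac{3k}{4}+\frac{1}{k}\bigl(m-\frac{3k}{2}\bigr)^2$ identifies the minimizing $m'=\lfloor 3k/2\rfloor$, and the monotonicity $N_B(n,k,m)\le N_B(n,k,m')$ for $m\ge m'$ transfers the bound. The one step the paper leaves implicit is that monotonicity, which it merely asserts; your columnwise bucket-splitting gadget is a correct and natural way to prove it (the two halves' responses sum to the original response, and both new indices are inserted into whichever recovery set contained the original bucket, so the partition of the servers into $k$ recovery sets lifts cleanly), subject only to the mild requirement that the starting code have a bucket of length at least $2$ — automatic for the Theorem~\ref{Thm: $(n, N=(2k-m+(m-k)^2/k)n, k, m)$-BAC} code at $m'=\lfloor 3k/2\rfloor$ once $n\ge 2k$. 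You are also right to flag that applying Theorem~\ref{Thm: $(n, N=(2k-m+(m-k)^2/k)n, k, m)$-BAC} at $m'=\lfloor 3k/2\rfloor$ needs $(\lfloor 3k/2\rfloor-k)\mid k$: this is automatic for even $k$ (where $m'-k=k/2$), but among odd $k$ it holds only for $k=3$, since $\gcd\bigl((k-1)/2,\,k\bigr)=1$. The corollary's stated hypothesis $(m-k)\mid k$ constrains $m$, not $m'$, so this is a genuine (if minor and easily repaired) slip in the paper's statement that your proof correctly surfaces.
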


In \cite{IKOS04} and \cite{RSDG16}, a gadget lemma was used to construct general batch codes through batch codes with small parameters. Similar to their approach, we have the following variant of the gadget lemma in \cite{IKOS04}, which helps us to construct PIR array codes and BACs for large parameters.

\begin{lemma}[Gadget lemma]\label{lem: gadget lemma}
    The following hold for both PIR array codes and BACs:
    \begin{itemize}
        \item If there exists an $(n, N_1, k_1, m_1)$-BAC (PIR array code) and an $(n, N_2, k_2, m_2)$-BAC (PIR array code), then there is an $(n, N_1+N_2, k_1+k_2, m_1+m_2)$-BAC (PIR array code).
        \item If there exists an $(n_1, N_1, k_1, m_1)$-BAC (PIR array code) and an $(n_2, N_2, k_2, m_2)$-BAC (PIR array code), then there is an $(n_1+n_2, N_1+N_2, \min\{k_1,k_2\}, m_1+m_2)$-BAC (PIR array code).
        \item If there exists an $(n, N, k, m)$-BAC (PIR array code), then for any positive integer $c$, there is an $(cn, cN, k, m)$-BAC (PIR array code).
    \end{itemize}
\end{lemma}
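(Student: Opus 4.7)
The plan is to realise all three items by direct-sum constructions on bucket lists, relying on two elementary observations. First, any recovery set may be enlarged by appending buckets on which the local linear function is set identically to zero, without affecting correctness; this lets unused buckets be absorbed freely into any existing partition. Second, when the same $(n,N,k,m)$-code is applied to $c$ independent length-$n$ inputs, a single partition $R_1,\ldots,R_k$ of $[m]$ that serves a given local request pattern serves the same pattern inside \emph{any} of the $c$ blocks, because the encoding map is identical across blocks. The length bookkeeping $N_1+N_2$ and $cN$ is immediate once the constructions are written down, so the work lies entirely in exhibiting valid recovery partitions.

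For the first two items I would place the buckets of the two constituent codes side by side, so that $\bbc_\ell$ is the $\ell$-th bucket of the first code for $\ell\in[m_1]$ and the $(\ell-m_1)$-th bucket of the second code for $\ell\in[m_1+1,m_1+m_2]$. In the first item both codes encode the common $\bbx$, so given a multi-set of $k_1+k_2$ requests I would split it arbitrarily into sub-multi-sets of sizes $k_1$ and $k_2$, invoke each constituent code on its sub-multi-set, and take the union of the two resulting partitions of $[m_1]$ and $[m_1+1,m_1+m_2]$. In the second item the codes encode $\bbx^{(1)}$ and $\bbx^{(2)}$ independently; a multi-set of $k=\min\{k_1,k_2\}$ requests splits into $k^{(1)}$ requests targeting $\bbx^{(1)}$ and $k^{(2)}=k-k^{(1)}$ targeting $\bbx^{(2)}$, and I would invoke each constituent code after padding its sub-multi-set up to its maximum size $k_j$ with arbitrary dummy requests. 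The $k$ recovery sets serving the real requests form only a partial partition of $[m_1+m_2]$, but the first observation above lets me absorb each unused ``dummy'' bucket into any real recovery set to complete a legitimate partition.

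The third item is where the main idea lives. I would write $\bbx=(\bbx^{(1)},\ldots,\bbx^{(c)})\in(\mathbb{F}_q^n)^c$, encode each $\bbx^{(t)}$ separately by the given $(n,N,k,m)$-code to produce buckets $\bbc_\ell^{(t)}$, and let the combined bucket be $\bbc_\ell=(\bbc_\ell^{(1)},\ldots,\bbc_\ell^{(c)})\in\mathbb{F}_q^{cN_\ell}$. Each request $i_j\in[cn]$ decomposes canonically as $(t_j,p_j)\in[c]\times[n]$, where $t_j$ is the block index and $p_j$ is the local index within that block. The crucial step is to apply the original code to the single multi-set $\{\{p_1,\ldots,p_k\}\}$ of local indices, obtaining one partition $R_1,\ldots,R_k$ of $[m]$ together with linear functions $f_\ell^{(j)}$. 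By the second observation, $R_j$ with these functions recovers the $p_j$-th symbol inside any block, in particular inside block $t_j$; in the combined code I would then recover $x_{i_j}$ by applying $f_\ell^{(j)}$ to the block-$t_j$ coordinates of $\bbc_\ell$ and the zero function to the remaining coordinates. The PIR array code version of each item follows by the same arguments, working with the recovery set of one information symbol at a time in place of a partition indexed by a request multi-set.
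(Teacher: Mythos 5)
Your proposal is correct and follows essentially the same direct-sum/concatenation approach as the paper; the paper proves only the first item explicitly (concatenate bucket lists, split the $k_1+k_2$ requests between the two constituents) and asserts the other two are similar, while you spell out the padding-and-absorbing argument for the second item and the block-decomposition of indices for the third, both of which are exactly the ``similar'' arguments the paper has in mind.
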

\begin{proof}
    We only prove the first term, the proofs of the other two are similar.

    Let $\cC_1$ be an $(n, N_1, k_1, m_1)$-BAC (PIR array code) and $\cC_2$ be an $(n, N_2, k_2, m_2)$-BAC (PIR array code). For $\bbx\in \mathbb{F}_q^n$, let $\cC_1(\bbx) = (\bbc^{(1)}_1,\ldots,\bbc^{(1)}_{m_1})$ and $\cC_2(\bbx) = (\bbc^{(2)}_1,\ldots,\bbc^{(2)}_{m_2})$. Define $\cC(\bbx) = (\bbc^{(1)}_1,\ldots,\bbc^{(1)}_{m_1}, \bbc^{(2)}_1,\ldots,\bbc^{(2)}_{m_2})$. Next, we show that $\cC$ is an $(n, N_1+N_2, k_1+k_2, m_1+m_2)$-BAC (PIR array code).

    Fix any multiset of requests $\ms{i_1,i_2,\ldots, i_{k_1+k_2}}$. As $\cC_1$ is an $(n, N_1, k_1, m_1)$-BAC (PIR array code), we can obtain $k_1$ mutually disjoint recovery sets  $S_1,\ldots,S_{k_1}$ for $\ms{i_1,i_2,\ldots, i_{k_1}}$ such that $S_j\subseteq [m_1]$ for $j\in [k_1]$. As $\cC_2$ is an $(n, N_2, k_2, m_2)$-BAC (PIR array code), we can obtain $k_2$ mutually disjoint recovery sets $T_1,\ldots,T_{k_2}$ for $\ms{i_{k_1+1},i_{k_1+2},\ldots, i_{k_1+k_2}}$ such that  $T_j\subseteq [m_2]$ for $j\in [k_2]$. Thus, for $\ms{i_1,\ldots, i_{k_1+k_2}}$, we take $R_j=S_j$ for $j\in [k_1]$ and $R_j=\{t+m_1:t\in T_j\}$ for $j\in [k_1+1,k_1+k_2]$ as the recovery set for $x_{i_j}$ corresponding to the code $\cC$. Clearly, $R_1,\ldots,R_{k_1+k_2}$ are mutually disjoint subsets in $[m_1+m_2]$.
\end{proof}

By the gadget lemma, we have the following immediate corollary.

\begin{corollary}
    For any positive integers $n, k, m, c$ with $k \leq m$, the following hold:
    \begin{itemize}
        \item $N_B(n, ck, cm)\leq cN_B(n, k, m)$ and $N_P(n, ck, cm)\leq cN_P(n, k, m)$.
        \item $N_B(cn, k, cm)\leq cN_B(n, k, m)$ and $N_P(cn, k, cm)\leq cN_P(n, k, m)$.
        \item $N_B(cn, k, m)\leq cN_B(n, k, m)$ and $N_P(cn, k, m)\leq cN_P(n, k, m)$.
    \end{itemize}
\end{corollary}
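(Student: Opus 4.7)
The plan is to derive each of the three inequalities by iteratively applying the corresponding clause of the Gadget Lemma (Lemma \ref{lem: gadget lemma}) $c-1$ times, starting from $c$ copies of a length-optimal code. Since the three inequalities for BACs and for PIR array codes are proved by identical arguments, I will describe only the BAC case.

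For the first inequality, let $\cC$ be an $(n, N_B(n,k,m), k, m)$-BAC. I proceed by induction on $c$. The base case $c=1$ is trivial. For the inductive step, assume there exists an $(n, (c-1)N_B(n,k,m), (c-1)k, (c-1)m)$-BAC. Pairing it with $\cC$ and invoking the first clause of Lemma \ref{lem: gadget lemma} (which only requires that the two constituent codes share the same message length $n$) yields an $(n, cN_B(n,k,m), ck, cm)$-BAC, giving $N_B(n,ck,cm)\leq cN_B(n,k,m)$.

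For the second inequality, I again start from an optimal $(n, N_B(n,k,m), k, m)$-BAC $\cC$ and iterate the second clause of Lemma \ref{lem: gadget lemma}. Combining $c$ copies of $\cC$ (all with the same $k$) yields a $(cn, cN_B(n,k,m), \min\{k,\dots,k\}, cm) = (cn, cN_B(n,k,m), k, cm)$-BAC, proving $N_B(cn,k,cm)\leq cN_B(n,k,m)$. The third inequality is literally the third clause of Lemma \ref{lem: gadget lemma} applied to an optimal $(n, N_B(n,k,m), k, m)$-BAC.

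There is no real obstacle here: each inequality is obtained by a straightforward induction whose inductive step is a single invocation of one clause of the already-established Gadget Lemma. The same three inductions, carried out with PIR array codes in place of BACs, yield the corresponding bounds on $N_P$.
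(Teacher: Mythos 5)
Your proof is correct and takes exactly the approach the paper intends: each of the three inequalities is obtained by iterating the corresponding clause of the Gadget Lemma (the paper states the corollary as an immediate consequence without writing out the routine induction). No gaps.
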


Let $\cC$ be the BAC obtained in Construction \ref{Con: $(n, N=(2k-m+(m-k)^2/k)n, k, m)$-BAC}. Then, the first $k$ buckets of $\cC$ have size $n(1-\frac{m-k}{k})$ and the last $m-k$ buckets have size $\frac{(m-k)n}{k}$. Thus, when $\frac{m-k}{k}\neq \frac{1}{2}$, $\cC$ is not a uniform BAC. Next, we modify Construction \ref{Con: $(n, N=(2k-m+(m-k)^2/k)n, k, m)$-BAC} and provide a construction of uniform BAC for $m=k+1$.

\begin{construction}\label{Con1_uniform}
Let $n,k$ be positive integers such that $k(k+1)\mid n$. Let $\cC_0$ be the $(\frac{n}{k+1}, (k-1+\frac{1}{k})\frac{n}{k+1}, k, k+1)$-BAC obtained by Construction \ref{Con: $(n, N=(2k-m+(m-k)^2/k)n, k, m)$-BAC}. For any $\bbx=(x_1,x_2,\ldots,x_n)\in\Fq^n$ and $j\in[k+1]$, denote
$$\bbx_j = (x_{(j-1)\frac{n}{k+1} + 1}, \ldots, x_{\frac{jn}{k+1}})$$
and $\cC_0(\bbx_j) = (\bbc_{j,1},\ldots, \bbc_{j,k+1})$. Then, we define $\cC(\bbx) = (\bbc_1, \bbc_2,\ldots, \bbc_{k+1})$ as follows:
    $$\bbc_{\ell} = (\bbc_{1,\ell},\bbc_{2,\ell-1}, \ldots, \bbc_{k+1,\ell-k}),$$
where the addition of indices are modular $k+1$.
\end{construction}

\begin{example}[Example \ref{ex1.1} continued]
Let $\cC_0$ be the $(n=4, N=13, k=4, m=5)$-BAC given in Example  \ref{ex1.1}. Then, $\cC_0$ is an illustration of Construction \ref{Con: $(n, N=(2k-m+(m-k)^2/k)n, k, m)$-BAC} for $m=k+1$. Next, based on $\cC_0$, we construct a uniform $(n=20, N=65,  k=4, m=5)$-BAC through Construction \ref{Con1_uniform}.

Given $\bbx\in \Fq^{20}$, let $\bbx_j = (x_{4j-3},x_{4j-2},x_{4j-1}, x_{4j})$ for $j\in[5]$. Then, we have $\cC_0(\bbx_j) = (\bbc_{j, 1}, \ldots, \bbc_{j, 5})$ with
\begin{align*}
    &\bbc_{j, 1} = (x_{4j-3}, x_{4j-2}, x_{4j-1}),\\
    &\bbc_{j, 2} = (x_{4j-3}, x_{4j-2}, x_{4j}),\\
    &\bbc_{j, 3} = (x_{4j-3}, x_{4j-1}, x_{4j}),\\
    &\bbc_{j, 4} = (x_{4j-2}, x_{4j-1}, x_{4j}),\\
    &\bbc_{j, 5} = x_{4j-3}+x_{4j-2}+x_{4j-1}+x_{4j}.
\end{align*}
Then, one can check that the code $\cC$ defined as follows is an $(n=20, N=65,  k=4, m=5)$-BAC.
    \begin{table}[h]
        \centering
        \begin{tabular}{|c|c|c|c|c|}
        \hline
        $\bbc_1$ & $\bbc_2$ & $\bbc_3$ & $\bbc_4$ & $\bbc_5$ \\\hline
        $(x_{1}, x_{2}, x_{3})$ & $(x_{1}, x_{2}, x_{4})$ & $(x_{1}, x_{3}, x_{4})$ & $(x_{2}, x_{3}, x_{4})$ & $x_{1} + x_{2} + x_{3} + x_{4}$ \\\hline
        $x_{5} + x_{6} + x_{7} + x_{8}$ & $(x_{5}, x_{6}, x_{7})$ & $(x_{5}, x_{6}, x_{8})$ & $(x_{5}, x_{7}, x_{8})$ & $(x_{6}, x_{7}, x_{8})$\\\hline
        $(x_{10}, x_{11}, x_{12})$ & $x_{9} + x_{10} + x_{11} + x_{12}$ & $(x_{9}, x_{10}, x_{11})$ & $(x_{9}, x_{10}, x_{12})$ & $(x_{9}, x_{11}, x_{12})$\\\hline
        $(x_{13}, x_{15}, x_{16})$ & $(x_{14}, x_{15}, x_{16})$ & $x_{13} + x_{14} + x_{15} + x_{16}$ & $(x_{13}, x_{14}, x_{15})$ & $(x_{13}, x_{14}, x_{16})$\\\hline
        $(x_{17}, x_{18}, x_{20})$ & $(x_{17}, x_{19}, x_{20})$ & $(x_{18}, x_{19}, x_{20})$ & $x_{17} + x_{18} + x_{19} + x_{20}$ & $(x_{17}, x_{18}, x_{19})$\\\hline
        \end{tabular}
    \end{table}

\end{example}

\begin{theorem}\label{Thm: uniform-BAC}
The code $\cC$ obtained in Construction \ref{Con1_uniform} is a uniform $(n, N=(k-1+\frac{1}{k})n, k, k+1)$-BAC with each bucket having size $(k-1+\frac{1}{k})\frac{n}{k+1}$.
\end{theorem}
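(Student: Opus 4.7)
The plan splits the proof into two parts: a cardinality computation to establish uniformity and the claimed length, followed by a Hall-type combinatorial argument to establish the BAC property.

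For the cardinality, I recall that Construction \ref{Con: $(n, N=(2k-m+(m-k)^2/k)n, k, m)$-BAC} with $m=k+1$ produces, for every $\bbx_j$, $k$ ``data'' sub-buckets of size $\tfrac{k-1}{k(k+1)}n$ and one ``parity'' sub-bucket $\bbc_{j,k+1}$ of size $\tfrac{1}{k(k+1)}n$. By the cyclic-shift definition of $\cC$, bucket $\bbc_\ell$ contains $\bbc_{j,\,\ell-j+1\bmod(k+1)}$ for every $j\in[k+1]$; as $j$ runs through $[k+1]$, the second index is a bijection onto $[k+1]$, so each $\bbc_\ell$ collects exactly $k$ data sub-buckets and one parity sub-bucket. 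Summing sizes yields $(k-1+\tfrac{1}{k})\tfrac{n}{k+1}$ for every $\ell$, establishing uniformity and $N=(k-1+\tfrac{1}{k})n$.

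For the BAC property, write each request as $i_j=(t_j-1)\tfrac{n}{k+1}+s_j$ so that $x_{i_j}$ is the $s_j$-th coordinate of $\bbx_{t_j}$. Since $m-k=1$ in the base construction, each position $s_j$ lies in a unique $P_{\ell^*(s_j)}$ with $\ell^*(s_j)\in[k]$, and the base construction together with \Lref{lem: $(n, N=(2k-m+(m-k)^2/k)n, k, m)$-BAC} shows that $x_{s_j}$ is recoverable inside $\cC_0(\bbx_{t_j})$ either by any singleton data bucket $\bbc_{t_j,\ell'}$ with $\ell'\in[k]\setminus\{\ell^*(s_j)\}$, or by the pair $\{\bbc_{t_j,\ell^*(s_j)},\bbc_{t_j,k+1}\}$. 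Under the cyclic-shift bijection $\sigma_{t_j}(\ell):=\ell-t_j+1\bmod(k+1)$ identifying buckets of $\cC$ with sub-buckets of $\cC_0(\bbx_{t_j})$, this translates into: request $j$ can be singleton-served by any bucket in $T_j:=\sigma_{t_j}^{-1}([k]\setminus\{\ell^*(s_j)\})$ (size $k-1$), or pair-served by $F_j:=[k+1]\setminus T_j$ (size $2$).

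I then apply Hall's theorem to the bipartite graph $H$ with left side $[k]$, right side $[k+1]$, and edge $(j,\ell)$ whenever $\ell\in T_j$. A direct computation gives $|N_H(S)|=(k+1)-|\bigcap_{j\in S}F_j|$ for every $S\subseteq[k]$; since $|F_j|=2$ the intersection has size at most $2$, with equality only when all $F_j$ for $j\in S$ coincide. Hence Hall's condition fails only when $|S|=k$ and all $F_j$'s equal a common pair $F$, giving $|N_H(S)|=k-1<k$. In the good case, a perfect matching in $H$ yields $k$ disjoint singleton recovery sets. In the bad case, pick any $j^\star\in[k]$, set $R_{j^\star}=F$ (valid because $F=F_{j^\star}$), and bijectively assign the remaining $k-1$ requests to the $k-1$ buckets of $[k+1]\setminus F$; each such bucket lies in $T_j$ for every remaining $j$, so the assignment is valid. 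Either way we obtain the disjoint recovery sets required by \Dref{def_BAC}.

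The main obstacle is the degenerate case in which all $F_j$'s collapse to a single pair $F$, precisely where a pure singleton-matching strategy breaks; the key observation is that $F$ is simultaneously a valid pair-recovery set for every request, so spending it on any one request frees the other $k-1$ buckets, which are exactly the common singleton options $[k+1]\setminus F$ for everyone else.
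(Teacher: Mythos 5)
Your proof is correct, and the cardinality computation matches the paper's. The BAC-property argument, however, takes a genuinely different route. You set up Hall's theorem on the bipartite graph over $[k]\times[k+1]$ whose edges are all singleton-recovery options, where Hall's condition can actually fail, and you correctly identify that it fails exactly when every pair-set $F_j$ collapses to a common pair $F$ and patch that case by spending $F$ on one request. The paper instead (implicitly, via ``similar to Theorem~\ref{Thm: $(n, N=(2k-m+(m-k)^2/k)n, k, m)$-BAC}'') applies Hall to the truncated graph on $[k-1]\times[k+1]$, matching only the first $k-1$ requests to singleton-recovery buckets; there Hall's condition is automatic since each left-vertex has degree $k-1\geq\abs{A}$, so no degenerate case arises. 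The last request then gets whatever two buckets remain, and one observes that this always works: either one leftover bucket contains $x_{i_k}$ (singleton recovery), or neither does, in which case the two leftovers are precisely the unique pair not containing $x_{i_k}$ and \Lref{lem: $(n, N=(2k-m+(m-k)^2/k)n, k, m)$-BAC} applies. The paper's setup eliminates the case split up front; yours works directly with the full recovery-option graph at the cost of detecting and repairing the Hall failure. Both arguments are sound.
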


\begin{proof}
For each $j\in [k+1]$ and each $i\in [(j-1)\frac{n}{k+1} + 1, \frac{jn}{k+1}]$, by Construction \ref{Con: $(n, N=(2k-m+(m-k)^2/k)n, k, m)$-BAC}, there is only one bucket of $\bbc_{j,1},\ldots,\bbc_{j,k}$ which doesn't contain $x_i$ and this bucket together with $\bbc_{j,k+1}$ can recover $x_i$.

Note that for each $\ell\in [k+1]$ and $j\in [k+1]$, only $\bbc_{j,l+j-1}$ of $\bbc_{j,1},\ldots,\bbc_{j,k+1}$ is contained in $\bbc_{\ell}$. Therefore, for each $i\in [(j-1)\frac{n}{k+1} + 1, \frac{jn}{k+1}]$, there are exactly $k-1$ buckets in $\{\bbc_1, \bbc_2,\ldots, \bbc_{k+1}\}$ contains $x_i$ and the remaining two buckets can recover $x_i$. Similar to the proof of Theorem \ref{Thm: $(n, N=(2k-m+(m-k)^2/k)n, k, m)$-BAC}, for any multi-set request $\ms{{i_1}, \ldots, {i_k}}$, we can choose the $k-1$ buckets which contain $x_{i_1}, \ldots, x_{i_{k-1}}$, respectively, to recover $x_{i_1}, \ldots, x_{i_{k-1}}$,  and use the remaining two buckets to recover $x_{i_k}$.

Moreover, since $\{\ell,\ell+1,\ldots,\ell+k\}=[k+1]$ for each $\ell$, we have $|\bbc_{\ell}|=(k-1+\frac{1}{k})\frac{n}{k+1}$. That is, each bucket of $\cC$ has the same size and $\cC$ is a uniform BAC.
\end{proof}

\begin{remark}\label{rmk-uniform-BAC}
In \cite{NY22}, the authors studied uniform BAC under a more generalized setting: the number of symbols being accessed in each bucket during the recovery of each $x_{i_j}$ is limited. 
In view of this setting, by Lemma \ref{lem: $(n, N=(2k-m+(m-k)^2/k)n, k, m)$-BAC} and Theorem \ref{Thm: uniform-BAC}, Construction \ref{Con1_uniform} provides a uniform BAC where the number of symbols being accessed in each bucket during the recovery of each $x_{i_j}$ is at most $k-1$.
\end{remark}

\subsection{A construction through ``good vectors''}\label{subsec: constructions2}

Next, we will construct a class of PIR array codes (which are also BACs for some parameters) for $m = n\leq 2k$. We start with the following example.

\begin{example}[$(n=5, N=10, k=3, m=5)$-BAC]\label{ex: $(n=5, N=10, k=3, m=5)$-computational batch code}
For $\bbx=(x_1, x_2, \ldots, x_5)$, define $\cC(\bbx)=(\bbc_1,\ldots,\bbc_5)$ as follows:
\begin{table}[h]
    \centering
    \begin{tabular}{|c|c|c|c|c|}
    \hline
    $\bbc_1$&$ \bbc_2$&$\bbc_3$&$\bbc_4$&$\bbc_5$\\
    \hline
    $x_1 $& $x_2$ & $x_3$ & $x_4$ & $x_5$ \\
    \hline
    $x_3+x_4$ & $x_4+x_5$& $x_5+x_1$ & $x_1+x_2$ &$x_2+x_3$\\
    \hline
    \end{tabular}
\end{table}

To see that $\cC$ is a $(5,10,3,5)$-BAC, it suffices to show that for any multiset request $\ms{i_1,i_2,i_3}$, there are $3$ mutually disjoint subsets $R_1,R_2,R_3\subseteq [m]$ such that $R_j$ is the recovery set for $i_j$, $j\in [3]$.

When $i_1, i_2, i_3$ are distinct, we can take the $R_1,R_2,R_3$ as $\{i_1\}, \{i_2\}, [5]\setminus\{i_1,i_2\}$.

When $i_1=i_2=i_3=i$ for some $i\in [5]$, we assume w.l.o.g. that $i=1$. As $x_2\in\bbc_2$ and $x_1+x_2\in\bbc_4$, thus, $x_1$ can be recovered by $\{\bbc_2, \bbc_4\}$. Similarly, $x_1$ can be recovered by $\{\bbc_3, \bbc_5\}$. As $x_1\in \bbc_1$, $\{1\}$, $\{2,4\}$ and $\{3,5\}$ are the $3$ disjoint recovery sets of $x_1$.

When $\ms{i_1,i_2,i_3}=\ms{i,i,i'}$ for some distinct $i,i'\in [5]$. Note that $x_i$ can be recovered by $\bbc_i$, $\{\bbc_{i+1}, \bbc_{i-2}\}$, $\{\bbc_{i+2},\bbc_{i-1}\}$. As $i'\neq i$, one of $\{{i+1}, {i-2}\}$ and $\{{i+2}, {i-1}\}$ is contained in $[5]\setminus \{i,i'\}$. Thus, the recovery sets for request $\ms{i,i,i'}$ are $\{i\},[5]\setminus \{i,i'\},\{i'\}$.
\end{example}

Next, we extend the code in Example \ref{ex: $(n=5, N=10, k=3, m=5)$-computational batch code} to general parameters.

\begin{definition}\label{Def: good vectors}
    Let $t$ be a positive integer. We call $\bbv = (v_1, v_2 ,\ldots, v_{2t})\in [t]^{2t}$ or $\bbv = (v_1, v_2 ,\ldots, v_{2t+1})\in [0, t]^{2t+1}$ a $good$ vector with respect to (w.r.t.) $t$, if the following holds:
    \begin{itemize}
        \item [1.] Any integer in $[t]$ appears exactly twice in $\bbv$.
        \item [2.] If $v_i=v_{i'}=j$ for some $j\in [t]$ and $1\leq i < i'$, then $i'-i=j$.
    \end{itemize}
    For a good vector $\bbv$ w.r.t. $t$, we define $j(\bbv)=\max\{i:~v_i=j\}$.
\end{definition}

\begin{example}
We have the following examples of good vectors:
\begin{itemize}
    \item $(1, 1)$ is the only good vector w.r.t $t=1$ of length $2$.
    \item $\bbv=(2, 3, 2, 4, 3, 1, 1, 4)$ is a good vector w.r.t $t=4$ of length $8$. Moreover, we have  $1(\bbv)=7$, $2(\bbv)=3$, $3(\bbv)=5$, and $4(\bbv)=8$.
    \item There is no good vector w.r.t $t=2$ and $t=3$ of length $2t$.
\end{itemize}
\end{example}

\begin{lemma}\label{lem: good vector of length 2t+1}
    For any positive integer $t$, there is a good vector w.r.t $t$ of length $2t+1$.
\end{lemma}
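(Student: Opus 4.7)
The plan is to construct an explicit good vector for each $t$. First observe that since every $j \in [t]$ must appear exactly twice and the total length is $2t+1$, exactly one coordinate of $\bbv$ equals $0$. Specifying $\bbv$ is therefore equivalent to choosing $t$ pairwise disjoint pairs $\{(i_j, i_j+j)\}_{j=1}^{t}$ inside $[2t+1]$ whose union covers all but one position $z \in [2t+1]$; one then sets $v_{i_j} = v_{i_j+j} = j$ for each $j \in [t]$ and $v_z = 0$, and the distance condition is automatic. This is precisely the classical notion of a (hooked) Skolem sequence of order $t$.

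I would then split into four cases based on the residue of $t$ modulo $4$, for which explicit pair families are well-known. When $t \equiv 0, 1 \pmod{4}$, a Skolem sequence of length $2t$ exists (with no hole); appending a single $0$ at position $2t+1$ yields the required good vector of length $2t+1$. When $t \equiv 2, 3 \pmod{4}$, a hooked Skolem sequence of length $2t+1$ exists with the unique hole at position $2t$, which directly serves as the good vector. In each case one writes down the pairs $(i_j, i_j+j)$ in closed form, typically by letting the large labels $j$ straddle the midpoint of $[2t+1]$ and packing the small labels into the leftover positions by a Langford-type zig-zag, so that the $t$ intervals are pairwise disjoint and cover every position except one.

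The main obstacle is presenting the four explicit pair families concisely and verifying disjointness; the distance condition itself is automatic once the positions are specified. I would anchor the induction-style construction with the small base cases $t=1,2,3,4$, which can be exhibited by hand, for instance
\[
(1,1,0),\quad (1,1,2,0,2),\quad (1,1,3,0,2,3,2),\quad (4,1,1,3,4,2,3,2,0),
\]
and then give the general formula for each residue class, verifying pair-disjointness and the distance $i'-i=j$ condition by a short arithmetic check on the indices. This yields a good vector of length $2t+1$ for every positive integer $t$, completing the proof.
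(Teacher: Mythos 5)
Your proposal is correct in spirit but follows a genuinely different and considerably heavier route than the paper's. You correctly identify the combinatorial object at hand: a good vector of length $2t+1$ is a Skolem-type sequence in which the single zero (hook) may sit anywhere. You then propose to split into four cases by $t \bmod 4$, invoking the classical existence theorems for Skolem sequences ($t\equiv 0,1 \pmod 4$, zero appended at the end) and hooked Skolem sequences ($t\equiv 2,3 \pmod 4$, hook forced at position $2t$), and citing the known explicit constructions for each residue class. This would work, but it imports an external theorem whose proof is itself nontrivial, and the $4$-case split is an artifact of \emph{fixing} the position of the zero. The paper instead exploits the fact that Definition~\ref{Def: good vectors} leaves the zero free, and gives a fully self-contained $2$-case construction depending only on the parity of $t$: concatenate a palindromic block of the odd labels, $(\ldots,3,1,1,3,\ldots)$, with a palindromic block of the even labels centered on the $0$, $(\ldots,4,2,0,2,4,\ldots)$, with parity dictating which block holds $t$ itself. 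The distance condition is then a one-line index check within each arithmetic block. Your plan also stops short of actually writing out the four closed-form families and verifying disjointness, which you acknowledge as the ``main obstacle''; the paper's palindromic split dissolves that obstacle entirely. In short: your approach is sound if the Skolem existence/construction literature is taken as given, but it trades a two-line elementary construction for a four-case argument resting on an external theorem, precisely because it does not use the freedom to place the zero off-hook.
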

\begin{proof}
Define
$$\bbv_1=\begin{cases}
    (t-1,t-3,\ldots,3,1,1,3,\ldots,t-3,t-1),~\text{if $2\mid t$};\\
    (t,t-2,\ldots,3,1,1,3,\ldots,t-2,t),~\text{if $2\nmid t$},
\end{cases}$$
and
$$\bbv_2=\begin{cases}
    (t,t-2,\ldots,4,2,0,2,4,\ldots,t-2,t),~\text{if $2\mid t$};\\
    (t-1,t-3,\ldots,4,2,0,2,4,\ldots,t-3,t-1),~\text{if $2\nmid t$}.
\end{cases}$$
Next, we show that $\bbv=(\bbv_1,\bbv_2)$ is a good vector w.r.t $t$. We only prove the case when $2\mid t$, the proof of the other case is similar.

    Clearly, every integer in $[t]$ appears exactly twice in $\bbv$. Thus, the first condition holds. Note that for each odd $j\in\{1, 3, \ldots, t-1\}$, $v_i = j$ if and only if $i\in \{\frac{t-j+1}{2}, \frac{t+j+1}{2}\}$. Thus, if $v_i = v_{i'} = j$ for some odd $j$ and $1\leq i\leq i'$, then $i'-i=j$. Similarly, for each even $j\in \{2, 4, \ldots, t\}$, $v_i = j$ if and only if $i\in \{t+1+\frac{t-j}{2}, t+1+\frac{t+j}{2}\}$. Thus, if $v_i = v_{i'} = j$ for some even $j$ and $1\leq i\leq i'$, then $i'-i=j$. This verifies the second condition.
\end{proof}

Now we introduce a construction based on the good vectors defined above. The addition and subtraction of indices are modulo $n$.

\begin{construction}\label{Con: $(n, N = (t+1)(4t+1), k = 2t+1, m=n)$-CPIR}
Let $\bbv$ be a good vector w.r.t $t$. Let $n = 4t+1$ if $|\bbv|=2t$, and $n=4t+2$ if $|\bbv|=2t+1$. For each $i\in [n]$ and $j\in [t]$, define
$$y_{i,j}=x_{i-t-j(\bbv)} + x_{i-t-j(\bbv)+j}.$$
Then, we define the codeword encoded from $\bbx$, $\cC(\bbx)=(\bbc_1,\bbc_2,\ldots, \bbc_n)$, as follows. For each $i\in [n]$,
$$\bbc_i = (x_i, y_{i,1},y_{i,2},\ldots,y_{i,t}).$$
\end{construction}
\begin{remark}
     Example \ref{ex: $(n=5, N=10, k=3, m=5)$-computational batch code} is the code obtained from Construction \ref{Con: $(n, N = (t+1)(4t+1), k = 2t+1, m=n)$-CPIR} with $\bbv = (1, 1)$.
\end{remark}
\begin{example}
    Let $\bbv = (1, 1, 2, 0, 2)$ and $\cC$ be the code obtained from Construction $\ref{Con: $(n, N = (t+1)(4t+1), k = 2t+1, m=n)$-CPIR}$ based on $\bbv$. One can check that $\bbv$ is a good vector w.r.t $t=2$ of length $5$. Moreover, $1(\bbv) = 2$ and $2(\bbv) = 5$. Let $n = 4 t + 2 = 10$ and $\bbx = (x_1, x_2, \cdots, x_{10})$. Then, the codeword $\cC(\bbx)$ has the following form:
    \begin{table}[h]
        \centering
        \begin{tabular}{|c|c|c|c|c|c|c|c|c|c|}
        \hline
        $\bbc_1$&$ \bbc_2$&$\bbc_3$&$\bbc_4$&$\bbc_5$ & $\bbc_6$&$ \bbc_7$&$\bbc_8$&$\bbc_9$&$\bbc_{10}$\\
        \hline
        $x_1$&$ x_2$&$x_3$&$x_4$&$x_5$ & $x_6$&$ x_7$&$x_8$&$x_9$&$x_{10}$\\
        \hline
        $x_7+x_8 $& $x_8+x_9$ & $x_9+x_{10}$ & $x_{10}+x_1$ & $x_1+x_2$ & $x_2+x_3$& $x_3+x_4$ & $x_4+x_5$ & $x_5+x_6$ & $x_6+x_7$ \\
        \hline
        $x_4+x_6$ & $x_5+x_7$& $x_6+x_8$ & $x_7+x_9$ &$x_8+x_{10}$ & $x_9+x_1$ & $x_{10}+x_2$& $x_1+x_3$ & $x_2+x_4$ &$x_3+x_5$\\
        \hline
        \end{tabular}
    \end{table}
\end{example}


\begin{lemma}\label{lem: Cons2 is CPIR}
    The code obtained in Construction \ref{Con: $(n, N = (t+1)(4t+1), k = 2t+1, m=n)$-CPIR} is an $(n, N = (t+1)n, k = 2t+1, m = n)$-PIR array code.
\end{lemma}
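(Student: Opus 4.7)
The plan is to exhibit, for every $p \in [n]$, a partition of $[n]$ into $2t+1$ subsets each of which is a valid recovery set for $x_p$. Setting $s_j := t + j(\bbv)$ for $j \in [t]$, we have $y_{i,j} = x_{i-s_j} + x_{i-s_j+j}$, so $y_{i,j}$ contains $x_p$ in exactly two ways: $i = p + s_j$ gives $y_{i,j} = x_p + x_{p+j}$, and $i = p + s_j - j$ gives $y_{i,j} = x_{p-j} + x_p$. For each $j \in [t]$ I would introduce the two recovery pairs
\[
R_j^+ := \{p + s_j,\, p + j\}, \qquad R_j^- := \{p + s_j - j,\, p - j\}
\]
(indices mod $n$): the first recovers $x_p = y_{p+s_j,j} - x_{p+j}$, reading $y_{p+s_j,j}$ from $\bbc_{p+s_j}$ and $x_{p+j}$ from $\bbc_{p+j}$, and the second recovers $x_p = y_{p+s_j-j,j} - x_{p-j}$ in the same fashion. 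Together with the trivial singleton $R_0 := \{p\}$, these form $2t+1$ candidate recovery sets.

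The crucial step is to check that these $2t+1$ sets are pairwise disjoint modulo $n$ and cover (essentially) all of $[n]$. Relative to the base point $p$, the offsets that appear are
\[
\{0\} \cup \{\pm j : j \in [t]\} \cup \{t + j(\bbv) : j \in [t]\} \cup \{t + (j(\bbv) - j) : j \in [t]\}.
\]
Here the good-vector property is the crux: by condition~(2) of Definition~\ref{Def: good vectors}, $j(\bbv)$ and $j(\bbv)-j$ are exactly the two positions at which the value $j$ occurs in $\bbv$, so as $j$ ranges over $[t]$ these $2t$ positions sweep out $[|\bbv|]$ when $|\bbv| = 2t$, and $[2t+1] \setminus \{p_0\}$ when $|\bbv| = 2t+1$, where $p_0$ denotes the unique position of the value $0$ in $\bbv$. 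Consequently all of the offsets above lie inside the integer interval $[-t,\, t + |\bbv|]$, which has width at most $n$, so they are automatically distinct modulo $n$.

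When $|\bbv| = 2t$ (so $n = 4t+1$) these offsets sweep out exactly $n$ residues and $\{R_0,\, R_j^+,\, R_j^- : j \in [t]\}$ already partitions $[n]$. When $|\bbv| = 2t+1$ (so $n = 4t+2$), exactly one residue $q := p + t + p_0 \pmod n$ is missed; I would then redefine $R_0 := \{p, q\}$, letting $\bbc_q$ respond with $0$ so that $R_0$ remains a valid recovery set for $x_p$, and the collection then partitions $[n]$. The main obstacle is the combinatorial step above: once the two-occurrence condition on good vectors is unpacked, the offset interval falls out cleanly, but without that structural property on $\bbv$ the offsets $\{t + j(\bbv),\, t + (j(\bbv)-j)\}$ could collide mod $n$ and destroy the partition. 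The recovery relations themselves are elementary two-term linear combinations of bucket responses and require no further work.
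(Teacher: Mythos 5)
Your proof is correct and uses exactly the same recovery sets as the paper: the singleton $\{p\}$ together with the $2t$ pairs $\{p\pm j,\, p+t+j(\bbv)-[\![\pm=-]\!]j\}$ (in your notation $R_j^\pm$). The paper's proof establishes that $A=\{p-t,\dots,p+t\}$, $B=\{p+t+j(\bbv)-j\}_j$, $C=\{p+t+j(\bbv)\}_j$ are pairwise disjoint via three separate modular-arithmetic case analyses (bounding each offset type and deriving contradictions from the assumption of a collision mod $n$). Your argument is a genuinely cleaner route to the same conclusion: you observe that $\{j(\bbv),\,j(\bbv)-j : j\in[t]\}$ is precisely the set of positions in $\bbv$ carrying a nonzero value, hence sweeps $[|\bbv|]$ (or $[|\bbv|]\setminus\{p_0\}$ in the odd-length case), so the $B\cup C$ offsets occupy essentially the full block $[t+1,\,t+|\bbv|]$, which is disjoint from the $A$-offsets $[-t,t]$ and fits together with them into a window of exactly $n$ consecutive integers. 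Disjointness mod $n$ then falls out in one stroke rather than three. This is a real simplification: it makes the role of the good-vector property transparent (the offsets tile an interval) and also makes explicit the size counts $|B|=|C|=t$ that the paper's proof leaves implicit. Your handling of the leftover bucket in the $n=4t+2$ case (absorbing it into $R_0$ with a zero response) is consistent with the paper's convention in Section~\ref{sec: notations} that every bucket participates in some recovery set, possibly vacuously.

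Two tiny points of precision worth fixing: the window $[-t,\,t+|\bbv|]$ has width exactly $n$, not merely at most $n$ (indeed equality is what makes the $n=4t+1$ case a genuine partition); and $j(\bbv)\geq j+1\geq 2$, so the $C$-offsets actually lie in $[t+2,\,t+|\bbv|]$, though $[t+1,\,t+|\bbv|]$ suffices for your argument.
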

\begin{proof}
    Let $\bbv$ be a good vector w.r.t $t$.
    We only prove the case when $|\bbv|=2t+1$, the proof for the other case when $|\bbv|=2t$ is similar. Let $\cC$ be the code obtained from Construction \ref{Con: $(n, N = (t+1)(4t+1), k = 2t+1, m=n)$-CPIR} based on $\bbv$. Next, we show that there are $2t+1$ mutually disjoint recovery sets for each information symbol $x_i$.

    Clearly, $x_i$ can be recovered by $\{x_i\}$ and
    $$\begin{array}{cc}
    \{x_{i-1},x_{i-1}+x_{i}\},  &  \{x_{i+1},x_{i+1}+x_{i}\}, \\
    \{x_{i-2},x_{i-2}+x_{i}\},  &  \{x_{i+2},x_{i+2}+x_{i}\}, \\
    \vdots & \\
    \{x_{i-t},x_{i-t}+x_{i}\},  &  \{x_{i+t},x_{i+t}+x_{i}\}.
    \end{array}$$
    Note that for each $j\in [t]$, $x_{i-j}\in \bbc_{i-j}$, $x_{i+j}\in \bbc_{i+j}$,
    $$x_{i-j}+x_i=y_{i+t+j(\bbv)-j,j}\in \bbc_{i+t+j(\bbv)-j},$$ and
    $$x_i + x_{i+j}=y_{i+t+j(\bbv),j}\in \bbc_{i+t+j(\bbv)}.$$
    Next, we show that $A=\{i-t,\ldots,i+t\}$, $B=\{i+t+j(\bbv)-j\}_{j\in [t]}$ and $C=\{i+t+j(\bbv)\}_{j\in [t]}$ are disjoint subsets of $[n]$. Then, we can take the $2t+1$ mutually disjoint recovery sets of $x_i$ as $\{i\}$, $\{i-j,i+t+j(\bbv)-j\}_{j\in [t]}$ and $\{i+j,i+t+j(\bbv)\}_{j\in [t]}$.

    We first show that $A\cap B = \varnothing$. Assume that there are $j_1 \in [-t, t]$ and $j_2\in [t]$ such that $i + j_1 \equiv i + t + j_2(\bbv) - j_2  \pmod n$. Then, $t + j_2(\bbv) - j_2 - j_1 \equiv 0 \pmod n$. By Definition \ref{Def: good vectors}, $j_2(\bbv)$ is the maximal index such that $v_{j_2(\bbv)} = j_2$. As $j_2$ appears exactly twice in $\bbv$, by the second property of the good vector, $j_2(\bbv) - j_2$ is $\min\{j:~v_j=j_2\}$. This implies that $1\leq j_2(\bbv) - j_2\leq 2t$. As $j_1\in [-t, t]$, thus, we have $$1\leq t + j_2(\bbv) - j_2 - j_1\leq 4t.$$
    This contradicts the fact that $n = 4t+2$ and $t + j_2(\bbv) - j_2 + j_1 \equiv 0 \pmod n$.

    Next, we show that $A\cap C = \varnothing$. Similarly, we assume that there are $j_1 \in [-t, t]$ and $j_2\in [t]$ such that $i + j_1 \equiv i + t + j_2(\bbv)  \pmod n$. Then, $t + j_2(\bbv) - j_1 \equiv 0 \pmod n$. Clearly, $2\leq j_2(\bbv)\leq 2t+1$. Thus, by $j_1\in [-t,t]$, we have
    $$2\leq t + j_2(\bbv) - j_1\leq 4t+1.$$
    This contradicts the fact that $n = 4t+2$ and $t + j_2(\bbv) - j_1 \equiv 0 \pmod n$.

    Finally, we show that $B\cap C = \varnothing$. Otherwise, we assume that there are $j_1, j_2\in [t]$ such that $i + t + j_1(\bbv) \equiv i + t + j_2(\bbv) - j_2  \pmod n$. Then, $j_2(\bbv) - j_2 - j_1(\bbv) \equiv 0 \pmod n$. As $1\leq j_2(\bbv) - j_2\leq 2t$ and $2\leq j_1(\bbv)\leq 2t+1$, $j_2(\bbv) - j_2 - j_1(\bbv) \equiv 0 \pmod n$ implies that $j_2(\bbv) - j_2 = j_1(\bbv)$. However, by the second property of good vector, we have $\bbv_{j_2(\bbv) - j_2} = j_2$ while $\bbv_{j_1(\bbv)} = j_1$. Thus, we have $j_1 = j_2$, which implies that $j_1(\bbv)=j_2(\bbv)$. By $j_2(\bbv) - j_2 = j_1(\bbv)$, this leads to $j_2 = 0$, a contradiction. This concludes the proof.
\end{proof}

By Lemma \ref{lem: good vector of length 2t+1}, Lemma \ref{lem: Cons2 is CPIR} and the gadget lemma \ref{lem: gadget lemma}, we have the following corollary.

\begin{corollary}\label{coro4.3}
    For any positive integer $t$, the following hold:
    \begin{itemize}
        \item If there is a good vector $\bbv$ w.r.t $t$ of length $2t$, then for any $n$ such that $(4t+1)\mid n$, $N_P(n, k=2t+1, m=4t+1)\leq (t+1)n$.
        \item For any $n$ such that $(4t+2)\mid n$, $N_P(n, k=2t+1, m=4t+2)\leq (t+1)n$.
    \end{itemize}
\end{corollary}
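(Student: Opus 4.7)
The plan is to derive Corollary 4.3 as a direct consequence of Construction \ref{Con: $(n, N = (t+1)(4t+1), k = 2t+1, m=n)$-CPIR} (via Lemma \ref{lem: Cons2 is CPIR}), Lemma \ref{lem: good vector of length 2t+1}, and the third bullet of the gadget Lemma \ref{lem: gadget lemma}. There is essentially no combinatorial work left: all the heavy lifting has been done in those three results, so the role of the proof is to assemble them correctly.

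For the first item, I would start by assuming the existence of a good vector $\bbv$ w.r.t.\ $t$ of length $2t$. Feeding $\bbv$ into Construction \ref{Con: $(n, N = (t+1)(4t+1), k = 2t+1, m=n)$-CPIR} produces a code $\cC_0$ on $n_0 = 4t+1$ information symbols, and by Lemma \ref{lem: Cons2 is CPIR} this $\cC_0$ is a $(4t+1,(t+1)(4t+1),2t+1,4t+1)$-PIR array code. Hence
\[
N_P(4t+1,\,2t+1,\,4t+1)\le (t+1)(4t+1).
\]
Now for any $n$ with $(4t+1)\mid n$, write $n = c(4t+1)$ and invoke the third bullet of Lemma \ref{lem: gadget lemma}, namely $N_P(cn_0,k,m)\le c N_P(n_0,k,m)$, to conclude
\[
N_P(n,\,2t+1,\,4t+1)\le c(t+1)(4t+1)=(t+1)n,
\]
which is exactly the asserted bound.

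For the second item, the only extra input needed is the unconditional existence of a good vector of length $2t+1$, which is precisely Lemma \ref{lem: good vector of length 2t+1}. Applying Construction \ref{Con: $(n, N = (t+1)(4t+1), k = 2t+1, m=n)$-CPIR} to such a vector and using Lemma \ref{lem: Cons2 is CPIR} with the $|\bbv|=2t+1$ branch yields a $(4t+2,(t+1)(4t+2),2t+1,4t+2)$-PIR array code. Then for any $n$ with $(4t+2)\mid n$, the same scaling step from the gadget lemma gives $N_P(n,2t+1,4t+2)\le (t+1)n$, as claimed.

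There is no genuine obstacle here, so the only care needed is bookkeeping. Specifically, one must be sure to invoke the gadget lemma in the form that preserves $k$ and $m$ while multiplying $n$ (and hence $N$) by the same factor $c$; the other two bullets of Lemma \ref{lem: gadget lemma} would change $k$ or $m$ and are not what we want. With that caveat noted, both items follow in a single line each from the cited ingredients.
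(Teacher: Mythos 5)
Your proof matches the paper's intent exactly: the corollary is obtained by combining Construction~\ref{Con: $(n, N = (t+1)(4t+1), k = 2t+1, m=n)$-CPIR} with Lemma~\ref{lem: Cons2 is CPIR} to get the base codes on $4t+1$ and $4t+2$ symbols, using Lemma~\ref{lem: good vector of length 2t+1} for the unconditional existence of the length-$(2t+1)$ good vector, and then scaling $n$ by a factor $c$ via the third bullet of the gadget Lemma~\ref{lem: gadget lemma}. Your bookkeeping, including the remark that only the $n$-scaling bullet of the gadget lemma is applicable here, is correct.
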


Next, we provide a sufficient condition on $k$ that guarantees the code obtained from Construction \ref{Con: $(n, N = (t+1)(4t+1), k = 2t+1, m=n)$-CPIR} is an $(n,(t+1)n,k,n)$-BAC for $n=4t+1$ and $n=4t+2$.

\begin{theorem}\label{Thm: sufficient cond for Cons2 being BAC}
    Let $k$ be a positive integer such that $2k\leq 2t+\Delta+\lceil\frac{k}{\Delta}\rceil$ holds for every $\Delta\in [k]$. Then, the code obtained from Construction \ref{Con: $(n, N = (t+1)(4t+1), k = 2t+1, m=n)$-CPIR} is an $(n, N=(t+1)\bcom{n}, k, n)$-BAC.
\end{theorem}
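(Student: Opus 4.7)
The plan is to verify the defining property of a BAC directly: for every multiset request $I=\ms{i_1,\ldots,i_k}\subseteq [n]$, I need to exhibit a partition $R_1,\ldots,R_k$ of $[n]$ such that each $R_j$ is a recovery set of $x_{i_j}$ in the code $\cC$ from Construction \ref{Con: $(n, N = (t+1)(4t+1), k = 2t+1, m=n)$-CPIR}. Let $j_1,\ldots,j_s$ be the distinct values of $I$ with multiplicities $d_1\ge d_2\ge\cdots\ge d_s$ and $\sum_a d_a=k$. By Lemma \ref{lem: Cons2 is CPIR}, for each $x_{j_a}$ we have $2t+1$ mutually disjoint recovery sets: the singleton $\{j_a\}$, together with the size-$2$ pairs $\{j_a+j,\,j_a+t+j(\bbv)\}$ and $\{j_a-j,\,j_a+t+j(\bbv)-j\}$ for $j\in[t]$. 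The overall strategy is a greedy assignment: reserve the singleton $\{j_a\}$ for one of the $d_a$ copies of $x_{j_a}$, and assign a size-$2$ recovery set to each of the remaining $d_a-1$ copies, ensuring no bucket is used twice. Any bucket not used as an anchor or partner may be absorbed into an arbitrary recovery set, since adding extra buckets does not affect correctness.

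The accounting is as follows. Using $s$ singletons plus $k-s$ size-$2$ sets consumes $s+2(k-s)=2k-s$ buckets, so I need $2k-s\le n\in\{4t+1,4t+2\}$; setting $\Delta=1$ in the hypothesis gives $k\le 2t+1$, hence $2k-s\le 4t+2\le n$, so the raw count is fine. The real work is showing that the size-$2$ sets can actually be chosen disjointly. Each such set contains one ``anchor'' $j_a\pm j$ (with $j\in[t]$) and one ``partner'' $j_a+t+j(\bbv)$ or $j_a+t+j(\bbv)-j$; the proof of Lemma \ref{lem: Cons2 is CPIR} already establishes that the $2t+1$ recovery sets of a \emph{single} $x_{j_a}$ are pairwise disjoint via the good-vector property, so the remaining task is to rule out collisions across different $j_a, j_b$.

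I would model this as a bipartite matching problem, analogous to the use of Hall's theorem in the proof of Theorem \ref{Thm: $(n, N=(2k-m+(m-k)^2/k)n, k, m)$-BAC}: on one side, the $k-s$ ``remaining copies'' (each labeled by the distinct index $j_a$ it belongs to and an index $\le d_a-1$); on the other side, the $2t$ available ``signed offsets'' $(\pm,j)$, where an offset $(\pm,j)$ is admissible for a copy of $x_{j_a}$ if the corresponding anchor and partner coordinates are not already occupied by a singleton $\{j_b\}$ or by another chosen size-$2$ set. The hypothesis $2k\le 2t+\Delta+\lceil k/\Delta\rceil$ for \emph{every} $\Delta\in[k]$ is precisely what is needed to verify Hall's condition: for any subset $T$ of the ``remaining copies'', taking $\Delta$ equal to the maximum multiplicity appearing in $T$ (or appearing in the conflicting cluster around $T$) bounds the forbidden offsets by $2k-2t-\Delta$, leaving at least $\lceil k/\Delta\rceil$ admissible ones, which in turn exceeds $|T|$.

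The hard part will be translating the geometric fact that two recovery sets of $x_{j_a}$ and $x_{j_b}$ clash only if the cyclic distance between $j_a$ and $j_b$ lies in a specific set determined by $\{j(\bbv),\,j(\bbv)-j:j\in[t]\}$ into a clean Hall-type count. I would split the analysis into the case $d_1$ large (one dominant symbol, where $\Delta=d_1$ in the hypothesis gives $2k-d_1-\lceil k/d_1\rceil\le 2t$, bounding the interference its $d_1-1$ size-$2$ sets cause on the other requests) and the case of more balanced multiplicities (where the optimal $\Delta\approx\sqrt{k}$ yields $2k-2t\le 2\sqrt{k}$ and controls the total number of size-$2$ sets). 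Carrying out the case analysis cleanly using the cyclic structure of Construction \ref{Con: $(n, N = (t+1)(4t+1), k = 2t+1, m=n)$-CPIR} and the distinctness properties of $\{j(\bbv)\}$ from Definition \ref{Def: good vectors} is where the technical core of the proof lies.
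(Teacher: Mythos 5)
Your opening move—reserve the singleton $\{j_a\}$ for one copy of each distinct requested index and then cover the remaining $k-s$ copies with size-$2$ recovery sets—matches the paper's proof. Where the argument breaks down is in the second phase, and the gap is not just a matter of leaving the "hard part" for later; the Hall/bipartite-matching framework you propose does not actually model the constraint you need to satisfy. On the right-hand side of your bipartite graph you put the $2t$ signed offsets $(\pm, j)$, but the same offset applied to two different indices $j_a \neq j_b$ realizes two different pairs of buckets. So matching two remaining copies to distinct offsets does not ensure their recovery sets are bucket-disjoint, and matching them to the same offset does not necessarily create a conflict. The only disjointness your matching enforces is intra-index disjointness, which is already free by Lemma \ref{lem: Cons2 is CPIR}; the cross-index collisions, which are exactly what has to be ruled out, live in the edges between \emph{left} vertices and are invisible to a bipartite matching. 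You try to patch this with an "admissibility" condition that an offset is allowed only if its buckets are "not already occupied by another chosen size-$2$ set," but this makes admissibility depend on the partial matching itself, which is no longer a static Hall instance. Your Hall-condition "verification" also does not survive scrutiny: if one index has multiplicity near $k$, the number of remaining copies is $\approx k$ while $\lceil k/\Delta\rceil$ can be as small as $1$.

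The paper avoids all of this by abandoning any global matching formulation. It orders the distinct requested indices by \emph{increasing} multiplicity $a_1 \le \cdots \le a_\Delta$ and runs a one-pass greedy: after placing all $\Delta$ singletons, process $j = 1, \ldots, \Delta$ and pick, arbitrarily, $a_j - 1$ of the $2t$ size-$2$ sets in $\cR_{i_j}$ that avoid everything chosen so far. The feasibility check is a direct count: because the $2t+1$ members of $\cR_{i_\delta}$ are pairwise disjoint, each previously placed singleton rules out at most one of them and each previously placed size-$2$ set rules out at most two, so the number of blocked choices at step $\delta$ is at most $(\Delta - 1) + 2\bigl(a_1 + \cdots + a_{\delta-1} - (\delta - 1)\bigr)$. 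Supposing failure, comparing this with the $a_\delta - 1$ needed sets, using $a_\delta \le a_\Delta$ and $a_\Delta \ge \lceil k/\Delta\rceil$, leads to $2k > 2t + \Delta + \lceil k/\Delta\rceil$, contradicting the hypothesis. The key insights you are missing are (i) that the problem has a clean sequential/greedy solution rather than a matching one, (ii) the specific processing order (increasing multiplicity) that makes the count close, and (iii) the use of $a_\Delta \ge \lceil k/\Delta\rceil$ to link the count to the theorem's hypothesis for the particular $\Delta$ equal to the number of distinct requested indices, rather than quantifying over $\Delta$ to verify Hall for all subsets.
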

\begin{proof}
    Let $\bbv$ be a good vector w.r.t $t$ and let $\cC$ be the code obtained from Construction \ref{Con: $(n, N = (t+1)(4t+1), k = 2t+1, m=n)$-CPIR} based on $\bbv$. In the following, we assume that $|\bbv| = 2t+1$, the proof for the case when $|\bbv|=2t$ is similar. It suffices to show that for any multi-set request $I=\ms{i_1,\ldots,i_k}\subseteq [n]$, there are $k$ mutually disjoint subsets $R_1,\ldots,R_k$ of $[m]=[n]$ such that for each $j\in [k]$, $R_j$ is a recovery set of $x_{i_j}$.

    Assume that $I$ consists of $\Delta$ different indices ${i_1}, {i_2}, \ldots, {i_{\Delta}}$ with multiplicities $a_1, a_2,\ldots, a_{\Delta}$. Then, $a_1+a_2+\cdots+a_{\Delta}=k$. W.l.o.g., we assume that $1\leq a_1\leq a_2\leq\cdots\leq a_{\Delta}$.

    By Lemma \ref{lem: Cons2 is CPIR}, for each $i\in [n]$, $x_{i}$ has $2t+1$ mutually disjoint recovery sets, $\{i\}$, $\{i-j, i+t+j(\bbv)-j\}_{j\in [t]}$ and $\{i+j, i+t+j(\bbv)\}_{j\in [t]}$. Denote $\cR_{i}=\{\{i\}$, $\{i-j, i+t+j(\bbv)-j\}_{j\in [t]},\{i+j, i+t+j(\bbv)\}_{j\in [t]}\}$ as the family of the $2t+1$ recovery sets for $x_{i}$. Next, based on $\cR_{i_j}$ for each $j\in [\Delta]$, we construct $a_j$ recovery sets for each $x_{i_j}$ through the following approach:
    \begin{itemize}
        \item First, for each $j\in [\Delta]$, since $x_{i_j}\in \bbc_{i_j}$, we take $\{i_j\}$ as the first recovery set of $x_{i_j}$.
        \item Next, assume that for a certain $1\leq j<\Delta$, we have already obtained $a_1+\cdots+a_{j-1}$ recovery sets for $x_{i_{1}},\ldots,x_{i_{j-1}}$, and all these recovery sets together with $\{i_j\},\{i_{j+1}\},\ldots,\{i_{\Delta}\}$ are mutually disjoint. Then, we choose arbitrarily $a_{j}-1$ recovery sets of $x_{i_j}$ from $\cR_{i_j}$ that are disjoint with all the constructed recovery sets. Note that each of these $a_j-1$ recovery sets has form $\{i_j-j', i_j+t+j'(\bbv)-j'\}$ or $\{i_j+', i_j+t+j'(\bbv)\}$ for some $j'\in [t]$.
    \end{itemize}

    Now, we show that the above approach can always proceed as long as $j\leq \Delta$.

    Otherwise, assume we fail at step $\delta$ ($\delta\leq \Delta$). That is, we have already obtained $a_1+\cdots+a_{\delta-1}$ recovery sets for $x_{i_{1}},\ldots,x_{i_{\delta-1}}$ and these recovery sets together with $\{i_\delta\},\{i_{\delta+1}\},\ldots,\{i_{\Delta}\}$ are mutually disjoint, but at least $2t - (a_{\delta}-1) + 1$ members of $\cR_{i_\delta}\setminus\{i_{\delta}\}$ intersect with the union of the $a_1+\cdots+a_{\delta-1}+\Delta-\delta+1$ constructed recovery sets.

    Note that each constructed recovery sets of size $1$ intersects at most $1$ set in $\cR_{i_\delta}\setminus\{i_{\delta}\}$, and each constructed recovery sets of size $2$ intersects at most $2$ sets in $\cR_{i_\delta}\setminus\{i_{\delta}\}$. Moreover, the constructed recovery sets consists of $\Delta$ recovery sets of size $1$, known as $\{i_1\},\ldots,\{i_{\Delta}\}$ and $a_1+\cdots+a_{\delta-1}-(\delta-1)$ recovery sets of size $2$. Thus, as $\{i_{\delta}\}$ is disjoint with other sets in $\cR_{i_\delta}\setminus\{i_{\delta}\}$, the number of sets in $\cR_{i_\delta}\setminus\{i_{\delta}\}$ intersect with the union of the constructed recovery sets is at most
    $$\Delta-1+2(a_1+\cdots+a_{\delta-1}-(\delta-1)).$$
    By the assumption, we have
    \begin{align*}
    \Delta-1+2(a_1+\cdots+a_{\delta-1}-(\delta-1))&\geq 2t - (a_{\delta}-1) + 1\\
    &\geq 2t - (a_{\Delta}-1) + 1.
    \end{align*}
    This implies that
    \begin{align*}
       0\leq &\Delta-1+2(a_1+\cdots+a_{\delta-1}-(\delta-1)) - (2t - (a_{\Delta}-1) + 1)\\
       = & \Delta - 1 + 2(a_1+\cdots+a_{\delta-1}+a_{\Delta}-\delta) - (2t+1)- (a_{\Delta}-1)\\
       \leq & \Delta - 1 + 2(k-\Delta) - (2t+a_{\Delta})\\
        = & 2k - 2t - \Delta - a_{\Delta}-1\\
        \leq &2k - 2t - \Delta - \lceil\frac{k}{\Delta}\rceil-1 < 0,
    \end{align*}
     which contradicts to the condition of $k$.

     This concludes the proof.
\end{proof}

By Theorem \ref{Thm: sufficient cond for Cons2 being BAC}, we have the following corollaries.

\begin{corollary}
    The code obtained from Construction \ref{Con: $(n, N = (t+1)(4t+1), k = 2t+1, m=n)$-CPIR} with $\bbv = (1, 1)$, which is also the code shown in Example \ref{ex: $(n=5, N=10, k=3, m=5)$-computational batch code}, is an $(n=5, N=10, k=3, m=5)$-BAC and the code obtained from Construction \ref{Con: $(n, N = (t+1)(4t+1), k = 2t+1, m=n)$-CPIR} with $\bbv = (2, 3, 2, 4, 3, 1, 1, 4)$ is an $(n=17, N=85, k=7, m=17)$-BAC.
\end{corollary}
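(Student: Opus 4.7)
The plan is to verify that each of the two listed codes satisfies the hypotheses of the preceding theorem (the sufficient condition on $k$ for Construction \ref{Con: $(n, N = (t+1)(4t+1), k = 2t+1, m=n)$-CPIR} to produce a BAC), and then invoke that theorem directly. There are essentially three things to confirm in each case: (i) the vector $\bbv$ is a good vector with respect to the appropriate $t$; (ii) the parameters $(n,N,k,m)$ match the output of the construction; and (iii) the inequality $2k \leq 2t + \Delta + \lceil k/\Delta \rceil$ holds for every $\Delta \in [k]$.

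For the first case, take $\bbv=(1,1)$, which is explicitly noted in the examples to be a good vector with respect to $t=1$ (and has length $2t=2$, so $n=4t+1=5$ and $N=(t+1)n=10$, matching the claim). The condition on $k=3$ reduces to checking $6 \leq 2 + \Delta + \lceil 3/\Delta \rceil$ for $\Delta \in \{1,2,3\}$, which gives $2+1+3$, $2+2+2$, $2+3+1$, all equal to $6$. Since the construction already produces the code from Example \ref{ex: $(n=5, N=10, k=3, m=5)$-computational batch code} (as noted in the remark following the construction), the first claim follows from Theorem \ref{Thm: sufficient cond for Cons2 being BAC}.

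For the second case, take $\bbv=(2,3,2,4,3,1,1,4)$, which is verified to be a good vector with respect to $t=4$ in the example following Definition \ref{Def: good vectors}. Its length is $2t=8$, so $n=4t+1=17$ and $N=(t+1)n = 5 \cdot 17 = 85$, as claimed. The condition on $k=7$ becomes $14 \leq 8 + \Delta + \lceil 7/\Delta \rceil$ for each $\Delta \in [7]$; the six values are $8+1+7=16$, $8+2+4=14$, $8+3+3=14$, $8+4+2=14$, $8+5+2=15$, $8+6+2=16$, $8+7+1=16$, each at least $14$. Hence Theorem \ref{Thm: sufficient cond for Cons2 being BAC} applies, yielding a $(17,85,7,17)$-BAC.

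Since both verifications are small finite checks, there is no real obstacle in the proof; the only thing to be careful about is the bookkeeping between the length of $\bbv$ (either $2t$ or $2t+1$) and the resulting value of $n$, and ensuring that the parameter $k=2t+1$ used by Lemma \ref{lem: Cons2 is CPIR} corresponds to the PIR array code structure underlying the BAC check. Once these are aligned, the corollary is an immediate application of the theorem.
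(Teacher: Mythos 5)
Your proposal is correct and follows the only natural route: the paper states this corollary without proof as an immediate application of Theorem \ref{Thm: sufficient cond for Cons2 being BAC}, and you carry out exactly the required finite verification — checking that $\bbv=(1,1)$ is a good vector w.r.t.\ $t=1$ (giving $n=5$, $N=10$) and $\bbv=(2,3,2,4,3,1,1,4)$ is a good vector w.r.t.\ $t=4$ (giving $n=17$, $N=85$), then confirming the inequality $2k \leq 2t + \Delta + \lceil k/\Delta\rceil$ for all $\Delta\in[k]$ with $(t,k)=(1,3)$ and $(t,k)=(4,7)$ respectively. The only blemish is cosmetic: you write ``the six values'' before listing all seven values of $\Delta\in[7]$ in the second case; the arithmetic itself is correct throughout.
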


\begin{corollary}
    The code obtained from Construction \ref{Con: $(n, N = (t+1)(4t+1), k = 2t+1, m=n)$-CPIR} is an $(n, N = (t+1)\bcom{n}, k = \lfloor(\sqrt{t+\frac{1}{4}}+\frac{1}{2})^2\rfloor, n)$-BAC.
\end{corollary}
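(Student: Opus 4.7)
The plan is to apply Theorem \ref{Thm: sufficient cond for Cons2 being BAC}: it suffices to verify that $k := \lfloor(\sqrt{t+\frac{1}{4}}+\frac{1}{2})^2\rfloor$ satisfies $2k \leq 2t + \Delta + \lceil k/\Delta \rceil$ for every $\Delta \in [k]$, or equivalently, that $\Delta + \lceil k/\Delta \rceil \geq 2(k-t)$ holds uniformly in $\Delta$.

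The main step is an AM--GM reduction. Using $\lceil k/\Delta \rceil \geq k/\Delta$ together with $\Delta + k/\Delta \geq 2\sqrt{k}$, the task reduces to showing $\sqrt{k} \geq k - t$. If $k \leq t$ this is immediate, so I would set $s := k - t \geq 1$ and aim for $s^2 \leq s + t$, i.e., $s^2 - s - t \leq 0$.

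For this last step I would exploit the defining identity of $u := \sqrt{t+\frac{1}{4}}+\frac{1}{2} = \frac{1+\sqrt{1+4t}}{2}$, namely $u^2 - u - t = 0$. The quadratic $x^2 - x - t$ opens upward with roots $1 - u$ and $u$, so it is nonpositive precisely on $[1-u, u]$. Since $k = \lfloor u^2 \rfloor \leq u^2 = u + t$, I obtain $s = k - t \leq u$, while $s \geq 0 \geq 1 - u$ is trivial; hence $s \in [1-u, u]$ and $s^2 - s - t \leq 0$, as needed. The argument is otherwise routine; the only subtle point is recognizing that the closed-form $\lfloor(\sqrt{t+\frac{1}{4}}+\frac{1}{2})^2\rfloor$ is tailored exactly to the AM--GM bound through the identity $u^2 = u + t$, which explains why the resulting $k$ need not be the largest admissible value in Theorem \ref{Thm: sufficient cond for Cons2 being BAC} (the ceiling can push $\Delta + \lceil k/\Delta \rceil$ above $2\sqrt{k}$), but is sufficient to produce the clean asymptotic $k = \Theta(\sqrt{t})$.
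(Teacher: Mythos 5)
Your proposal is correct and follows essentially the same route as the paper: apply Theorem \ref{Thm: sufficient cond for Cons2 being BAC}, use AM--GM to reduce the condition to $2\sqrt{k}+2t\geq 2k$, and then check that the floor choice of $k$ satisfies it. The only difference is cosmetic — the paper completes the square, writing $2t+2\sqrt{k}-2k=-2(\sqrt{k}-\tfrac12)^2+2t+\tfrac12$, whereas you substitute $s=k-t$ and locate $s$ between the roots of $x^2-x-t$; both steps exploit the same identity $u^2-u=t$ for $u=\sqrt{t+\tfrac14}+\tfrac12$.
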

\begin{proof}
    For any $\Delta\in [k]$, we have
    \begin{align*}
         2t+\Delta+\lceil\frac{k}{\Delta}\rceil-2k
         \geq &2t+2\sqrt{k}-2k\\
         = &-2(\sqrt{k}-\frac{1}{2})^2 + 2t + \frac{1}{2}.
    \end{align*}
    Thus, when $k = \lfloor(\sqrt{t+\frac{1}{4}}+\frac{1}{2})^2\rfloor$, we have $-2(\sqrt{k}-\frac{1}{2})^2 + 2t + \frac{1}{2}\geq 0$ and $2k\leq 2t+\Delta+\lceil\frac{k}{\Delta}\rceil$ holds for every $\Delta\in [k]$. Then, the result follows immediately.
\end{proof}

Using the gadget lemma, we combine Construction \ref{Con: $(n, N=(2k-m+(m-k)^2/k)n, k, m)$-BAC} and Construction \ref{Con: $(n, N = (t+1)(4t+1), k = 2t+1, m=n)$-CPIR} together and obtain the following upper bound on $N_p(n,k,m)$. This improves the result for $N_p(n,k,m)$ in Corollary \ref{coro: UB on N_B by Cons 1} when $m \leq \frac{3}{2}k$.

\begin{corollary}\label{coro: UB on N_P by Cons 2}
    When $\frac{3}{2}k < m < 2k$ satisfies $2m-3k$ is odd and $\mathrm{lcm}(4m-6k,4k-2m)\mid n$, we have
    \begin{align*}
        N_{P}(n, k, m)\leq \frac{3k-m+1}{2}n.
    \end{align*}
\end{corollary}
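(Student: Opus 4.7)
The plan is to combine two codes via the concatenation version of the gadget Lemma \ref{lem: gadget lemma}: a BAC from Construction \ref{Con: $(n, N=(2k-m+(m-k)^2/k)n, k, m)$-BAC} and a PIR array code from Construction \ref{Con: $(n, N = (t+1)(4t+1), k = 2t+1, m=n)$-CPIR}. The parameters must be chosen so that the two pieces sum to $(k,m)$ and their lengths sum to exactly $\tfrac{3k-m+1}{2}n$.

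The key choice is
$$k_1 = 4k-2m, \quad m_1 = 6k-3m, \quad k_2 = 2m-3k, \quad m_2 = 4m-6k,$$
so that $k_1+k_2=k$, $m_1+m_2=m$, and $m_1 = \tfrac{3}{2}k_1$. The last equality is significant because, as noted just after Theorem \ref{Thm: $(n, N=(2k-m+(m-k)^2/k)n, k, m)$-BAC}, $m_1=\tfrac{3}{2}k_1$ minimizes the overhead $2k_1-m_1+(m_1-k_1)^2/k_1$, whose minimum value is $\tfrac{3k_1}{4}$. The assumption $\tfrac{3k}{2}<m<2k$ guarantees $k_1,k_2>0$, and the parity assumption on $2m-3k$ ensures that $k_2$ is a positive odd integer, so we may write $k_2=2t+1$ with $t=(2m-3k-1)/2$ and then $m_2=4t+2$. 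The conditions required by Construction \ref{Con: $(n, N=(2k-m+(m-k)^2/k)n, k, m)$-BAC}, namely $k_1<m_1<2k_1$ and $(m_1-k_1)\mid k_1$, follow immediately from $m_1-k_1=2k-m$ and $k_1=2(2k-m)$; the hypothesis of Construction \ref{Con: $(n, N = (t+1)(4t+1), k = 2t+1, m=n)$-CPIR} is met by a good vector of length $2t+1$ guaranteed by Lemma \ref{lem: good vector of length 2t+1}. The divisibility assumption $\mathrm{lcm}(4m-6k,4k-2m)\mid n$ is precisely what is needed for $k_1\mid n$ and $m_2\mid n$ (the latter via the scaling part of Lemma \ref{lem: gadget lemma}).

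Having produced an $(n, \tfrac{3k_1}{4}n, k_1, m_1)$-BAC and an $(n,(t+1)n,k_2,m_2)$-PIR array code on the same $n$ information symbols, applying the first bullet of Lemma \ref{lem: gadget lemma} (using that every BAC is a PIR array code) yields an $(n,N,k,m)$-PIR array code with
$$N = \tfrac{3k_1}{4}n + (t+1)n = \Bigl(3k - \tfrac{3m}{2} + \tfrac{2m-3k+1}{2}\Bigr)n = \tfrac{3k-m+1}{2}n,$$
which establishes the stated bound. There is no serious obstacle; the argument is essentially an exercise in matching constants. The only genuine design decision is the split $m_1=\tfrac{3}{2}k_1$, which simultaneously minimizes the overhead from the first code and forces the second code's required $m_2$ to equal $4t+2$, the value compatible with Construction \ref{Con: $(n, N = (t+1)(4t+1), k = 2t+1, m=n)$-CPIR}.
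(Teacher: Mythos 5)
Your proposal is correct and takes essentially the same route as the paper: both decompose $(k,m)$ as $(4k-2m,6k-3m)+(2m-3k,4m-6k)$, apply Construction \ref{Con: $(n, N=(2k-m+(m-k)^2/k)n, k, m)$-BAC} to the first piece (with $m_1=\tfrac{3}{2}k_1$, giving length $\tfrac{3(2k-m)}{2}n$), apply Construction \ref{Con: $(n, N = (t+1)(4t+1), k = 2t+1, m=n)$-CPIR} with $t=\tfrac{2m-3k-1}{2}$ to the second (giving length $\tfrac{2m-3k+1}{2}n$ after scaling), and combine via the first item of the gadget Lemma \ref{lem: gadget lemma}. The only addition is your explanatory remark that $m_1=\tfrac{3}{2}k_1$ is the overhead-minimizing split for Construction \ref{Con: $(n, N=(2k-m+(m-k)^2/k)n, k, m)$-BAC}, which the paper leaves implicit.
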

\begin{proof}
    Let $t = \frac{2m-3k-1}{2}$. Then, Construction \ref{Con: $(n, N = (t+1)(4t+1), k = 2t+1, m=n)$-CPIR} gives a $(4m-6k, \frac{2m-3k+1}{2}(4m-6k), 2m-3k, 4m-6k)$-PIR array code. By the gadget lemma \ref{lem: gadget lemma}, there is an $(n, \frac{2m-3k+1}{2}n, 2m-3k, 4m-6k)$-PIR array code for $n$ satisfying $(4m-6k)\mid n$. On the other hand, Construction \ref{Con: $(n, N=(2k-m+(m-k)^2/k)n, k, m)$-BAC} gives an $(n, \frac{3(2k-m)}{2}n, 4k-2m, 6k-3m)$-PIR array code for $n$ satisfying $(4k-2m)\mid n$. Thus, the first term of the gadget lemma \ref{lem: gadget lemma} implies that there is an $(n, \frac{3k-m+1}{2}n, k, m)$-PIR array code for $n$ satisfying $\mathrm{lcm}(4m-6k,4k-2m)\mid n$.
\end{proof}


\subsection{A random construction of systematic BACs}

Through a random construction based on point-line incidences on the plane $\mathbb{F}_q^2$, Polyanskaya et al. \cite{PPV20} obtained systematic linear batch codes with the smallest known redundancies for certain parameter regime. In this subsection, we modify their random construction and prove the following result.



\begin{theorem}\label{Thm: random_cons for BAC}
Let $q$ be a large enough prime power and $n=q^2$. Let $k$ and $s$ be positive integers such that $(ks)^{\frac{3}{2}}< \frac{n^{1/4}}{32\ln {n}}$. Then, there exists a binary $(n,N,k,2\lceil\frac{\sqrt{n}}{s}\rceil)$-BAC with length
$$N\leq n+64(ks)^{\frac{3}{2}}n^{\frac{3}{4}}\ln{n}.$$
\end{theorem}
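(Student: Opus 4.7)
The plan is to adapt the Polyanskaya--Pol'yanskii--Vorob'ev construction to the array setting, replacing its single-symbol buckets by larger buckets obtained from a random partition of the plane. I would identify the $n=q^2$ information coordinates with the points of $\Fq^2$ and choose the $m_1 \triangleq \lceil \sqrt{n}/s\rceil$ \emph{information buckets} $P_1,\ldots,P_{m_1}$ as a uniformly random partition of $\Fq^2$; the $i$-th information bucket stores $\{x_p : p \in P_i\}$, holding $\sim s\sqrt{n}$ symbols in expectation. I would then add $m_1$ \emph{parity buckets} as follows: for every collinear $(s+1)$-subset $S$ of $\Fq^2$, include its parity $y_S \triangleq \sum_{p\in S} x_p$ independently with a carefully tuned probability $\pi$, and assign each included parity to one of the $m_1$ parity buckets uniformly at random. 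The total number of buckets is then $m = 2m_1 = 2\lceil\sqrt{n}/s\rceil$, the code is systematic with $N = n + |\cT|$ where $\cT$ is the random collection of included tuples, and by linearity of expectation $\pi$ can be chosen so that $\E|\cT| = \Theta((ks)^{3/2} n^{3/4} \ln n)$.

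Given a multiset request $\ms{i_1,\ldots,i_k}\subseteq [n]$, the plan is to build the recovery set $R_j$ for $x_{i_j}$ from a short segment through the point $p_{i_j}$ corresponding to $i_j$. For each $j$ I would select an $(s+1)$-subset $S_j \in \cT$ that contains $p_{i_j}$, avoids every other requested point, and whose bucket footprint $R_j \triangleq \{i \in [m_1] : P_i \cap (S_j\setminus\{p_{i_j}\}) \ne \varnothing\} \cup \{\text{parity bucket holding } y_{S_j}\}$ is disjoint from the footprints of the other requests. Since $|R_j| \le s+1$, the total bucket usage is at most $k(s+1)$, well within $m$ under the hypothesis $(ks)^{3/2}< n^{1/4}/(32\ln n)$. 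The local response from each bucket of $R_j$ (the sum of its $S_j$-intersection, or the stored parity $y_{S_j}$) combines linearly to $x_{i_j}$, fulfilling Definition \ref{def_BAC}.

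The heart of the argument will be a probabilistic analysis showing that, with positive probability over the random partition and the random choice of $\cT$, such a selection of segments exists for every multiset of size $k$ simultaneously. For a fixed request $p_{i_j}$ inside a fixed multiset, the other $k-1$ requested points lie on at most $k-1$ distinct lines through $p_{i_j}$, so among the $\Theta(q\binom{q-1}{s}) = \Theta(n^{(s+1)/2})$ collinear $(s+1)$-subsets containing $p_{i_j}$ only a negligible fraction meet another request; a Chernoff bound applied to the independent inclusions then guarantees that $p_{i_j}$ has many surviving candidate segments in $\cT$. I would then convert the existence of candidates into a system of disjoint recovery sets via a greedy / Hall-type matching argument, exploiting the small footprint $|R_j|\le s+1$ and the abundance of candidates to reserve buckets one request at a time without conflict. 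A union bound over the at most $\binom{n+k-1}{k}\le n^k$ multiset requests, together with a Chernoff tail bound giving $|\cT| \le 64(ks)^{3/2} n^{3/4}\ln n$ with probability $1-o(1)$, shows that both the batch property and the length bound hold on the same realization with positive probability.

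The main obstacle will be the simultaneous calibration of the inclusion probability $\pi$ and the parameters of the random partition: $\pi$ must be large enough that each request retains $\Omega(k\log n)$ valid candidate segments even after conditioning on the other $k-1$ requests and on the bucket footprints already reserved for them, yet small enough to keep the expected redundancy at $O((ks)^{3/2} n^{3/4} \ln n)$. The hypothesis $(ks)^{3/2}< n^{1/4}/(32\ln n)$ is precisely the threshold at which these two competing demands can be satisfied, and the delicate part of the proof will be tracking the tail probabilities for the random bucket footprints tightly enough to recover the explicit constant $64$ in the final bound on $N$.
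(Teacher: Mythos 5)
Your proposal is genuinely different from the paper's construction in three respects (random rather than slope-structured information buckets; fixed-size $(s+1)$-parities rather than random subsets of full affine lines; random rather than slope-determined parity-bucket assignment), and one of these deviations breaks the redundancy bound.

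The fixed parity size $s+1$ is too small. Each parity check $y_S$ is usable only by the $s+1$ points it contains, so the union bound over the $\leq n^k$ requests forces you to have $\Omega(k\ln n)$ selected candidate segments through \emph{every} point. Counting incidences, with $\pi (q+1)\binom{q-1}{s}$ expected candidates per point and $\pi(q^2+q)\binom{q}{s+1}$ expected parities total, the ratio is exactly $\tfrac{s+1}{n}$, so $\E|\cT|\geq \tfrac{C\,k\,n\ln n}{s+1}$. Compared against the target $64(ks)^{3/2}n^{3/4}\ln n$, this overshoots by a factor of order
$$\frac{kn\ln n/(s+1)}{(ks)^{3/2}n^{3/4}\ln n}=\frac{n^{1/4}}{(s+1)s^{3/2}k^{1/2}} > \frac{32k\ln n}{s+1},$$
using the hypothesis $(ks)^{3/2}<n^{1/4}/(32\ln n)$. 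So whenever $s=O(k\ln n)$ (in particular for $s=1$, which the hypothesis permits) your $\E|\cT|$ is larger than the claimed $N-n$ by a factor that grows with $n$. Equivalently: if you tune $\pi$ so that $\E|\cT|=\Theta((ks)^{3/2}n^{3/4}\ln n)$, a direct calculation shows the expected number of candidate parities through a fixed point is at most $2(s+1)$, far short of the $\Omega(k\ln n)$ you need for the $n^k$-term union bound.

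The paper resolves this tension by \emph{not} fixing the parity size at $s+1$. It first selects a random set of affine lines $\cF$ (probability $p_1$ per non-vertical line), and then on each chosen line picks a random subset $P(L)$ of expected size $p_2q$, with $p_2=\tfrac{1}{2\sqrt{ksq}}$, so the parity checks have typical size $\Theta(n^{1/4}/\sqrt{ks})$ --- strictly larger than $s+1$ throughout the allowed parameter range. This larger size lets one parity serve many more points, which is exactly what brings the parity count down to $\Theta(p_1 n)=\Theta((ks)^{3/2}n^{3/4}\ln n)$; the cost is that a recovery set $R_j$ touches up to $\Theta(n^{1/4}/\sqrt{ks})$ information buckets rather than $s+1$, and the disjointness analysis is rescued by the factor $(1-p_2)^{2ksp_2q}\geq\tfrac12$ baked into the choice of $p_2$. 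Two further structural choices feed the argument: the information buckets are bundles of $s$ parallel vertical lines (so every non-vertical line meets each bucket in exactly $s$ points, a clean combinatorial fact your uniformly random partition does not give), and parity symbols are routed to buckets by slope group, so that each recovery set consumes exactly one parity bucket and at least $m/4$ slope groups remain free for the next request. If you want to pursue the random-construction route, you should adopt tunable-size parities (subsets of lines) rather than $(s+1)$-tuples; once you do that, you will essentially be forced into the paper's balance $p_2\sim (ksq)^{-1/2}$ and $p_1\sim (ks)^{3/2}q^{-1/2}\ln n$.
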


First, we need some notations to present the construction of the code. Let $q$ be a prime power and set $n=q^2$. Let $(\cP,\mathcal{L})$ denote the finite affine plane of order $q$, where $\cP$ is the set of $n=q^2$ points and $\mathcal{L}$ is the set of all the $n+q$ lines over the plane. Then, any two lines in $\mathcal{L}$ intersect in at most one point, each line in $\mathcal{L}$ contains $q$ points, and each point in $\cP$ is contained in $q+1$ lines. For each $L\in \cL$, denote $\mathrm{Slope}(L)$ as the slope of $L$. Let $m=2\lceil\frac{q}{s}\rceil$ for some integer $s\in [q]$ to be determined. Denote $\mathcal{L}^{\infty}=\{L^{\infty}_1,L^{\infty}_2,\ldots,L^{\infty}_q\}$ as the set of $q$ lines with infinite slope in $\mathcal{L}$. For $i\in [\frac{m}{2}]$, denote $$\mathcal{L}^{i}=\{L\in \mathcal{L}:\mathrm{Slope}(L)= is-a ~\text{for some}~0\leq a\leq s-1\}.$$

\begin{construction}\label{random_con}
Let $0< p_1,p_2<1$ be two parameters to be determined later. 
\begin{itemize}
    \item[\underline{Step 1:}] Choose each line from $\cL\setminus \cL^{\infty}$ independently with probability $p_1$ (without repetition), and denote the resulting set of lines as $\mathcal{F}$.

    \item[\underline{Step 2:}] We operate on each line $L\in \cL\setminus \cL^{\infty}$ independently and construct a point set $P(L)\subseteq L$ by picking each point on $L$ independently with probability $p_2$.
\end{itemize}
Now, based on $(\cP,\mathcal{L})$ and $\mathcal{F}$, we construct a systematic linear code $\mathcal{C}$ of dimension $n$ as follows:
\begin{enumerate}
    \item Denote $\cP=\{x_1,x_2,\ldots,x_n\}$. Fix a bijection that associates $n$ information symbols with points in $\cP$. With a slight abuse of notation, we also use $x_i$ to denote the $i$-th symbol of the information vector $\bbx\in\Fq^n$.
    \item For every $L\in\mathcal{F}$, if $P(L)\neq\varnothing$, define a parity-check symbol $y_{L}=\sum_{x\in P(L)}x$.
    \item We define the codeword in $\cC$ that encodes $\bbx$, $\cC(\bbx)=(\bbc_1,\bbc_2,\ldots,\bbc_m)$, as follows\footnote{As $m = 2\lceil\frac{q}{s}\rceil$, $\frac{ms}{2}$ might be larger than $q$. Thus, we define $L_j^{\infty} = \varnothing$ for $j > q$ here.},
    $$\bbc_{\ell}=\begin{cases}
        (x_i,~x_i\in\bigcup_{j=(\ell-1)s+1}^{\ell s}L^{\infty}_j),~\text{ when $1\leq \ell\leq \frac{m}{2}$};\\
        (y_{L},~~L\in \mathcal{F}\cap \mathcal{L}^{\ell-\frac{m}{2}}),~\text{ when $\frac{m}{2}+1\leq \ell\leq m$}.
    \end{cases}$$
\end{enumerate}
\end{construction}

Denote $\cP_{\ell}=\bigcup_{j=(\ell-1)s+1}^{\ell s}L^{\infty}_j$. Then, for each $\ell\in [\frac{m}{2}]$, $\bbc_\ell$ consists of all the information symbols that lie in $\cP_\ell$, and $\bbc_{\ell+\frac{m}{2}}$ consists of all the parity-check symbols that correspond to lines in $\cL^{\ell}\cap \cF$. Note that as point sets, $\cP_1,\ldots,\cP_{\frac{m}{2}}$ form a partition of $\cP$. Thus, for each information symbol $x_i$, there is a unique $\ell\in [\frac{m}{2}]$ such that $x_i\in\bbc_\ell$. That is, $(\bbc_1,\ldots,\bbc_{\frac{m}{2}})$ is a permutation of $\bbx$ and $\sum_{\ell=1}^{\frac{m}{2}}|\bbc_\ell|=n$. Moreover, note that $\cL^{\ell}\cap \cL^{\ell'}=\varnothing$ for $\ell\neq \ell'$. Thus, for each $L\in \cF$, there is a unique $\ell\in[\frac{m}{2}]$ such that the parity-check symbol $y_L\in \bbc_{\ell+\frac{m}{2}}$. This leads to $\sum_{\ell=1}^{\frac{m}{2}}|\bbc_{\ell+\frac{m}{2}}|\leq |\cF|$. Therefore, the code $\cC$ obtained from Construction \ref{random_con} is indeed a systematic linear code of length $N\leq n+|\cF|$.

Notice that Construction \ref{random_con} may fail to yield an $(n,N,k,m)$-BAC. The construction fails if there exists a multiset of requests $I$ for which there are no mutually disjoint recovery sets.
Therefore, to prove Theorem \ref{Thm: random_cons for BAC}, we aim to demonstrate that with high probability (w.h.p.), the code $\cC$ resulting from Construction \ref{random_con} has the following property: For any multiset of requests $I=\ms{{i_1},{i_2},\ldots, {i_k}}$, there exists a recovery set for each $x_{i_j}$, and all these $k$ recovery sets are pairwise disjoint \footnote{It is noteworthy that the value of $k$ cannot surpass $m$, given that there are $m$ distinct buckets. Indeed, our assumption holds that $(ks)^{\frac{3}{2}}< \frac{n^{1/4}}{32\ln {n}}$. Since $n=q^2$ and $m=2\lceil\frac{q}{s}\rceil$, when $q$ is large enough, it follows that $k\leq \frac{n^{1/6}}{s}\leq m$.}.

To accomplish this objective, we employ an algorithm that given a multiset of requests, finds mutually disjoint recovery sets w.h.p.
\begin{enumerate}
    \item [1.] Let $I=\ms{i_1,i_2\ldots,i_k}$ be a multiset of requests. Assume that for a certain $1\leq j<k$, we have already established $j-1$ mutually disjoint recovery sets $R_{1},\ldots,R_{j-1}$ for $x_{i_1},\ldots,x_{i_{j-1}}$.
    \item [2.] To proceed with finding the subsequent recovery sets for $x_{i_j}$, consider all the lines in $\mathcal{F}$ that pass through $x_{i_j}$. Find a line $L_{j}\in \cF$ such that the following requirements hold:
    \begin{itemize}
        \item [1)] $x_{i_j}\in P(L_{j})$ and $L_{j}\cap \left(\{x_{i_1},\ldots,x_{i_k}\}\setminus\{x_{i_j}\}\right)=\varnothing$.
        \item [2)] $R_j$ is disjoint with $\bigcup_{j'=1}^{j-1}R_{j'}$, where
    \begin{align}\label{eq1_random_con}
    R_{j}=&\{\ell:~\ell\in [\frac{m}{2}]~\mathrm{s. t.}\left(P(L_{j})\setminus \{x_{i_j}\}\right)\cap \cP_\ell\neq \varnothing\} \nonumber\\
    &\cup\{\ell+\frac{m}{2}:~\ell\in [\frac{m}{2}]~\mathrm{s.t.}~L_{j}\in \cL^{\ell}\}.
    \end{align}
    \end{itemize}
    Since $y_{L_{j}}=\sum_{x\in P(L_{j})}x$, $x_{i_j}$ can be recovered by $y_{L_{j}}$ and symbols in $P(L_{j})\setminus\{x_{i_{j}}\}$. Hence, $x_{i_j}$ can be recovered by $\{\bbc_{\ell}:~\ell\in [\frac{m}{2}]~\mathrm{s.t.}~\left(P(L_{j})\setminus \{x_{i_j}\}\right)\cap \cP_\ell\neq \varnothing\}$ together with $\{\bbc_{\ell+\frac{m}{2}}:~\ell\in [\frac{m}{2}]~\mathrm{s.t.}~L_{j}\in \cL^{\ell}\}$. That is, $R_j$ is a recovery set for $x_{i_j}$.
    \item [3.] Repeat this process until $j=k$, i.e., until all the recovery sets are obtained.
\end{enumerate}

To prove the theorem, we need to estimate the probability that there exists a multiset of requests $I$ for which the algorithm above fails. For every multiset $I$ and every round $j$, the construction fails if at least one of the following occurs.
\begin{enumerate}
    \item There is no $L_{j}\in\cF$ such that $x_{i_j}\in P(L_j)$ and $L_{j}\cap \left(\{x_{i_1},\ldots,x_{i_k}\}\setminus\{x_{i_j}\}\right)=\varnothing$.
    \item The set $R_{j}$ defined in (\ref{eq1_random_con}) is not disjoint with all the previously chosen recovery sets.
    \item The redundancy of the code is larger than $O((ks)^{\frac{3}{2}}n^{\frac{3}{4}}\ln{n})$, i.e., $|\cF|>O((ks)^{\frac{3}{2}}n^{\frac{3}{4}}\ln{n})$.
 \end{enumerate}
Since the estimations of the failure probability of the algorithm are similar to those in the proof of Theorem 1 in \cite{PPV20}, we leave the detailed proof of Theorem \ref{Thm: random_cons for BAC} in the appendix for interested readers.


\section{Conclusion and further research}\label{sec: conclusion}

In this paper, inspired by the idea of trading computing powers for storage spaces as many others works in the field of storage codes, we study the batch array codes (BACs), which is a general array version of the batch codes introduced in \cite{IKOS04}. Under the setting of BACs, a server can respond to user requests by transmitting functions derived from the data it stored. With this property, BACs can have less storage overhead compared to the original batch codes for the same application scenarios.

Our focus in this paper lies in investigating the trade-off between $N$, the length of a BAC, and $k$, the parameter for disjoint-recovery-set property. We first establish a general lower bound on the code length, and then improved bounds are obtained for the cases when $k<m<2k$ and $m=k+2$. We also offer several code constructions. For some cases, the lengths of the codes obtained from these constructions match our lower bounds and thus prove the tightness of our bounds.

In the following, we list some questions that, in our opinion, could further improve the research on this topic.

\begin{enumerate}
    \item The first and foremost open question is constructions of good BACs. That is, BACs with small code length. There have been a lot of good constructions of PIR codes, batch codes and their variants in recent years using different methods. Some of these codes and their constructing methods may possibly be modified to obtain good BACs.
    \item The lower bound in Theorem \ref{Thm: general LB_2} is shown to be optimal for the case $m=k+1$ by Construction \ref{Con: $(n, N=(2k-m+(m-k)^2/k)n, k, m)$-BAC}. However, when $m=k+2$, it's no longer optimal, as shown by Theorem \ref{Thm: LB of $(n,N,k,k+2)$-PIRA}. We believe that, similar to the case $m=k+2$, the lower bound in Theorem \ref{Thm: general LB_2} is not optimal for general $m\geq k+3$. Hence, the next question is to prove a better lower bound for general $m\geq k+3$ and provide corresponding optimal constructions. In the proof of Theorem \ref{Thm: LB of $(n,N,k,k+2)$-PIRA}, properties of the response matrix are used in the analysis of the structure of the code. This provides a possible way to obtain better bounds for $m\geq k+3$. Moreover, hinted by the form of the lower bounds we derived, we conjecture that for $m=k+\delta$, where $\delta$ is a constant, the lower bound is of the form $N_P(n,k,k+\delta)\geq(k-1+\Omega(\frac{1}{k}))n$.
    \item In \cite{NY22}, the authors studied uniform BACs under the setting where the number of symbols being accessed in each bucket during the recovery of each $x_{i_j}$ is at most a given parameter. Hence, another possible direction for the study of non-uniform BACs is to consider codes under this setting.
\end{enumerate}

\section*{Acknowledgement}

We thank Prof. Itzhak Tamo for many illuminating discussions. We also thank Prof. Yuval Ishai for sharing the full version of their paper \cite{IKOS04} to us.

\appendices

\section{Proof of Theorem \ref{Thm: random_cons for BAC}}

To simplify the proof of Theorem \ref{Thm: random_cons for BAC}, we use several lemmas. The first lemma is the Chernoff bound.
\begin{lemma}[Chernoff bound \cite{Chernoff52}, see Thm. 4.4 and Thm. 4.5 in \cite{Prob2017} for the current version]
\label{lem:chern}
Let $X=\sum_{i=1}^{n}X_i$, where $X_i=1$ with probability $p_i$ and $X_i=0$ with
probability $1-p_i$, and all $X_i$ are independent. Let $\mu=\E[X]$. Then,
\begin{equation*}
    \begin{cases}
        \Pr(X\geq (1+\delta)\mu)\leq e^{-\frac{\delta^2\mu}{2+\delta}},~~~~ \forall~ \delta>0,\\
        \Pr(X\leq (1-\delta)\mu)\leq e^{-\mu\delta^2/2},~~\forall~ 0<\delta<1.
    \end{cases}
\end{equation*}
\end{lemma}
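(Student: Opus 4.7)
The plan is a probabilistic-method argument on Construction~\ref{random_con}: I will show that with a suitable choice of $p_1, p_2 \in (0,1)$, a random instance simultaneously satisfies the redundancy bound $|\cF|\leq 64(ks)^{3/2}n^{3/4}\ln n$ and, for every request multiset $I=\ms{i_1,\ldots,i_k}$, admits $k$ pairwise disjoint recovery sets via the greedy algorithm described right after the construction. The natural scaling is $p_1\asymp (ks)^{3/2}n^{-1/4}\ln n$, so that $\E[|\cF|] = p_1\cdot|\cL\setminus\cL^\infty| = p_1 n$ matches (up to constants) the target, and $p_2$ chosen so that $\E[|P(L)|] = p_2 q = O(\sqrt{ks})$, which keeps $|R_j|\leq |P(L_j)|+1 = O(\sqrt{ks})$ and hence $\bigl|\bigcup_{j'<j}R_{j'}\bigr|=O(k\sqrt{ks})$, a vanishing fraction of $m=\Theta(\sqrt n/s)$ under the hypothesis $(ks)^{3/2}<n^{1/4}/(32\ln n)$.

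I would handle the three failure modes listed after the construction in turn. For the redundancy, $|\cF|$ is a sum of $n$ independent $\mathrm{Bernoulli}(p_1)$ variables, so Lemma~\ref{lem:chern} with $\delta=1$ yields $\Pr[|\cF|>2p_1 n]\leq e^{-p_1 n/3}$, which is super-polynomially small since the hypothesis forces $p_1 n$ to be polylogarithmic in $n$. For a fixed $I$ and step $j$, at most $k-1$ of the $q$ non-vertical lines through $x_{i_j}$ meet another request point, so at least $q-k+1$ are \emph{admissible}. For each admissible $L$, one has $\Pr[L\in\cF \text{ and } x_{i_j}\in P(L)] = p_1 p_2$; two further Chernoff estimates control $|P(L)|=O(\sqrt{ks})$ and---conditional on the earlier rounds---show that $R_j(L)$ is disjoint from $\bigcup_{j'<j}R_{j'}$ with probability $1-o(1)$. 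The disjointness piece uses that the slope class of an admissible line is essentially uniform over the $m/2$ non-vertical slope classes and that the strip buckets hit by $P(L)\setminus\{x_{i_j}\}$ avoid the $o(m)$ already-blocked buckets w.h.p., both facts following from the hypothesis together with the bound $|\bigcup_{j'<j}R_{j'}|=O(k\sqrt{ks})$.

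Because distinct admissible lines draw on disjoint per-line randomness from Steps~1--2 of Construction~\ref{random_con}, the number of ``good'' candidate lines at step $j$ stochastically dominates a Binomial of mean $\Omega(qp_1p_2)$, and a third application of Lemma~\ref{lem:chern} makes the probability of no good candidate at most $\exp(-\Omega(qp_1 p_2))$; the chosen scaling forces this to be $o(n^{-(k+1)})$. A union bound over the $k$ rounds and the $\binom{n+k-1}{k}\leq n^k$ possible request multisets $I$ then leaves total failure probability $o(1)$, so a random instance of Construction~\ref{random_con} yields the claimed BAC with positive probability. The main obstacle, which the authors defer to the appendix following \cite{PPV20}, is the disjointness estimate: one must condition on the already-selected recovery sets $R_{j'}$ (which both forbid buckets and freeze some per-line randomness) and still show enough admissible lines with compatible slope class and strip-incidence pattern remain, an intricate first-moment accounting of slope-class and strip-position overlaps.
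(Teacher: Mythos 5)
Your proposal does not address the statement you were asked to prove. The statement is Lemma~\ref{lem:chern} itself, i.e., the Chernoff tail bounds $\Pr(X\geq(1+\delta)\mu)\leq e^{-\delta^2\mu/(2+\delta)}$ and $\Pr(X\leq(1-\delta)\mu)\leq e^{-\mu\delta^2/2}$ for a sum $X=\sum_i X_i$ of independent Bernoulli variables. What you have written instead is a sketch of the probabilistic argument for Theorem~\ref{Thm: random_cons for BAC} (equivalently, of Lemmas~\ref{lem:help1} and~\ref{lem:help2} in the appendix): you analyze Construction~\ref{random_con}, the three failure modes, the choice of $p_1,p_2$, and the union bound over request multisets. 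That argument \emph{uses} Lemma~\ref{lem:chern} as a black box (you even invoke it three times), so nothing in your text establishes the tail inequalities themselves; as a proof of the lemma it is circular and entirely off-target. Note also that the paper does not prove this lemma either --- it is quoted from the literature \cite{Chernoff52} with a pointer to Thm.~4.4 and Thm.~4.5 of \cite{Prob2017} --- so the expected content of a ``proof'' here is the standard exponential-moment argument, not a construction analysis.

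Concretely, a correct proof would go as follows. Since the $X_i$ are independent, for any $t>0$ one has $\E[e^{tX}]=\prod_{i}\bigl(1+p_i(e^t-1)\bigr)\leq e^{\mu(e^t-1)}$, using $1+x\leq e^x$. Markov's inequality gives $\Pr(X\geq(1+\delta)\mu)\leq e^{\mu(e^t-1)-t(1+\delta)\mu}$; optimizing with $t=\ln(1+\delta)$ yields the exponent $-\mu\bigl((1+\delta)\ln(1+\delta)-\delta\bigr)$, and the elementary inequality $(1+\delta)\ln(1+\delta)-\delta\geq \delta^2/(2+\delta)$ for $\delta>0$ gives the stated upper tail. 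For the lower tail, apply the same method to $e^{-tX}$ with $t=-\ln(1-\delta)$ to get exponent $-\mu\bigl((1-\delta)\ln(1-\delta)+\delta\bigr)$, and use $(1-\delta)\ln(1-\delta)+\delta\geq \delta^2/2$ for $0<\delta<1$. None of these steps appear in your proposal, so as it stands there is a complete gap between what you argue and what the lemma asserts.
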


In Construction \ref{random_con}, we choose $\cF$ randomly and then use $\cF$ to produce the parity-check symbol of the constructed code $\cC$. To show that the length of the code is small, we first bound the size $|\cF|$.
\begin{lemma}
    \label{lem:help1}
    Let $(\cP,\mathcal{L})$ and $\mathcal{F}$ be chosen according to Construction \ref{random_con}. Let $0<p_1,p_2<1$ be two parameters to be determined later.
    Then
    \[\Pr\parenv{|\cF|\geq 2p_1 n}\leq e^{-\frac{p_1n}{3}}.\]
\end{lemma}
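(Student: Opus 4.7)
The plan is to apply the Chernoff bound (Lemma \ref{lem:chern}) directly. First I would observe that the finite affine plane of order $q$ has $|\cL|=n+q$ lines in total and $|\cL^{\infty}|=q$ lines of infinite slope, so $|\cL\setminus\cL^{\infty}|=n$. By Step 1 of Construction \ref{random_con}, $|\cF|$ is a sum of $n$ independent $\mathrm{Bernoulli}(p_1)$ indicator variables, one per line in $\cL\setminus\cL^{\infty}$, and therefore $\mu\triangleq\E[|\cF|]=p_1 n$.

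Then I would invoke the upper-tail Chernoff bound from Lemma \ref{lem:chern} with $\delta=1$, which gives
\[
\Pr\bigl(|\cF|\geq 2p_1 n\bigr)=\Pr\bigl(|\cF|\geq (1+1)\mu\bigr)\leq e^{-\frac{1^2\cdot\mu}{2+1}}=e^{-p_1 n/3},
\]
which is exactly the claimed inequality.

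There is essentially no obstacle: the whole content of the lemma is the identification of $|\cF|$ as a sum of $n$ i.i.d.\ Bernoulli$(p_1)$ random variables (which is immediate from the construction) together with a single application of the standard Chernoff bound already quoted as Lemma \ref{lem:chern}. The only minor point worth noting explicitly in the write-up is the count $|\cL\setminus\cL^{\infty}|=n$, so that the reader sees at a glance why $\mu=p_1 n$ rather than $p_1(n+q)$.
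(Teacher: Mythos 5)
Your proof is correct and takes exactly the same route as the paper's: identify $|\cF|$ as a Binomial$(n,p_1)$ random variable (a sum of $n$ independent Bernoulli$(p_1)$ indicators, one per line in $\cL\setminus\cL^{\infty}$) and apply the upper-tail Chernoff bound of Lemma~\ref{lem:chern} with $\delta=1$. Your explicit remark that $|\cL\setminus\cL^{\infty}|=n$ (hence $\mu=p_1n$, not $p_1(n+q)$) is a small but welcome clarification that the paper leaves implicit.
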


\begin{IEEEproof}
    Note that every line from $\cL\setminus \cL^{\infty}$ is chosen to $\cF$ independently with probability $p_1$.
    Thus, the size $|\cF|$ is binomially distributed $B(n,p_1)$.
    This implies that $\E\sparenv{|\cF|}=p_1n$.
    Using Chernoff bound with $\delta=1$, we have $\Pr\parenv{|\cF|\geq 2p_1n}\leq e^{-\frac{p_1n}{3}}$.
\end{IEEEproof}

The next lemma bounds the probability that Construction \ref{random_con} fails to generate an $(n,N,k,m)$-BAC. 
\begin{lemma}
    \label{lem:help2}
    Let $k$ and $s$ be integers such that $(ks)^{\frac{3}{2}}\leq \frac{n^{1/4}}{32\ln {n}}$. Consider Construction \ref{random_con} with the parameters $p_1=32\sqrt{\frac{(ks)^3}{q}}\ln{n}$ and $p_2=\frac{1}{2\sqrt{ksq}}$. Denote $A$ as the event that Construction \ref{random_con} fails to generate an $(n,N,k,2\lceil\frac{\sqrt{n}}{s}\rceil)$-BAC. Then
    \[\Pr(A)\leq ne^{-\frac{n^{1/6}}{4}}+e^{-\frac{k\ln {n}}{2}}.\]
\end{lemma}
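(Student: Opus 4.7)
The plan is to split the failure event $A$ into three pieces: a length-overflow event handled by Lemma \ref{lem:help1}, a concentration event contributing $ne^{-n^{1/6}/4}$, and algorithmic failure on the complement contributing $e^{-k\ln n/2}$. For the length, Lemma \ref{lem:help1} applied with our $p_1$ gives $\Pr(|\cF|>2p_1n)\leq e^{-p_1n/3}$, which is exponentially small in $(ks)^{3/2}n^{3/4}\ln n$ and thus negligible; on this event the code length is $N\leq n+2p_1n=n+64(ks)^{3/2}n^{3/4}\ln n$. For the concentration piece, each $L\in \cL\setminus \cL^\infty$ has $|P(L)|\sim B(q,p_2)$ with mean $p_2q=n^{1/4}/(2\sqrt{ks})$; by Lemma \ref{lem:chern} with $\delta=1$ and a union bound over the at most $n$ lines, $\Pr(B)\leq ne^{-p_2q/3}$, where $B:=\{\exists L:|P(L)|>2p_2q\}$. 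The hypothesis $(ks)^{3/2}\leq n^{1/4}/(32\ln n)$ yields $p_2q/3\geq n^{1/6}/4$, hence $\Pr(B)\leq ne^{-n^{1/6}/4}$.

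On $B^c$ I show the greedy algorithm described before the lemma succeeds on every multiset with probability at least $1-e^{-k\ln n/2}$. By equation (\ref{eq1_random_con}) and $x_{i_{j'}}\in P(L_{j'})$, on $B^c$ each $|R_{j'}|\leq|P(L_{j'})|\leq 2p_2q$, so $T:=\bigcup_{j'<j}R_{j'}\subseteq[m]$ satisfies $|T|\leq 2kp_2q=\sqrt{kq/s}$ deterministically. Among the $q$ finite-slope lines through $x_{i_j}$, at most $k-1$ pass through another distinct requested point and at most $s|T|\leq\sqrt{ksq}$ have slope-index lying in $T\cap[m/2+1,m]$ (each $\cP_\ell$ meets a finite-slope line in exactly $s$ points); the hypothesis also implies $ks\leq q^{1/3}$, leaving at least $q-k-\sqrt{ksq}\geq q/2$ candidate lines. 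For each candidate $L$, the joint event that $L\in \cF$, $x_{i_j}\in P(L)$, and $P(L)\setminus\{x_{i_j}\}$ avoids $\bigcup_{\ell\in T\cap[m/2]}\cP_\ell$ has probability at least $p_1p_2(1-p_2)^{s|T|}\geq p_1p_2/2$, using $p_2s|T|\leq 1/2$.

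Distinct finite-slope lines through $x_{i_j}$ share only $x_{i_j}$, so the Bernoullis controlling candidate success are mutually independent across candidates; moreover, any previously chosen $L_{j'}$ is excluded from the round-$j$ candidate set (either by passing through a distinct requested point, or by having slope-index already in $T$ in the case $x_{i_{j'}}=x_{i_j}$), so the candidate randomness is independent of the previous choices. Consequently $\Pr(\text{round }j\text{ fails})\leq(1-p_1p_2/2)^{q/2}\leq e^{-p_1p_2q/4}=n^{-4ks}$, and summing over at most $n^k$ multisets and $k$ rounds gives $\Pr(A\cap B^c)\leq kn^{k(1-4s)}\leq e^{-k\ln n/2}$ whenever $s\geq 1$.

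The main technical delicacy is the independence claim in the third paragraph. It rests on two observations: two distinct finite-slope lines in an affine plane share at most one point, and the inclusion indicators $Y_{L,x}$ are indexed by the pair $(L,x)$ (not by $x$ alone), so any shared point still gives independent randomness. The repeat case $x_{i_{j'}}=x_{i_j}$ is then handled uniformly by the slope-index exclusion, which always places $L_{j'}$ outside the round-$j$ candidate set.
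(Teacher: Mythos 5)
Your proof is correct and follows essentially the same blueprint as the paper's: decompose the failure event into a high-probability concentration event ($B$, your notation; $C$ in the paper) on line sizes $|P(L)|$, bound its complement by Chernoff plus a union bound to get $ne^{-n^{1/6}/4}$, and then bound the per-round failure probability of the greedy algorithm conditioned on $\overline{B}$, summing over $\leq kn^k$ (multiset, round) pairs to get $e^{-k\ln n/2}$. Two points worth noting about where you diverge. First, a small misreading: the lemma's event $A$ does not involve any constraint on $N$ (the length bound $N\leq n+64(ks)^{3/2}n^{3/4}\ln n$ is only part of Theorem~\ref{Thm: random_cons for BAC} and is handled by Lemma~\ref{lem:help1} separately), so the ``length-overflow piece'' you include in your decomposition is spurious; fortunately you discard it, so the final bound is unaffected. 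Second, and more substantively, the paper restricts to $m/4$ candidate lines $L_{i,1},\dots,L_{i,m/4}$ (one per free slope class $\cL^\ell$ with $\ell+m/2\notin U_2$), forms the Bernoulli sum $\sum\zeta_l$ with expectation $\geq 4k\ln n$, and applies Chernoff to get $e^{-1.9k\ln n}$ per round. You instead use all $\geq q/2$ candidate finite-slope lines through $x_{i_j}$ and a direct $(1-p_1p_2/2)^{q/2}\leq n^{-4ks}$ product bound. Your version needs the observation that distinct lines through $x_{i_j}$ yield mutually independent success indicators even when they share the slope class; this holds because the construction samples $P(L)$ independently per line, so the indicators $\mathbb{1}[x\in P(L)]$ are indexed by $(L,x)$ pairs and two distinct lines through $x_{i_j}$ meet only in $x_{i_j}$. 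The paper's restriction to one line per slope class makes that independence automatic, but your coarser candidate set yields a stronger per-round bound (a harmless slack, since both suffice). Your handling of the repeated-request case ($x_{i_{j'}}=x_{i_j}$, excluded via the slope index of $L_{j'}$ entering $U_2\subseteq T$) matches the paper's use of the $U_2$ set. Both proofs share the same glossed-over conditioning subtlety (that conditioning on $D_2(I)\cap\overline{C}$ does not constrain the fresh randomness on the round-$j$ candidate lines beyond fixing the candidate set), which you flag as the ``main technical delicacy''; your brief argument there is at the same level of rigor as the paper's.
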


\begin{IEEEproof}
    Let $m=2\lceil\frac{q}{s}\rceil$ and we assume w.l.o.g. that $s|q$. Recall that Construction \ref{random_con} fails to generate an $(n,N,k,m)$-BAC if there is a multiset of requests $I$ for which there are no mutually disjoint recovery sets.

    For $i\in [n]$, we define $A_{i,w}$ as the event that there exists a multiset $I=\ms{{i_1},\ldots, {i_k}}$ satisfying $i_w=i$ for some $w\in [k]$, such that the above algorithm finds mutually disjoint recovery sets for $x_{i_1},\dots,x_{i_{w-1}}$, but fails to find one desired recovery set for $x_{i_w}=x_i$. Let $C$ denote the event that there is a line $L\in\cF$ such that $|P(L)|>2p_2q$. We have that
    \begin{align}\label{eq2_random_con}
        \Pr(A)&\leq \Pr\parenv{\bigcup_{i\in [n]\atop w\in [k]} A_{i,w}}\nonumber\\
        &\leq \Pr(C)+\Pr\parenv{\bigcup_{i\in [n]\atop w\in [k]} (A_{i,w}\cap \overline{C})}\nonumber\\
        &\leq \Pr(C)+kn \max_{i\in [n]\atop w\in [k]}\Pr(A_{i,w}\cap \overline{C}).
    \end{align}


    We first estimate $\Pr(C)$. For a fixed line $L\in \cL\setminus \cL^{\infty}$, every point in $P(L)$ is chosen independently with probability $p_2$.
    Thus, $|P(L)|$ is distributed binomially $B(q,p_2)$ with $\E[|P(L)|]=p_2q$.
    By Chernoff bound (with $\delta=1$) we obtain that for every line $L\in \cF$,
    \[\Pr\parenv{|P(L)|\geq 2p_2q}\leq e^{-\frac{p_2q}{3}}.\]
    Plugging $p_2$ into $p_2q$ we get $p_2q=\frac{\sqrt{q}}{2\sqrt{ks}}=\frac{n^{1/4}}{2\sqrt{ks}}$.
    Using the assumption that $(ks)^{\frac{3}{2}}\leq \frac{n^{1/4}}{\ln {n}}$ we have $p_2q\geq \frac{n^{1/6}}{4}$.
    By the union bound, this implies that
    \[\Pr(C)\leq  \sum_{L\in \cL\setminus \cL^{\infty}}\Pr\parenv{|P(L)|\geq 2p_2q}\leq ne^{-\frac{n^{1/6}}{4}}.\]


    We now estimate the probability $\Pr(A_{i,w}\cap \overline{C})$. For a multiset $I=\ms{{i_1},\ldots, {i_k}}$ with $i_w=i$,
    let $D_2(I)$ denote the event that the algorithm manages to find a recovery sets for each $x_{i_j}$, $1\leq j<w$, such that these $w-1$ recovery sets are mutually disjoint, and let $D_1(I)$ denote the event that the algorithm fails to find a desired recovery set for $x_i$. From the definition of $A_{i,w}$ and the union bound, we have
    \[\Pr(A_{i,w}\cap \overline{C})\leq \sum_{I=\ms{{i_1},\ldots, {i_k}} \atop i_w=i}\Pr(D_1(I)\cap D_2(I)\cap \overline{C}).\]
    Note that the number of all possible multisets $I=\ms{{i_1},\ldots, {i_k}}$ satisfying $i_w=i$ is $n^{k-1}$. Thus, we have
    \begin{align}\label{eq22a}
    \Pr(A_{i,w}\cap \overline{C})&< n^{k-1}  \max_{I=\ms{{i_1},\ldots, {i_k}} \atop i_w=i}\Pr(D_1(I)\cap D_2(I)\cap \overline{C})\nonumber\\
    &=n^{k-1}  \max_{I=\ms{{i_1},\ldots, {i_k}} \atop i_w=i}\Pr(D_1(I) | D_2(I)\cap \overline{C}) \Pr(D_2(I)\cap \overline{C})\nonumber\\
    &\leq n^{k-1}  \max_{I=\ms{{i_1},\ldots, {i_k}} \atop i_w=i}\Pr(D_1(I) | D_2(I)\cap \overline{C}).
    \end{align}


    We are now left to estimate $\Pr(D_1(I) | D_2(I)\cap \overline{C})$. By the definition of $C$, conditioned on $\overline{C}$, for every $L\in {\cF}$, $|P(L)|< 2p_2q$. Moreover, conditioned on $D_2(I)$, the algorithm has found $w-1$ mutually disjoint recovery sets $R_{1},\ldots,R_{w-1}$ for $x_{i_1},\ldots,x_{i_{w-1}}$. For every $1\leq j\leq w-1$, denote $L_{j}$ as the line in $\cF$ using to construct the recovery set $R_{j}$ for $x_{i_j}$ in the algorithm. Then, $|P(L_{j})|<2p_2q$. Since $\cP_1,\ldots,\cP_{\frac{m}{2}}$ form a partition of $\cP$, we have
    $$\abs{\{u\in [\frac{m}{2}]: \cP_u\cap P(L_j)\neq \varnothing\}}< 2p_2q.$$
    Denote $U_1=\bigcup_{1\leq j<w}(R_{j}\cap [\frac{m}{2}])$ and $U_2=\bigcup_{1\leq j<w}(R_{j}\cap [\frac{m}{2}+1,m])$. By the definition of $R_{j}$ in (\ref{eq1_random_con}), we have
    \begin{align*}
        U_1&=\bigcup_{j\in [w-1]}\{u\in [\frac{m}{2}]:~\cP_u\cap\left(P(L_{j})\setminus \{x_{i_j}\}\right) \neq \varnothing\},\\
        U_2&=\bigcup_{j\in [w-1]}\{u+\frac{m}{2}: u\in [\frac{m}{2}]~\mathrm{s.t.}~L_{j}\in \cL^{u}\}.
    \end{align*}
    Thus, we have
    \begin{align}\label{eq22b}
    |U_1|&\leq \abs{\bigcup_{j\in[w-1]}\{u\in [\frac{m}{2}]: \cP_u\cap P(L_j)\neq \varnothing\}}< 2kp_2q.
    \end{align}
    Moreover, since $\cL^{1},\ldots,\cL^{\frac{m}{2}}$ is a partition of $\cL\setminus \cL^{\infty}$, for each $j\in[w-1]$, there is a unique $u\in [\frac{m}{2}]$ such that $L_{j}\in \cL^{u}$. Namely, $|\{u\in [\frac{m}{2}]:~L_{j}\in \cL^{u}\}|=1$ for each $j\in[w-1]$. Thus, we also have
    \begin{align}\label{eq22b2}
    |U_2|&\leq \sum_{j=1}^{w-1}|\{u+\frac{m}{2}:~u\in [\frac{m}{2}]~\mathrm{s.t.}~L_{j}\in \cL^{u}\}|< k.
    \end{align}
    For each $l\in [\frac{m}{2}]$, let $L_{i,l}$ be a line in $\mathcal{L}^{l}$ passing through $x_i$. Since $\cP_{u}$ consists of $s$ different lines with infinite slope, we have $|L_{i,l}\cap \cP_{u}|=s$ for every $l\in [\frac{m}{2}]$ and $u\in [\frac{m}{2}]$.
    Thus, by (\ref{eq22b}), we have
    \begin{equation}\label{eq22b3}
        \abs{L_{i,l}\cap \left(\bigcup_{u\in U_1} \cP_{u}\right)}< 2ksp_2q.
    \end{equation}
    Moreover, by (\ref{eq22b2}), we also have $\abs{\{l\in [\frac{m}{2}]: l+\frac{m}{2}\in U_2\}}<k$.
    Since $k\leq \frac{n^{1/6}}{s}$, this implies that
    $$\abs{\{l\in [\frac{m}{2}]: \cL^{l}\cap \{L_1,\ldots,L_{w-1}\}=\varnothing\}}>\frac{m}{2}-k\geq \lceil\frac{m}{4}\rceil.$$

    We assume w.l.o.g. that $4\mid m$ and $[\frac{m}{4}]\subseteq \{l\in [\frac{m}{2}]: \cL^{l}\cap \{L_1,\ldots,L_{w-1}\}=\varnothing\}$. Then, 
    we have
    \begin{equation}\label{eq22b4}
        [\frac{m}{2}+1,\frac{3m}{4}]\cap U_2=\varnothing.
    \end{equation}
    Let $\zeta_{1},\ldots,\zeta_{\frac{m}{4}}$ be binary random variables such that $\zeta_{l}=1$ if and only if $L_{i,l}\in \cF$, $x_i\in P(L_{i,l})$, and
    $P(L_{i,l})\cap \parenv{\bigcup_{u\in U_1} \cP_{u}}=\varnothing$.
    Notice that $\zeta_l$'s are independent random variables since each line is chosen independently, and each point from a chosen line is also chosen independently.
    Moreover, we have that
    \[\Pr(\zeta_l=1)\geq p_1p_2(1-p_2)^{2ksp_2q}.\]
    Indeed, $p_1$ is the probability that $L_{i,l}$ is selected to $\cF$, $p_2$ is the probability that $x_i$ is to $P(L_{i,l})$, and since (\ref{eq22b3}) shows that $L_{i,l}$ intersects $\bigcup_{u\in U_1} \cP_{u}$ with less then $2ksp_2q$ points, $(1-p_2)^{2ksp_2q}$ is the probability that no points from $L_{i,l}\cap \parenv{\bigcup_{u\in U_1} \cP_{u}}$ were selected.


    If $\sum_{l=1}^{\frac{m}{4}}\zeta_l\geq 1$, we know that there is an $l_0\in [\frac{m}{4}]$ such that $L_{i,l_0}\in \cF$, $x_i\in P(L_{i,l_0})$, and
    $P(L_{i,l_0})\cap \parenv{\bigcup_{u\in U_1} \cP_{u}}=\varnothing$. Then, we define
    \begin{align*}
        R_{i}=&\{u\in [\frac{m}{2}]:~\left(P(L_{i,l_0})\setminus \{x_{i}\}\right)\cap \cP^{u}\neq \varnothing\} \cup \{l_0+\frac{m}{2}\} 
    \end{align*}
    Note that symbols in $P(L_{i,l_0})\setminus\{x_{i}\}$ are contained in $(\bbc_{u}:~u\in R_i\cap [\frac{m}{2}])$ and $y_{L_{i,l_0}}$ is contained in $\bbc_{l_0+\frac{m}{2}}$.
    Thus, by $x_{i}=y_{L_{i,l_0}}-\sum_{x\in P(L_{i,l_0})\setminus\{x_{i}\}}x$, $R_i$ is a recovery set for $x_{i}$. On the other hand, by $P(L_{i,l_0})\cap \parenv{\bigcup_{u\in U_1} \cP_{u}}=\varnothing$, we have $R_i\cap U_1=\varnothing$ and by (\ref{eq22b4}), we also have $R_i\cap U_2=\varnothing$. These imply that
    $$R_i\cap (U_1\cup U_2)=R_i\cap\parenv{\bigcup_{1\leq j<w}R_{j}}=\varnothing.$$
    Therefore, if $\sum_{l=1}^{\frac{m}{4}}\zeta_l\geq 1$ then the construction of the desired recovery set for $x_i$ succeeds.

    Thus,
    \[\Pr(D_1(I) ~|~ D_2(I)\cap \overline{C})\leq \Pr(\sum_{l=1}^{\frac{m}{4}} \zeta_l <1).\]
    From Chernoff bound (Lemma \ref{lem:chern}) we obtain
    \begin{align}\label{eq23}
        \Pr(\sum_{l=1}^{\frac{m}{4}} \zeta_l <1)&\leq e^{-\frac{\sum_{l=1}^{\frac{m}{4}}\Pr(\zeta_l=1)}{2}\cdot\delta^2}\nonumber\\
        &\overset{(a)}{\leq} e^{-\frac{mp_1p_2(1-p_2)^{2ksp_2q}}{8}\cdot\delta^2},
    \end{align}
    where $\delta$ satisfies $(1-\delta)\sum_{l=1}^{\frac{m}{4}}\Pr(\zeta_l=1)=1$ and $(a)$ follows since $\Pr(\zeta_l=1)\geq p_1p_2(1-p_2)^{2ksp_2q}$.
    Using the inequality $(1-p)^x\geq 1-px$ which holds for $0\leq p\leq 1$ and $x\geq 0$, we obtain
    \[(1-p_2)^{2ksp_2q}\geq 1-2ksp_2^{2}q.\]
    Plugging in $p_2=\frac{1}{2\sqrt{ksq}}$ we obtain $1-2ksp_2^{2}q= \frac{1}{2}$.
    Plugging in $p_1=32\sqrt{\frac{(ks)^{3}}{q}}\ln{n}$, by $m=\frac{2q}{s}$, we have
    \[mp_1p_2(1-p_2)^{2ksp_2q}\geq \frac{qp_1p_2}{s}\geq 16k\ln{n}.\]
    Thus, for $q$ large enough (i.e., $n$ large enough), we obtain
    \begin{align*}
        \delta=1-\frac{1}{\sum_{l=1}^{\frac{m}{4}}\Pr(\zeta_l=1)}&\geq 1-\frac{4}{mp_1p_2(1-p_2)^{2ksp_2q}}\\
        &\geq 1-\frac{1}{4k\ln {n}}\geq 0.99.
    \end{align*}
    Putting everything together we obtain
    \[\Pr(D_1(I) ~|~ D_2(I)\cap C_j)\leq e^{-(2k\ln {n})\cdot\delta^2}\leq e^{-1.9k\ln {n}}.\]


    Substituting this upper bound in (\ref{eq22a}), we obtain
    \begin{align*}
        kn\cdot\max_{i\in [n]\atop w\in [k]}\Pr(A_{i,w}\cap \overline{C})& \leq kn^{k}\cdot e^{-1.9k\ln {n}}\leq e^{-\frac{k\ln{n}}{2}}.
    \end{align*}
    Putting everything together, we get
    $$\Pr(A)\leq ne^{-\frac{n^{1/6}}{4}}+e^{-\frac{k\ln {n}}{2}},$$
    which proves the lemma.
\end{IEEEproof}

The proof of Theorem \ref{Thm: random_cons for BAC} now follows.
\begin{IEEEproof}[Proof of Theorem \ref{Thm: random_cons for BAC}]
We notice first that Construction \ref{random_con} fails if the constructed code is not an $(n,N,k,2\lceil\frac{\sqrt{n}}{s}\rceil)$-BAC with $N\leq n+64(ks\sqrt{n})^{\frac{3}{2}}\ln{n}$.
Let us denote by $A$ the event that Construction \ref{random_con} fails to construct an $(n,N,k,2\lceil\frac{\sqrt{n}}{s}\rceil)$-BAC, and by $B$ the event that the $|\cF|=N-n>64(ks\sqrt{n})^{\frac{3}{2}}\ln{n}$.
Plugging in $p_1=32\sqrt{\frac{(ks)^3}{q}}\ln{n}$ yields
$2p_1n=64(ks\sqrt{n})^{\frac{3}{2}}\ln{n}$. The event $B$ is the event that $|\cF|\geq 2p_1n$.
Thus, the probability that Construction \ref{random_con} fails is given by
\[\Pr(A\cup B)\leq \Pr(A)+\Pr(B).\]

From Lemma \ref{lem:help1}, $\Pr(B)\leq e^{-\frac{p_1n}{3}}$.  Plugging in $p_1=32\sqrt{\frac{(ks)^3}{q}}\ln{n}$ yields $\Pr(B)\leq e^{-n^{3/4}}$. Together with Lemma \ref{lem:help2} we obtain
\[\Pr(A\cup B)\leq e^{-n^{3/4}}+ne^{-\frac{n^{1/6}}{4}}+e^{-\frac{k\ln {n}}{2}}.\]
Since for large enough $n$, this probability is strictly less than $1$, such a code exists and the theorem follows.
\end{IEEEproof}

\bibliographystyle{IEEEtran}
\bibliography{biblio}

\end{document}